\def\BibTeX{{\rm B\kern-.05em{\sc i\kern-.025em b}\kern-.08em
    T\kern-.1667em\lower.7ex\hbox{E}\kern-.125emX}}
\declaretheorem[name=Theorem, style=theorem]{Theorem}
\declaretheorem[name=Lemma, style=theorem]{Lemma}
\providecommand{\bigsqcap}{%
  \mathop{%
    \mathpalette\@updown\bigsqcup
  }%
}
\newcommand*{\@updown}[2]{%
  \rotatebox[origin=c]{180}{$\m@th#1#2$}%
}
\newcommand{\LabelPrefix}{prefix}
\newcommand{\appendixsection}[1]{\section{#1}}
\newcommand{\Nat}{\mathbb{N}}
\newcommand{\Q}{\mathbb{Q}}
\newcommand{\ME}{\text{ME}}
\newcommand{\MEF}{\text{MEF}}
\newcommand{\project}{\operatorname{\downarrow}}
\renewcommand{\L}{\mathcal{L}}
\newcommand{\Pow}{\mathcal P}
\renewcommand{\iff}{\Leftrightarrow}
\renewcommand{\implies}{\Rightarrow}
\newcommand{\canFlowTo}{\sqsubseteq}
\newcommand{\lub}{\sqcup}
\newcommand{\Alice}{\text{Alice}}
\newcommand{\Bob}{\text{Bob}}
\newcommand{\Charlie}{\text{Charlie}}
\newcommand{\Dave}{\text{Dave}}
\newcommand{\LOW}{\texttt{L}}
\newcommand{\MED}{\texttt{M}}
\newcommand{\HIGH}{\texttt{H}}
\newcommand{\nameify}[1]{\textit{#1}}
\newcommand{\defAs}{\ensuremath{\triangleq}}
\newcommand{\CS}[1]{\textit{CS}_{#1}}
\newcommand{\Oh}{\textit{O}}
\newcommand{\maxi}{\textit{max}}
\DeclareMathOperator{\vsum}{\uparrow}
\DeclareMathOperator{\hsum}{\diamond}
\newcommand{\badSum}{\nameify{badSum}}
\newcommand{\goodSum}{\nameify{goodSum}}
\newcommand{\pairwise}{\nameify{pairwise}}
\newcommand{\specify}{\nameify{specify}}
\newcommand{\unspecify}{\nameify{unspecify}}
\newcommand{\truncate}{\nameify{truncate}}
\newcommand{\embed}{\nameify{embed}}
\newcommand{\secure}{\nameify{secure}}
\newcommand{\insecure}{\nameify{insecure}}
\newcommand{\badListSum}{\nameify{badListSum}}
\newcommand{\goodListSum}{\nameify{goodListSum}}
\newcommand{\discrete}{\mathcal{D}}
\newcommand{\Fac}{\textit{Fac}}
\newcommand{\facet}[3]{\langle\ #1\ ?\ #2\ :\ #3\ \rangle}
\begin{document}



\title{Multi-Execution Lattices Fast and Slow
}


\begin{abstract}
Methods for automatically, soundly, and precisely guaranteeing the
noninterference security policy are predominantly based on multi-execution.
All other methods are either based on undecidable theorem proving or
suffer from false alarms.
The multi-execution mechanisms, meanwhile, work by isolating security levels
during program execution and running multiple copies of the target program,
once for each security level with carefully tailored inputs that ensure both
soundness and precision.
When security levels are hierarchically organised in a lattice, this may lead
to an exponential number of executions of the target program as the number
of possible ways of combining security levels grows.
In this paper we study how the lattice structure for security levels
influences the runtime overhead of multi-execution.
We additionally show how to use Galois connections to gain speedups in
multi-execution by switching from lattices with high overhead to lattices with
low overhead.
Additionally, we give an empirical evaluation that corroborates our analysis
and shows how Galois connections have potential to speed up multi-execution.
\end{abstract}

\author{Maximilian Algehed}
\affiliation{%
   \institution{Chalmers}
   \city{Gothenburg}
   \country{Sweden}}
\email{m.algehed@gmail.com}
\author{Cormac Flanagan}
\affiliation{%
   \institution{UC Santa Cruz}
   \city{Santa Cruz}
   \state{CA}
   \country{USA}}
\email{cormac@ucsc.edu}

\maketitle

\section{Introduction}
\label{\LabelPrefix:sec:intro}

Language-based information flow control (IFC) is a set of techniques for
controlling the way that information flows inside a program
\cite{sabelfeld2003language}.
The techniques in this field are generally aimed at ensuring the
\emph{noninterference} security policy \cite{goguen1982security,Denning}: a
program $p$ is \emph{noninterfering} if its secret inputs cannot influence its
public outputs.
Traditional enforcement mechanisms for IFC, whether static \cite{MAC, JIF,
FlowCaml} or dynamic
\cite{InlineMonitors,LIO,hedin2014jsflow,austin2010permissive}, suffer either
from high false-alarm rates \cite{King, staicu2019empirical} or undecidability
\cite{barthe2011relational,barthe2011secure}.

To address this issue, IFC researchers have introduced a family of enforcement
mechanisms collectively known as \emph{multi-execution} \cite{SME, MF, FSME,
OGMF, jaskelioff2011secure}.
Multi-execution guarantees \emph{transparency}: if $E$ is a \emph{transparent}
enforcement mechanism and $p$ is a noninterfering program, then $E[p](x) = p(x)$
for all $x$.
Consequently, multi-execution cannot introduce any false alarms.

How it works is in the name, programs are executed multiple times to ensure
that output at each security level can only depend on input visible to that
level.
A consequence of this strategy is that in the worst case, multi-execution
introduces unmanageable performance overhead \cite{OptimisingFSME,
MultiExecutionBounds, FSME, MF}.
In fact, under certain assumptions, it is impossible to construct a transparent
enforcement mechanism that introduces less than exponential worst-case overhead
in execution time on secure programs \cite{MultiExecutionBounds}.

To be secure and transparent, multi-execution has to respect every way a
program combines data of different security levels \cite{MultiExecutionBounds}.
If the program $p$ combines data sensitive to Alice, Bob, and Charlie,
multi-execution is forced to run $p$ eight times, once for every subset of
$\{\Alice, \Bob, \Charlie\}$.
However, if $p$ only combines public and secret data, multi-execution only
needs to run $p$ twice, once for secret and once for public data, as the
combination of public and secret data is secret data.

In general, we consider the case where the security levels are drawn from a
lattice \cite{Denning}.
As demonstrated above, the shape of the lattice matters.
A powerset lattice over principals (like Alice, Bob, and Charlie)
introduces an exponential number of combinations of security levels, but a
total order (like the two-point lattice with levels for public and secret
data) doesn't suffer from this issue.
This relationship is the core of what this paper is about.

There are a number of practical takeaways from this insight.
Applications that do not require the combinatorial power of powerset
lattices do well under multi-execution.
For example, Alpernas et al. \cite{alpernas2018secure} use three lattices in
their case studies of IFC in a serverless setting that are all ``wide and
short''.
In their lattices, each user has a security label, but the applications do not
distinguish between arbitrary combinations of users, and the lattices have a
flat structure.
For example, their \texttt{gg} lattice has least and greatest elements $\bot$
and $\top$ and a set of incomparable elements $\{\Alice, \Bob, \ldots\}$ where
the combination of any two such elements is $\top$.
At most $n+2$ executions need to be performed when multi-executing in this
lattice, one execution for each principal that appears in the input, one for
$\bot$, and one for $\top$.
This demonstrates that some applications admit lattices that accommodate the
security requirements of the application while introducing low overhead.

The primary goal of this paper, then, is to answer the following question:
Given some lattice $\L$ and input elements that are associated with $N$
distinct labels in $\L$, what is the time overhead that multi-execution imposes
with respect to $N$?

To answer this question, we describe a new perspective on multi-execution that
comes in two parts.
Firstly, we provide the tools necessary to analyse security lattices to
quantify how much overhead they introduce in multi-execution.
Secondly, we show how to pair the fact that some lattices introduce less
overhead than others with the insight that Galois connections give rise to
natural translations between security lattices in multi-execution.

Concretely, we make the following contributions:
\begin{enumerate}
  \item We characterise the connection between the choice of security lattice
    and the worst-case runtime of black-box multi-execution enforcement
    (Sections \ref{sec:greatAndSmall} and \ref{sec:fastAndSlow}).
  \item We present a theory for computing bounds on multi-execution
    overhead for different lattices (Section \ref{sec:greatAndSmall}).
  \item We show how Galois connections reduce the overhead of multi-execution
    by executing in one lattice while observing the results in another (Section
    \ref{sec:lookingGlass}).
  \item We give a method for specifying optimal Galois connections for
    multi-execution (Section \ref{sec:lookingGlass}).
  \item We present a Haskell implementation of our techniques and empirically evaluate
    our predictions (Section \ref{sec:empirical}).
\end{enumerate}

\section{Review of the Multi-Execution Framework}
\label{\LabelPrefix:sec:prelim} 

A (join semi-)lattice $\L$ is a set $\L$ with a transitive, reflexive,
and antisymmetric order $\canFlowTo$ that has a least element $\bot$ and
is such that any two elements $\ell, \jmath \in \L$ have a least
upper bound $\ell \lub \jmath \in \L$.
For a finite subset $S \subseteq \L$ we write $\bigsqcup S$ for the least upper bound
of all elements in $S$.
For example, the \emph{two-point} lattice has $\L = \{ \LOW, \HIGH \}$, $\LOW$
denotes public information and $\HIGH$ denotes secret information.
Public information can flow to secret information so $\canFlowTo$ is the
smallest reflexive relation such that $\LOW \canFlowTo \HIGH$.
Finally, this means that $\LOW \lub \LOW = \LOW$ and $\ell \lub \jmath = \HIGH$ if
either $\ell$ or $\jmath$ is $\HIGH$.

Following Algehed and Flanagan \cite{MultiExecutionBounds} we consider
batch-job programs from labeled sets to labeled sets and let $p, q, r$ range
over partial recursive functions from $\Pow(I \times \L)$ to $\Pow(O \times \L)$
for some set of inputs $I$ and outputs $O$.
This is a convenient formalism, as it allows us to succinctly state the core
definitions that allow us to reason about multi-execution.
The following definitions (from \cite{MultiExecutionBounds}) are sufficient to
precisely define Noninterference.

\begin{definition}
  Assume $\ell \in \L$ and $x, y \subseteq V \times \L$ for some $V$, define \emph{the projection
  of $x$ at $\ell$} as (we write the pair $(a, \jmath)$ as $a^\jmath$):
  $$x \project \ell \defAs \{\ a^{\jmath}\ |\ a^{\jmath} \in x, \jmath \canFlowTo \ell \}$$
  We say that $x$ and $y$ are \emph{$\ell$-equivalent}, meaning they look the
  same to an observer at level $\ell$, written $x \sim_\ell y$, if and only if
  their $\ell$-projections are the same:
  $$x \sim_\ell y\ \iff\ x\project\ell = y\project\ell$$
\end{definition}

The projection $x\project\ell$ of $x$ at $\ell$ is precisely all the
information in $x$ that is \emph{visible to} $\ell$.
Likewise, this means that if two sets $x$ and $y$ look the same to $\ell$, then
they are $\ell$-equivalent.
The definition of noninterference meanwhile is that $p$ is noninterfering if it
does not reveal more about its inputs than what one can know by looking at the
input.
In other words, if two inputs $x$ and $y$ differ only in values that are secret
to an observer at level $\ell$, they are $\ell$-equivalent, then $p(x)$ and
$p(y)$ should also be $\ell$-equivalent.

\begin{definition}[Noninterference]
  We say that program $p : \Pow(I\times\L) \to \Pow(O\times\L)$ is \emph{noninterfering}
  if it preserves $\ell$-equivalence.
  Concretely, $p$ is noninterfering when for all $\ell$, $x$, and $y$ such that
  \begin{enumerate}
    \item $x \sim_\ell y$ and
    \item $p(x)$ and $p(y)$ are both defined,
  \end{enumerate}
  it is the case that $p(x) \sim_\ell p(y).$
\end{definition}

Note that this definition of noninterference is a partial correctness criterion,
it says that $\ell$-equivalence only has to be preserved up to termination,
known as \emph{Termination Insensitive} Noninterference (TINI)
\cite{hedin2012perspective}.
The theory of termination sensitivity in this setting is rich \cite{MultiExecutionBounds}.
However, termination is orthogonal to our development and we omit it here.

\begin{example}
  The program $\secure$ below is noninterfering (we write $|x|$
  for the size of the set $x$):
  \begin{align*}
    \secure(x) &\defAs \{\ |x\project\ell|^\ell\ |\ \ell \in \{\LOW, \HIGH\}\ \}\\
    \intertext{Conversely, the program $\insecure$ is not noninterfering:}
    \insecure(x) &\defAs \{\ |x|^\LOW\ \}
  \end{align*}
\end{example}

Before we dive into more examples of how this framework works we
introduce a core lattice for this paper, the \emph{powerset lattice} $\Pow(A)$
over some set $A$ of \emph{atoms} or \emph{principals}.
A label $\ell$ in $\Pow(A)$ is a subset $\ell \subseteq A$ and labels 
are ordered by set inclusion, $\ell \canFlowTo \ell'$ if and only if $\ell
\subseteq \ell'$.
The least element of $\Pow(A)$ is the empty set and the least
upper bound of two labels $\ell \lub \ell'$ is their union $\ell \cup \ell'$.
We usually write singleton labels, like $\{\Alice\}$,  without the brackets as $\Alice$.
Finally, given $x \in \Pow(V \times \L)$ we define the \emph{labels} of $x$ as:
$$
\L(x) \defAs \{\ \ell\ |\ a^\ell \in x \}
$$

\begin{example}
  \label{ex:runnings}
  We present our running examples. 
  First is the program $\badSum$, that takes the sum of its inputs
  and labels the output with the least upper bound of the labels in the
  input.
  \begin{align*}
    \badSum(x) &\defAs \{ (\sum_{a^\ell \in x} a)^{\bigsqcup \L(x)} \}\\
    \intertext{This program is not noninterfering, as $\emptyset \sim_\bot \{1^\Alice\}$ but:}
    \badSum(\emptyset) &= \{ 0^\bot \} \not\sim_\bot \{ 1^\Alice \} = \badSum(\{1^\Alice\})
    \intertext{The problem is that the definition of noninterference is
    \emph{presence-sensitive}; input at a label is considered sensitive
    information that may not leak.
    To address the problem with $\badSum$ we define $\goodSum_L$, a family of
    programs indexed by a set of labels $L$ that are required to be in the
    input, and which form the levels for which the sum is taken; and consequently $\goodSum_L$
    is noninterfering:}
    \goodSum_L(x) &\defAs \{\ (\sum \{\ a\ |\ a^\ell \in x, \ell \in L\ \})^{\bigsqcup L}\ \}\\
    \intertext{The next noninterfering program combines input data pairwise.}
    \pairwise(x) &\defAs \{\ \textit{max}(a, b)^{\ell \lub \jmath}\ |\  a^\ell \in x, b^\jmath \in x\ \}\\
  \end{align*}
\end{example}

We want enforcement mechanisms for noninterference not to alter the semantics
of programs that are already noninterfering.
This type of enforcement is known as \emph{transparent} IFC enforcement
\cite{zanarini2013precise}, and has been extensively studied in the literature
\cite{SME,MF,jaskelioff2011secure,FSME,OGMF,OptimisingFSME,MultiExecutionBounds,
zanarini2013precise,de2012flowfox,de2014secure,ngo2015runtime,pfeffer2019efficient,
micinski2020abstracting,rafnsson2016secure,bolocsteanu2016asymmetric}.
The common denominator of all these is that they are based on the idea of
multi-execution \cite{SME}.
Figure \ref{fig:sme} (originally appearing in \cite{MultiExecutionBounds})
illustrates multi-execution in the setting of public and secret data.
Multi-execution runs the program $p$ twice to produce $\ME[p]$, once with only
public input (this is called the ``public run'') and once with both public and
private input (this is called the ``private run'').
The final public outputs come from the public run, and the private outputs
from the private run.

This guarantees noninterference; the output in the public run cannot
depend on the secret input.
Similarly, if $p$ is noninterfering then multi-execution preserves its
extensional behaviour.
The secret output of $\ME[p]$ is the same as the secret output of $p$,
and the public output of $\ME[p]$ is the same as the public output of $p$
by virtue of $p$ being noninterfering.

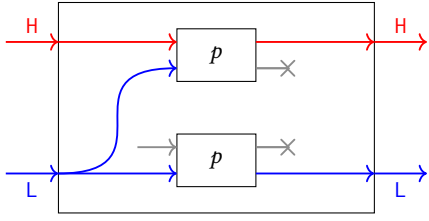
\begin{figure}[t]
  \centering
\begin{tikzpicture}[scale = 0.7]
  \draw (-3, -2) rectangle (3, 2);      

  \draw [line width = 0.25mm, red,  ->] (-4,  1.25) -- node [above] {\HIGH} (-3,  1.25); 
  \draw [line width = 0.25mm, red,  ->] (3,   1.25) -- node [above] {\HIGH} (4,   1.25); 

  \draw [line width = 0.25mm, blue, ->] (-4, -1.25) -- node [below] {\LOW}  (-3, -1.25); 
  \draw [line width = 0.25mm, blue, ->] (3,  -1.25) -- node [below] {\LOW}  (4,  -1.25); 

  \draw (-0.75,  0.5) rectangle (0.75,  1.5) node[pos=0.5]{$p$}; 
  \draw (-0.75, -0.5) rectangle (0.75, -1.5) node[pos=0.5]{$p$}; 

  \draw [line width = 0.25mm, red,  ->] (-3,  1.25) -- (-0.75,  1.25); 
  \draw [line width = 0.25mm, blue, ->] (-3, -1.25) -- (-0.75, -1.25); 

  \draw [line width = 0.25mm, blue, ->] (-3, -1.25) .. controls (-0.75, -1.25) and (-3,  0.75) .. (-0.75, 0.75); 
  \draw [line width = 0.25mm, black!45, ->] (-1.5, -0.75) -- (-0.75, -0.75);

  \draw [line width = 0.25mm, red,  ->] (0.75,  1.25) -- (3,  1.25); 
  \draw [line width = 0.25mm, blue, ->] (0.75, -1.25) -- (3, -1.25); 

  \draw [line width = 0.25mm, black!45, >=angle 90, ->] (0.75, -0.75) -- (1.40, -0.75); 
  \draw [line width = 0.25mm, black!45, >=angle 90, -<] (0.75, -0.75) -- (1.5, -0.75); 
  \draw [line width = 0.25mm, black!45, >=angle 90, ->] (0.75, 0.75)  -- (1.40,  0.75); 
  \draw [line width = 0.25mm, black!45, >=angle 90, -<] (0.75, 0.75)  -- (1.5,  0.75); 
\end{tikzpicture}
  \caption{\normalsize \label{fig:sme} Secure Multi-Execution of the program $p$ for the two-point lattice.}
\end{figure}

To formalise multi-execution we first re-state Algehed and Flanagan's definitions of some
auxiliary functions:
\begin{definition}
  Given a finite subset $S \subseteq \L$ of $\L$ we define the \emph{closure set} of $S$ as:
  \begin{align*}
    C(S) &\defAs \{\ \bigsqcup S'\ |\ S' \subseteq S\ \}\\
  \intertext{This is the set of all combinations of levels in $S$, and as we
    have seen it corresponds to the runs that multi-execution will have to do
    when $S$ are the labels in the input to the program.
    The next notion we define is the \emph{up-set} of $\ell$ in $S$ as:}
    \ell \uparrow S &\defAs \{\ \jmath\ |\ \ell \canFlowTo \jmath,
      \forall \iota \in S.\ \iota \canFlowTo \jmath \implies \iota \canFlowTo \ell \}\\
  \intertext{This is a technical notion that captures all the labels that ``see
    the same view'' of an input.
    It is what allows multi-execution to correctly propagate outputs from the
    target program that are not strictly combinations of the security levels in
    the input.
    Finally, given an $x \subseteq V \times \L$ and an $L \subseteq \L$ we
    define the \emph{selection} of $x$ at $L$:}
    x@L &\defAs \{\ a^\ell\ |\ a^\ell \in x, \ell \in L \}
  \end{align*}
\end{definition}
Next we formalise multi-execution:
\begin{definition}[Multi-Execution \cite{MultiExecutionBounds}]
  $$\MEF[p](x) = \bigcup\{\ p(x\project\ell)@(\ell\uparrow C(\L(x)))\ |\ \ell \in C(\L(x))\ \}$$
\end{definition}

The definition of $\MEF$ is superficially different from the overview in Figure
\ref{fig:sme}.
Specifically, $\MEF[p](x)$ runs $p$ for every $\ell$ in $C(\L(x))$ rather
than $\L$.
This is because the set of all projections $x\project\ell$ for $\ell \in \L$
is equal to the set of all projections $x\project\ell$ for $\ell \in C(\L(x))$.
Consequently, $\MEF[p](x)$ runs $p(x\project\ell)$ for all the $\ell$ necessary
to have every ``view'' of $x$.

The $\uparrow$ construction is responsible for reconstructing \emph{the
outputs} at each level in $\L$.
The intuition for this construction is that $\ell \uparrow C(\L(x)))$ is the
set of levels $\{\jmath_1, \ldots, \jmath_n\}$ such that the execution of
$p(x\project\ell)$ is responsible for computing the output at levels
$\jmath_i$.
Formally, $\ell \uparrow C(\L(x))$ is the set of all $\jmath_i$ such that
$x\project\jmath_i = x\project\ell$.
Thus any output of $p(x\project\ell)$ that is labeled $\jmath_i$ would have been
present in $p(x\project\jmath_i)$ and is therefore ``safe'' to include in the output
of $\MEF[p](x)$.

For example, consider what happens when we run
$$\MEF[\goodSum_L](\{1^\Alice, 2^\Charlie\})$$
for $L = \{\Alice, \Bob\}$.
We have that
$$C(\{\Alice, \Charlie\}) = \{\bot, \Alice, \Charlie, \{\Alice, \Charlie\}\}.$$
This means that we have four runs of $\goodSum$:
\begin{align*}
  &\goodSum_L(\{1^\Alice, 2^\Charlie\}\project\bot)                 = \{ 0^{\{\Alice, \Bob\}} \}\\
  &\goodSum_L(\{1^\Alice, 2^\Charlie\}\project\Alice)               = \{ 1^{\{\Alice, \Bob\}} \}\\
  &\goodSum_L(\{1^\Alice, 2^\Charlie\}\project\Charlie)             = \{ 0^{\{\Alice, \Bob\}} \}\\
  &\goodSum_L(\{1^\Alice, 2^\Charlie\}\project\{\Alice, \Charlie\}) = \{ 1^{\{\Alice, \Bob\}} \}
\end{align*}
To determine the final output of $\MEF[\goodSum_{\{\Alice, \Bob\}}]$, we need
to decide which of these outputs we preserve to the final output.
Figure \ref{fig:up-sets} shows the up-sets of $\bot$, $\Alice$, $\Charlie$, and
$\{\Alice, \Charlie\}$ in the powerset lattice for three principals $\Alice$,
$\Bob$, and $\Charlie$ when $\L(x) = \{\Alice, \Charlie\}$.
We see that $\{\Alice, \Bob\}$ falls in the up-set of $\Alice$, and so we have that
the final result is:
$$\MEF[\goodSum_{\{\Alice, \Bob\}}](\{1^\Alice, 2^\Charlie\}) = \{ 1^{\{\Alice, \Bob\}} \}$$
\begin{figure}
  \center
  \begin{tikzcd}
    [execute at end picture={
      \draw[draw=black, thick, rounded corners, dashed]
        (A.north west) -- (AB.north west) -- (AB.north east) -- (AB.south east) -- (A.south east) -- (A.south west) -- cycle;
      \draw[draw=black, thick, rounded corners, dashed]
        (AC.south west) -- (ABC.north west) -- (ABC.north east) -- (AC.south east) -- cycle;
      \draw[draw=black, thick, rounded corners, dashed]
        (bot.south west) -- (B.north west) -- (B.north east) -- (bot.south east) -- cycle;
      \draw[draw=black, thick, rounded corners, dashed]
        (C.north east) -- (BC.north east) -- (BC.north west) -- (BC.south west) -- (C.south west) -- (C.south east) -- cycle;
    }]
    & & |[alias=ABC]|ABC \arrow[ld, no head] \arrow[d, no head] \arrow[rd, no head] & & \\
    & |[alias=AB]|AB & |[alias=AC]|AC \arrow[lld, no head] \arrow[rrd, no head] & |[alias=BC]|BC \arrow[ld, no head] \arrow[rd, no head] & \\
    |[alias=A]|A \arrow[ru, no head] & & |[alias=B]|B \arrow[lu, no head] & & |[alias=C]|C \\
    & & |[alias=bot]|\bot \arrow[llu, no head] \arrow[rru, no head] \arrow[u, no head] & &
  \end{tikzcd}
  \caption{\label{fig:up-sets} A graphical representation of the $\ell\uparrow C(\{\Alice, \Charlie\})$ sets in $\Pow(\{\Alice, \Bob, \Charlie\})$ for $\ell \in \{\bot, \Alice, \Charlie, \{\Alice, \Charlie\}\}$.}
\end{figure}
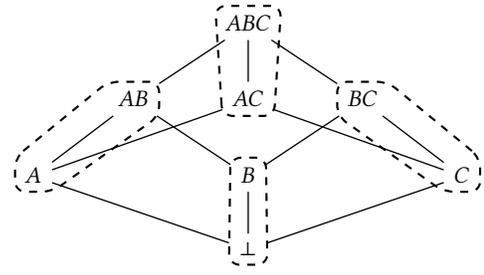

$\MEF$ enjoys both noninterference and transparency \cite{MultiExecutionBounds}.
\begin{Theorem}[Security]
  $\MEF[p]$ is noninterfering.
\end{Theorem}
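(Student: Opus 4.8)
The plan is to unfold the definition of noninterference for $\MEF[p]$. We assume $x \sim_\ell y$ and that $\MEF[p](x)$ and $\MEF[p](y)$ are both defined, and we must show $\MEF[p](x) \sim_\ell \MEF[p](y)$, i.e.\ $\MEF[p](x)\project\ell = \MEF[p](y)\project\ell$. The situation is symmetric in $x$ and $y$ (both the hypothesis $x\sim_\ell y$ and the definedness assumption are), so it suffices to prove one inclusion, say $\MEF[p](x)\project\ell \subseteq \MEF[p](y)\project\ell$. Fix an element $a^\jmath$ of the left-hand side; by the definition of $\project$ we have $\jmath \canFlowTo \ell$, and by the definition of $\MEF$ there is some $\kappa \in C(\L(x))$ with $a^\jmath \in p(x\project\kappa)$ and $\jmath \in \kappa \uparrow C(\L(x))$.

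The first key step is the fact --- noted informally right after the definition of $\MEF$, and following in one line from the definitions of $\project$, $C$, $\uparrow$ together with $\L(x) \subseteq C(\L(x))$ --- that $\jmath \in \kappa \uparrow C(\L(x))$ implies $x\project\kappa = x\project\jmath$. Indeed $x\project\kappa \subseteq x\project\jmath$ is immediate from $\kappa \canFlowTo \jmath$, and for the reverse inclusion every label occurring in $x\project\jmath$ lies in $\L(x) \subseteq C(\L(x))$ and flows to $\jmath$, hence by the defining condition of $\uparrow$ flows to $\kappa$. I then combine this with the hypothesis $x\project\ell = y\project\ell$: since $\jmath \canFlowTo \ell$, for any labeled set $w$ one has $(w\project\ell)\project\jmath = w\project\jmath$, so $x\project\jmath = (x\project\ell)\project\jmath = (y\project\ell)\project\jmath = y\project\jmath$. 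Putting these together gives $x\project\kappa = y\project\jmath$ and therefore $p(x\project\kappa) = p(y\project\jmath)$ (both sides are defined, since $\MEF[p]$ being defined on an input entails each of its component runs is); in particular $a^\jmath \in p(y\project\jmath)$.

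The second step is to exhibit a component run of $\MEF[p](y)$ that produces $a^\jmath$. The natural witness is $\kappa' \defAs \bigsqcup \L(y\project\jmath)$, the join of all labels occurring in $y\project\jmath$. I will check four routine facts: (i) $\kappa' \in C(\L(y))$, because $\L(y\project\jmath) \subseteq \L(y)$; (ii) $\kappa' \canFlowTo \jmath$, because every label of $y\project\jmath$ flows to $\jmath$; (iii) $y\project\kappa' = y\project\jmath$, by an argument identical to the one used above for $x\project\kappa = x\project\jmath$; and (iv) $\jmath \in \kappa' \uparrow C(\L(y))$, because any $\iota \in C(\L(y))$ with $\iota \canFlowTo \jmath$ is a join of labels of $y$, each of which then occurs in $\L(y\project\jmath)$ and hence flows to $\kappa'$, so $\iota \canFlowTo \kappa'$. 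From (i)--(iv) and $a^\jmath \in p(y\project\jmath) = p(y\project\kappa')$ we get $a^\jmath \in p(y\project\kappa')@(\kappa'\uparrow C(\L(y)))$, hence $a^\jmath \in \MEF[p](y)$; and since $\jmath \canFlowTo \ell$, $a^\jmath \in \MEF[p](y)\project\ell$, which completes the inclusion and the proof.

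I expect the only real obstacle to be the bookkeeping in the second step: the index set over which $\MEF$ runs $p$ changes from $C(\L(x))$ to $C(\L(y))$ when we pass from $x$ to $y$ (their label sets may differ even though their $\ell$-projections coincide), so a run on $x$ has no literal counterpart among the runs on $y$. This is precisely why the witness $\kappa'$ must be rebuilt from $\L(y\project\jmath)$ rather than reused from $x$, and checking that $\kappa'$ both reproduces the right projection (iii) and lies in the correct up-set (iv) is where the definition of $\uparrow$ does its work. A minor side issue is making definedness precise --- one should fix the convention (as in the cited work) that $\MEF[p](w)$ is defined exactly when every $p(w\project\kappa)$ for $\kappa \in C(\L(w))$ is --- so that the equality $p(x\project\kappa) = p(y\project\jmath)$ used above is legitimate; this is orthogonal to the security argument itself.
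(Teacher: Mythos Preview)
Your proof is correct and follows essentially the same shape as the paper's argument. (The paper does not prove this theorem directly---it is cited from prior work---but it proves the more general security theorem for $\MEF^{F\dashv G}$ in the appendix, which specializes to $\MEF$ under the identity Galois connection; that is the natural point of comparison.)

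One presentational difference is worth noting. Where you construct a fresh witness $\kappa' \defAs \bigsqcup \L(y\project\jmath)$ for the $y$-side, the paper instead argues that the \emph{same} index $\kappa$ already lies in $C(\L(y))$ and already satisfies $\jmath \in \kappa\uparrow C(\L(y))$, via an auxiliary lemma that $x \sim_\ell y$ implies $C(\L(x)) \sim_\ell C(\L(y))$. In fact your $\kappa'$ coincides with $\kappa$: from $\kappa \in C(\L(x))$ and $\jmath \in \kappa\uparrow C(\L(x))$ one checks $\kappa = \bigsqcup \L(x\project\jmath)$, and since $x\project\jmath = y\project\jmath$ this equals $\bigsqcup \L(y\project\jmath) = \kappa'$. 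So the ``obstacle'' you flag in your final paragraph---that a run on $x$ has no literal counterpart among the runs on $y$---does not actually arise. The paper's route is therefore slightly more direct, while yours is self-contained and avoids appealing to the auxiliary lemma; both are fine.
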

\begin{Theorem}[Transparency]
  \label{thm:MEF-transparent}
  If $p$ is noninterfering, then:
  $$\MEF[p](x) = p(x)$$
\end{Theorem}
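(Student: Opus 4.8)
The plan is to prove the two set inclusions $\MEF[p](x) \subseteq p(x)$ and $p(x) \subseteq \MEF[p](x)$, with noninterference of $p$ as the only real ingredient. The single fact I need about the $\uparrow$-construction is the characterisation already noted above: for $\ell \in C(\L(x))$ we have $\jmath \in \ell \uparrow C(\L(x))$ if and only if $x \project \jmath = x \project \ell$. This unfolds directly from the definitions of $\uparrow$ and $\project$ together with the observation that $\L(x) \subseteq C(\L(x))$ (every label of $x$ is the join of a singleton subset of $\L(x)$), so I would either cite it or dispatch it as a one-line lemma. I also record the elementary facts that $\bigsqcup \L(x) \in C(\L(x))$ and $x \project (\bigsqcup \L(x)) = x$, so $p(x)$ itself is literally one of the runs performed by $\MEF[p](x)$; in particular $\MEF[p](x)$ being defined forces $p(x)$ to be defined, and throughout I work under the assumption that $\MEF[p](x)$ is defined, hence so is every run $p(x\project\ell)$ with $\ell \in C(\L(x))$. (The converse, termination-sensitive, direction — that definedness of $p(x)$ forces definedness of $\MEF[p](x)$ — is exactly the termination concern the paper brackets off, and I would not attempt it.)

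For $\MEF[p](x) \subseteq p(x)$: suppose $a^\jmath \in \MEF[p](x)$, so there is $\ell \in C(\L(x))$ with $a^\jmath \in p(x\project\ell)$ and $\jmath \in \ell \uparrow C(\L(x))$. By the characterisation, $x \project \ell = x \project \jmath$; since every element of $x\project\ell$ therefore already carries a label that flows to $\jmath$, we get $(x\project\ell)\project\jmath = x\project\ell = x\project\jmath$, i.e.\ $x\project\ell \sim_\jmath x$. As $p(x\project\ell)$ and $p(x)$ are both defined, noninterference yields $p(x\project\ell)\project\jmath = p(x)\project\jmath$. Finally $a^\jmath \in p(x\project\ell)$ and $\jmath \canFlowTo \jmath$, so $a^\jmath \in p(x\project\ell)\project\jmath = p(x)\project\jmath \subseteq p(x)$.

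For $p(x) \subseteq \MEF[p](x)$: suppose $a^\jmath \in p(x)$ and set $\ell \defAs \bigsqcup\{\iota \mid \iota \in \L(x),\ \iota \canFlowTo \jmath\}$, the join of the labels of $x$ visible to $\jmath$. Then $\ell \in C(\L(x))$ and $x\project\ell = x\project\jmath$: the inclusion $x\project\ell \subseteq x\project\jmath$ holds because $\ell \canFlowTo \jmath$, and any $a^\iota \in x\project\jmath$ has $\iota \canFlowTo \ell$ by construction of $\ell$. By the characterisation $\jmath \in \ell \uparrow C(\L(x))$, and as before $x\project\ell \sim_\jmath x$, so noninterference gives $p(x\project\ell)\project\jmath = p(x)\project\jmath \ni a^\jmath$; hence $a^\jmath \in p(x\project\ell)$, and combined with $\jmath \in \ell\uparrow C(\L(x))$ this gives $a^\jmath \in p(x\project\ell)@(\ell\uparrow C(\L(x))) \subseteq \MEF[p](x)$.

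The conceptual heart — and the step I expect to be the main obstacle to state cleanly — is the bookkeeping around $\uparrow$: one must see that the family of views $\{x\project\jmath\}_{\jmath \in \L}$ coincides with $\{x\project\ell\}_{\ell \in C(\L(x))}$, that each $\jmath$ is assigned to the (unique) run $\ell$ realising its view, and hence that whatever $p$ emits at level $\jmath$ in the run on $x\project\ell$ is pinned down by noninterference to agree with $p(x)\project\jmath$. Everything else is routine manipulation of projections, joins, and the definedness hypothesis.
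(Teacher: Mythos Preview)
Your proof is correct and follows essentially the same route as the paper. The paper itself cites \cite{MultiExecutionBounds} for Theorem~\ref{thm:MEF-transparent} rather than proving it directly, but it does prove the Galois-generalised version (Theorem~\ref{thm:galois-transparency}), which specialises to Theorem~\ref{thm:MEF-transparent} when $F = G = \mathrm{id}$; that proof hinges on exactly the two ingredients you isolate---the characterisation $\jmath \in \ell\uparrow C(\L(x)) \iff x\project\jmath = x\project\ell$ (stated explicitly in the paper's prose), and the existence, for every $\jmath$, of an $\ell \in C(\L(x))$ with $\jmath \in \ell\uparrow C(\L(x))$ (Lemma~\ref{lem:C-galois-covers-C}, whose identity-case construction $\ell = \bigsqcup(\L(x)\project\jmath)$ is precisely yours). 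Your treatment is more explicit about the two inclusions and about definedness, but the argument is the same.
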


Note that $\MEF[p](x)$ works by ``looping over'' $C(\L(x))$.
In general, $|C(\L(x))|$ can be exponential in $|\L(x)|$ and so the runtime overhead
of $\MEF$ is substantial.
In the worst case with a powerset lattice, $\MEF[p](x)$ is exponentially
slower than $p(x)$, even when $p$ is a noninterfering program.

More generally, the runtime of $\MEF[p](x)$ is bounded by the runtime of $p$
times the size of $C(\L(x))$ (and some auxiliary computations that we return to
in Section \ref{sec:fastAndSlow}).
Consequently, the choice of lattice $\L$ makes a big difference to performance
as it decides the number of executions of $p$.

\begin{figure}
  \centering
  \begin{tikzcd}
    & \top & \\
    \Alice \arrow[ru, no head] & \Bob \arrow[u, no head] & \Charlie \arrow[lu, no head] \\
    & \bot \arrow[lu, no head] \arrow[u, no head] \arrow[ru, no head] &
  \end{tikzcd}
  \caption{\label{fig:discrete} The Discrete Lattice $\discrete(\{\Alice, \Bob, \Charlie\})$}
\end{figure}

\begin{example}
  \label{ex:prelim-lattice}
  Consider the $\pairwise$ program from Example \ref{ex:runnings}.
  In the example, the lattice used in $\pairwise$ is left unspecified, so we consider
  two cases.

  The first lattice we consider is the \emph{discrete} lattice $\discrete(A)$ over a set
  of principals
  $$A = \{ \Alice, \Bob, \Charlie, \ldots \}.$$
  The elements of this lattice is either $\bot$, $\top$, or an element of $A$ and any
  two different elements of $A$ are incomparable: see Figure \ref{fig:discrete}.
  The second lattice we consider is the powerset lattice $\Pow(A)$.

  Naturally, if we consider the input $x = \{1^\Alice, 2^\Bob, 3^\Charlie\}$,
  we get different values for $C(\L(x))$ depending on what lattice the labels
  $\Alice$, $\Bob$, and $\Charlie$ are drawn from.
  %
  %
  In the first lattice, $\discrete(A)$, we have:
  \begin{alignat*}{2}
    C(\L(x)) &= \{\bot, \Alice, \Bob, \Charlie, \top \}\\
    \intertext{While in $\Pow(A)$ we have:}
    C(\L(x)) &= \{\emptyset, \Alice, \Bob, \Charlie,\\
             &\ \ \ \   \{\Alice, \Bob\}, \{\Alice, \Charlie\},\\
             &\ \ \ \   \{\Bob, \Charlie\}, \{\Alice, \Bob, \Charlie\} \}
  \end{alignat*}
  In other words, if $\pairwise$ is implemented with the powerset lattice then
  $\MEF[\pairwise](x)$ runs $\pairwise$ eight times, compared to five when
  $\pairwise$ is implemented with the discrete lattice.
  However, the trade-off is that the powerset lattice allows more fine-grained
  control over security levels; in the discrete lattice all combinations of
  input levels collapse to $\top$, whereas the powerset lattice allows the user
  to see more fine-grained labels for such combined data like $\{\Alice, \Bob\}$.
  In the end of Section \ref{sec:greatAndSmall}, we introduce a family of
  truncated powerset lattices that allows us to fine-tune this trade-off.
\end{example}

Before we dive into how lattice shape influences runtime for multi-execution,
we discuss other possible data representations.
Firstly, we have seen the data representation of Algehed and Flanagan
\cite{MultiExecutionBounds}: inputs and outputs are sets $S \in \Pow(V\times\L)$.
Another possibility is the notion of a faceted value $\Fac$ over some set $V$
\cite{FSME, OptimisingFSME, schmitz2016faceted}:
$$
  f \in \Fac(V) ::= v \in V\ |\ \facet{\ell}{f}{f}
$$
A faceted value is like a decision tree of labels and their meaning can be
given by the selection of a faceted tree at a particular label:
\begin{align*}
  v @ \ell &= v\\
  \facet{\jmath}{f_0}{f_1} @ \ell &=
    \begin{cases}
      f_0 @ \ell\ \ \ \textit{if}\ \jmath \canFlowTo \ell\\
      f_1 @ \ell\ \ \ \textit{otherwise}
    \end{cases}
\end{align*}
With faceted values in mind we can think of a different notion of computation:
$$
p : \text{List}\ \Fac(I) \to \Fac(O)
$$
In this setting, an insecure sum function is:
$$
\badListSum_1(i) = \sum_{f \in i}\ f@\top
$$
It leaks the sum of the most secret view of all its inputs to a public
(non-faceted) output.
Likewise, the following sum function has the same security leak as the
$\badSum$ function above, the output label depends on the presence of labels in
the input:
$$
\badListSum_2(i) = \facet{\bigsqcup \L(i)}{\sum_{f \in i}\ f@\top}{0}
$$
While a good version of the function, that similarly to $\goodSum$ picks a
security-level a-priori, is:
$$
\goodListSum_\ell(i) = \facet{\ell}{\sum_{f \in i}\ f@\ell}{0}
$$

These examples demonstrate that the same kind of functions that one can write
in the $\Pow(I\times\L) \to \Pow(O\times\L)$ setting can be re-created
in the faceted setting.
In fact, for the same reasons one needs to multi-execute for all levels in $C(\L(x))$
in our setting, one needs to multi-execute faceted functions in an analogous
manner \cite{MF, SME, OGMF, OptimisingFSME}.
Consequently, the choice of setting does not decide the overhead of
multi-execution, rather it is still bounded by $|C(\L(x))|$.

\section{Great and Small}
\label{sec:greatAndSmall} 

\begin{table*}
  \setlength{\tabcolsep}{0.5em}
  \renewcommand{\arraystretch}{1.5}
  \caption{\label{tbl:complexity-classes} Complexity Classes for Lattice Constructions}
  \centering
  \begin{tabular}{c|c|c}
    \textbf{Lattice} & \textbf{Complexity} & \textbf{Conditions} \\\hline
    Totally Ordered Naturals $\Nat$           & $\Theta(n)$         & -                   \\\hline
    Discrete Naturals $\discrete(\Nat)$     & $\Theta(n)$         & -                   \\\hline
    Product Lattice $\L_0\times\L_1$ & $\Omega(l_0(\lfloor\frac{n}{2}\rfloor)l_1(\lfloor\frac{n}{2}\rfloor))$
                       $\Oh(u_0(n)u_1(n))$ & $\L_i$ is $\Omega(l_i(n))$ and $\Oh(u_i(n))$. \\\hline
    Vertical sum $\L_0\vsum\L_1$  & $\Theta(f_0(n) + f_1(n))$ & $\L_i$ is $\Theta(f_i(n))$ \\\hline
    Horizontal sum $\L_0\hsum\L_1$  & $\Theta(f_0(n) + f_1(n))$ & $\L_i$ is $\Theta(f_i(n))$ \\\hline
    Exponential Lattice $2^\L$           & $\Theta(2^n)$      &
      $\forall i \not= j.\ \exists \ell_i, \ell_j \in \L.\ \ell_i \not\canFlowTo \ell_j$ \\\hline
    Powerset Lattice $\Pow(A)$       & $\Theta(2^n)$ & $A$ is non-finite \\\hline
    Truncated Powerset Lattice $\Pow_k(A)$     & $\Theta(n^k)$ & $A$ is non-finite \\\hline
    DC Labels (\hspace{-1sp}\cite{DCLabels}) & $\Theta(2^n)$ & The number of principals is non-finite
  \end{tabular}\\
\end{table*}


In this section, we explore how the overhead of multi-execution differs with the
choice of lattice.

\begin{example}
  \label{ex:lattices}
  Following Example \ref{ex:prelim-lattice}, consider the lattice $\discrete(\Nat)$.
  If we take some $S_n \subseteq \discrete(\Nat)$ of size $n$, what is the largest
  we can make $C(S_n)$?
  To answer this, consider some $L \subseteq S_n$, what are the possible values for $\bigsqcup L$?
  It can be only one of three possible things, either $\bigsqcup L = \bot$, $\bigsqcup L = \top$,
  or $\bigsqcup L = i$ for some $i \in \Nat$.
  However, if $\bigsqcup L = i$, then $i \in L$ as the only way to get two elements of $\discrete(\Nat)$
  to join to $i$ is for at least one of them to be $i$ in the first place.
  Consequently, we have that $C(S_n) \subseteq S_n \cup \{ \bot, \top \}$.
  This in turn means that:
  $$
  |C(S_n)| \le |S_n| + 2 = n + 2
  $$
  In other words, $|C(S_n)|$ grows no faster than $n$.

  To contrast, in the lattice $\Pow(\Nat)$ the size of $C(S_n)$ grows more quickly.
  Consider $S_n = \{ \{i\} | i \in [1..n]\}$, the set of singleton sets $\{i\}$
  for $i$ in the interval $1 \le i \le n$.
  The set $S_n$ has size $n$, but the closure set of $S_n$ is much bigger:
  $$C(S_n) = \{ L | L \subseteq [1..n] \}$$
  $C(S_n)$ is the set of all subsets of $[1..n]$ and has size $2^n$.
  In other words, the size of closure sets in $\discrete(\Nat)$ grows linearly
  with the size of the input set, while the size of the closure sets in $\Pow(\Nat)$
  grows exponentially.
\end{example}

We begin to formalise the intuition in Example \ref{ex:lattices} by reminding
the reader of a few notions from complexity theory.
Specifically, Definition \ref{def:bounds} formalizes the key notions of upper and lower bounds.

\begin{definition}[Lower and Upper bounds]
  \label{def:bounds}
  If $f, g : \Nat \to \Nat$ we say that:
  \begin{itemize}
    \item $f(n)$ is $\Oh(g(n))$ ($f(n)$ grows no faster than $g(n)$) if and only if there exists an $N_0 \in \Nat$
    and a $C \in \Q^+$ such that for all $n \ge N_0$ it is the case that
    $f(n) \le g(n)C$.
  \item $f(n)$ is $\Omega(g(n))$ ($f(n)$ grows no slower than $g(n)$) if and only if there exists an
      $N_0 \in \Nat$ and a $C\in \Q^+$ such that for all $n \ge N_0$ it is the
      case that $f(n)
    \ge g(n)C$.
  \item $f(n)$ is $\Theta(g(n))$ ($f(n)$ grows like $g(n)$) if and only if $f(n)$ is $\Oh(g(n))$ and
    $\Omega(g(n))$.
  \end{itemize}
\end{definition}

While standard, Definition \ref{def:bounds} warrants breaking down slightly.
The definition of $f(n)$ being $\Oh(g(n))$ says that there is some point,
$N_0$, after which all $n$ are such that $g(n)$ bigger than or equal to
$f(n)$ up to a constant factor independent of $n$.
Likewise, the definition of $f(n)$ being $\Omega(g(n))$ says that eventually,
as $n \ge N_0$, $f(n)$ is bigger than or equal to $g(n)$ up to a constant factor.
The constant factor provides the generality necessary to allow us to say things
like ``the function $f(n) = 2n^2 + n$ grows like $n^2$'' as it allows us to
formally ignore both the factor $2$ and the addition of $n$.

Next, we translate these bounds from functions to the size of a lattice's
closure sets.
\begin{definition}
  Given a lattice $\L$ define its \emph{closure-size} as:
  $$
  \CS{\L}(n) \defAs \textit{max}\{\ |C(S)|\ |\ S \subseteq\ \L, |S| \le n \}
  $$
\end{definition}

To measure the size of the closures in $\L$, $\CS{\L}(n)$ gives us the size of
the biggest closure set that $\L$ can produce for a set $S \subseteq \L$ of
size at most $n$.
Consequently, $\CS{\L}(n)$ measures the worst-case number of executions of $p$
that $\MEF[p](x)$ does if $|x| = n$ when the given lattice is $\L$.

Next we lift the definition of bounds from functions to lattices to introduce a
convenient terminology for lattices.
\begin{definition}
  The lattice $\L$ is:
  \begin{itemize}
    \item $\Oh(f(n))$ if and only if $\CS{\L}(n)$ is $\Oh(f(n))$.
    \item $\Omega(f(n))$ if and only if $\CS{\L}(n)$ is $\Omega(f(n))$.
    \item $\Theta(f(n))$ if and only if $\CS{\L}(n)$ is $\Theta(f(n))$.
  \end{itemize}
\end{definition}
In Section \ref{sec:fastAndSlow} we will see how these bounds on lattices
translate to worst-case time complexity for multi-execution.
Specifically:
\begin{itemize}
  \item $\Oh(f(n))$ translates to an upper-bound on the worst-case overhead
    of multi-execution, and
  \item $\Omega(f(n))$ translates to a lower-bound on the worst-case overhead, and
  \item $\Theta(f(n))$ gives a tight bound on worst-case overhead.
\end{itemize}

Next we re-visit Examples \ref{ex:prelim-lattice} and \ref{ex:lattices} using
our new terminology.
\begin{example}
  The discrete lattice $\discrete(\Nat)$ is $\Theta(n)$ as we know from Example \ref{ex:lattices}
  that $\CS{\discrete(\Nat)}(n) = n+2$.
  Furthermore, the $\Pow(\Nat)$ lattice is $\Oh(2^n)$ as we know from Example \ref{ex:lattices}
  that $\CS{\Pow(\Nat)}(n) = 2^n$.
  Finally, we note that the linear (total order) lattice $\Nat$ ordered in the
  standard way is $\Theta(n)$, which is demonstrated by the fact that
  $\CS{\Nat}(n) = n+1$ as $C([1\ldots n]) = [0\ldots n]$.
\end{example}

These three bounds have practical implications.
Firstly, both the \texttt{gg} and \texttt{Feature Extraction} lattices
of Alpernas et al. \cite{alpernas2018secure} for describing the security
concerns of multi-user serverless applications are discrete lattices over the
set of users, for which we expect worst-case $\Oh(n)$ overhead.
Secondly, mashup lattices of Magazinius et al.  \cite{magazinius2010lattice}
that allow a website to arbitrarily combine data from third-party domains is a
powerset lattice and we expect worst-case $\Oh(2^n)$ overhead for multi-execution
in their setting.
Finally, the linear lattice $\Nat$ is a generalization of the traditional
``military lattice'' with levels like $\LOW \canFlowTo \MED \canFlowTo \HIGH$
discussed as early as Denning's seminal work introducing lattice-based IFC
\cite{Denning}.

Table \ref{tbl:complexity-classes} summarises the results in this section,
describing the complexity of various lattices, including products
$\times$, two different sum operations $\vsum$ and $\hsum$, exponentiation,
and a few other examples.
For example, if $\L_0$ and $\L_1$ have their closure sets upper bounded
by $u_0(n)$ and $u_1(n)$ respectively, then $\L_0 \times \L_1$ is upper bounded by
$u_0(n)\times u_1(n)$.

Next we provide a set of tools for and examples of how to analyse lattice
shape.
We present a number of basic facts about lattice
shape, and continue to present the analysis that underlies the results in Table
\ref{tbl:complexity-classes}.
Proofs that are not in the body of the paper are found in the appendices.

\begin{restatable}[$\Omega$ families]{Lemma}{lemmaomegafamilies}
  \label{lem:omega-families}
  The lattice $\L$ is $\Omega(f(n))$ if and only if
  there exists a family
  $\{S_n\}_{n\in\Nat}$ of sets such that:
  \begin{enumerate}
    \item $\forall n.\ S_n \subseteq \L$
    \item $\forall n.\ |S_n| \le n$
    \item $|C(S_n)|$ is $\Omega(f(n))$
  \end{enumerate}
\end{restatable}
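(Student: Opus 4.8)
The plan is to unfold the definitions and observe that the equivalence is essentially immediate once one checks that the maximum in the definition of $\CS{\L}(n)$ is actually attained. First I would record this well-definedness fact: for any $S \subseteq \L$, every element of $C(S)$ is the join $\bigsqcup S'$ of some subset $S' \subseteq S$, so $|C(S)| \le 2^{|S|}$. Hence for fixed $n$ the set $\{\,|C(S)| \mid S \subseteq \L,\ |S| \le n\,\}$ is a nonempty set of natural numbers (it contains $|C(\emptyset)| = 1$) bounded above by $2^n$, so it has a genuine maximum, witnessed by some set $S_n$ with $|S_n| \le n$ and $|C(S_n)| = \CS{\L}(n)$.

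For the forward direction, assume $\L$ is $\Omega(f(n))$, i.e. $\CS{\L}(n)$ is $\Omega(f(n))$. Take the witnessing sets $S_n$ from the previous paragraph. Conditions (1) and (2) hold by construction, and condition (3) holds because $|C(S_n)| = \CS{\L}(n)$, which is $\Omega(f(n))$ by hypothesis. This gives the required family with no further work.

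For the backward direction, suppose such a family $\{S_n\}_{n\in\Nat}$ is given. For every $n$, the set $S_n$ is one of the sets over which the maximum defining $\CS{\L}(n)$ ranges (it satisfies $S_n \subseteq \L$ and $|S_n| \le n$), so $\CS{\L}(n) \ge |C(S_n)|$. Unfolding Definition \ref{def:bounds}, from $|C(S_n)|$ being $\Omega(f(n))$ there are $N_0 \in \Nat$ and $C \in \Q^+$ with $|C(S_n)| \ge C\,f(n)$ for all $n \ge N_0$; chaining the two inequalities yields $\CS{\L}(n) \ge C\,f(n)$ for all $n \ge N_0$, so $\CS{\L}(n)$ — and hence $\L$ — is $\Omega(f(n))$.

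The only real obstacle is the well-definedness step at the start: because $\L$ may be infinite, one must be sure the ``$\maxi$'' in the definition of $\CS{\L}$ is an attained maximum rather than a supremum that is merely approached, since the forward direction needs an actual witnessing set $S_n$ for each $n$. The crude bound $|C(S)| \le 2^{|S|}$ settles this. Everything else is just transporting the $\Omega$-bound along the equality $|C(S_n)| = \CS{\L}(n)$ (forward) and along the inequality $\CS{\L}(n) \ge |C(S_n)|$ (backward), both of which are monotone in the relevant sense.
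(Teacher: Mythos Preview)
Your proof is correct and follows essentially the same approach as the paper: use the maximizing witness sets for the forward direction and the inequality $\CS{\L}(n) \ge |C(S_n)|$ for the backward direction. Your extra care in verifying that the maximum defining $\CS{\L}(n)$ is actually attained (via the bound $|C(S)| \le 2^{|S|}$) is a point the paper glosses over, but otherwise the arguments coincide.
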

This lemma gives us the basic building block for proving lower bounds.
An analogous reasoning principle can be established for upper bounds, $\L$ is
$\Oh(f(n))$ if there is a family of sets $L_n$ of size $|L_n| \le f(n)$ such
that $C(S) \subseteq L_n$ for each $S$ of size less than or equal to $n$.
Furthermore, there is a global upper bound on all lattices.
\begin{restatable}[Global Bounds]{Theorem}{theoremglobalbounds}
  \label{thm:global-bounds}
  ~
  \begin{enumerate}
    \item All lattices are $\Oh(2^n)$.
    \item If $\L$ is non-finite, then $\L$ is $\Omega(n)$,
        otherwise it is $\Oh(1)$.
  \end{enumerate}
\end{restatable}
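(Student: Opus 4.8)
The plan is to prove the two parts separately; both are short, so I would dispatch them in sequence.

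For part (1), the key observation is that $C(S)$ is by definition the image of the powerset $\Pow(S)$ under the map $S' \mapsto \bigsqcup S'$. Hence $|C(S)| \le |\Pow(S)| = 2^{|S|}$ for every finite $S \subseteq \L$. Taking the maximum over all $S \subseteq \L$ with $|S| \le n$ gives $\CS{\L}(n) \le 2^n$, so $\CS{\L}(n)$ is $\Oh(2^n)$ (with witnessing constants $C = 1$ and $N_0 = 0$); that is, $\L$ is $\Oh(2^n)$.

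For part (2), I would split on whether $\L$ is finite. If $\L$ is finite, then since a join of elements of $\L$ is again an element of $\L$ we have $C(S) \subseteq \L$ for every $S$, and therefore $\CS{\L}(n) \le |\L|$ for all $n$; as $|\L|$ is a constant independent of $n$, $\CS{\L}(n)$ is $\Oh(1)$. If instead $\L$ is non-finite, I would invoke Lemma \ref{lem:omega-families} with the family $\{S_n\}_{n \in \Nat}$ obtained by letting $S_n$ be any set of $n$ pairwise-distinct elements of $\L$ — possible precisely because $\L$ is infinite. Conditions (1) and (2) of the lemma are immediate since $S_n \subseteq \L$ and $|S_n| = n \le n$. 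For condition (3), note that every $\ell \in S_n$ equals $\bigsqcup\{\ell\}$ with $\{\ell\} \subseteq S_n$, so $S_n \subseteq C(S_n)$, whence $|C(S_n)| \ge |S_n| = n$, and $n$ is trivially $\Omega(n)$. Lemma \ref{lem:omega-families} then yields that $\L$ is $\Omega(n)$.

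There is no deep obstacle here. The only point requiring care is that the family in the non-finite case leans essentially on infiniteness: for a finite $\L$ there is simply no set $S_n$ of size $n$ once $n > |\L|$, which is exactly why the dichotomy in the statement is unavoidable. One should also double-check the trivial-but-load-bearing fact that joins never leave the lattice, so that $C(S) \subseteq \L$ always holds; this is immediate from the definition of a join semilattice.
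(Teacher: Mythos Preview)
Your proof is correct and follows essentially the same approach as the paper: bound $|C(S)|$ by $|\Pow(S)| = 2^{|S|}$ for part (1), and for part (2) split on finiteness, using $C(S) \subseteq \L$ in the finite case and Lemma \ref{lem:omega-families} with an arbitrary family of size-$n$ subsets together with $S_n \subseteq C(S_n)$ in the non-finite case. Your write-up is, if anything, slightly more explicit about the singleton argument giving $S_n \subseteq C(S_n)$.
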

The second item in Theorem \ref{thm:global-bounds} highlights that we treat
all finite lattices the same way, they introduce a constant amount of overhead
in multi-execution.
This is true because our analysis is asymptotic, if the lattice $\L$ is finite
then the maximum number of multi-executions is constant at $|\L|$.
If the lattice is tiny, like the two-point lattice, then treating the overhead
as constant is accurate.
If the lattice is finite but large it is impractical to multi-execute for every
lattice label and the overhead will be dominated by the asymptotic behaviour of
``adaptive'' multi-execution like $\MEF$ or faceted execution.
Finally, we note that potentially unbounded lattices are common in the IFC
literature, e.g. in DC-labels \cite{DCLabels}, the DLM \cite{DLM}, and FLAM
\cite{arden2015flow}.

If the lattice $\L$ is contained in $\L'$, we expect that
$\L'$ is at least as big as $\L$.
To make this formal, we define the notion of a lattice \emph{homomorphism} and
giving us an \emph{embedding}.
\begin{definition}
  A \emph{lattice homomorphism} from $\L$ to $\L'$ is a function
  $h : \L \to \L'$ such that:
  \begin{enumerate}
    \item $h(\bot_\L) = \bot_{\L'}$
    \item $h(\ell \lub \ell') = h(\ell) \lub h(\ell')$
  \end{enumerate}
  If $h$ is injective we say that $h$ is an \emph{embedding} of
  $\L$ in $\L'$ and that $\L$ can be embedded in $\L'$.
\end{definition}
\begin{restatable}[Embedding Complexity]{Theorem}{theoremembedding}
  \label{thm:embedding}
  If $\L$ can be embedded in $\L'$ then:
  \begin{itemize}
    \item If $\L$ is $\Omega(l(n))$ then $\L'$ is $\Omega(l(n))$ and
    \item If $\L'$ is $\Oh(u(n))$ then $\L$ is $\Oh(u(n))$
  \end{itemize}
\end{restatable}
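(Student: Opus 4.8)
The plan is to reduce everything to a single structural fact: a lattice embedding commutes with the closure operation and, being injective, preserves the cardinality of closure sets. Once we have that, both bullets follow quickly --- the first from Lemma~\ref{lem:omega-families}, the second directly from the definition of $\CS{\L'}$.

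First I would establish that any lattice homomorphism $h : \L \to \L'$ satisfies $h(\bigsqcup S) = \bigsqcup_{s \in S} h(s)$ for every finite $S \subseteq \L$, by induction on $|S|$: the base case $S = \emptyset$ is exactly condition $h(\bot_\L) = \bot_{\L'}$ (recalling $\bigsqcup \emptyset = \bot$), and the inductive step is one application of $h(\ell \lub \ell') = h(\ell) \lub h(\ell')$. Next, assuming $h$ is an embedding (hence injective), I would show $h(C(S)) = C(h(S))$ for finite $S \subseteq \L$: every subset of $h(S)$ is of the form $h(S')$ for a unique $S' \subseteq S$, so $C(h(S)) = \{\bigsqcup h(S') \mid S' \subseteq S\} = \{h(\bigsqcup S') \mid S' \subseteq S\} = h(C(S))$. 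Combining with injectivity gives the key identity $|C(S)| = |C(h(S))|$, together with $|h(S)| = |S|$.

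For the first bullet, suppose $\L$ is $\Omega(l(n))$. By Lemma~\ref{lem:omega-families} there is a family $\{S_n\}_{n \in \Nat}$ with $S_n \subseteq \L$, $|S_n| \le n$, and $|C(S_n)|$ being $\Omega(l(n))$. Set $S'_n \defAs h(S_n) \subseteq \L'$; then $|S'_n| = |S_n| \le n$ and $|C(S'_n)| = |C(h(S_n))| = |C(S_n)|$, which is $\Omega(l(n))$. Applying Lemma~\ref{lem:omega-families} in the other direction yields that $\L'$ is $\Omega(l(n))$. For the second bullet, suppose $\L'$ is $\Oh(u(n))$. For any $S \subseteq \L$ with $|S| \le n$ we have $h(S) \subseteq \L'$ with $|h(S)| \le n$, so $|C(S)| = |C(h(S))| \le \CS{\L'}(n)$; taking the max over such $S$ gives $\CS{\L}(n) \le \CS{\L'}(n)$, hence $\CS{\L}(n)$ is $\Oh(u(n))$, i.e. $\L$ is $\Oh(u(n))$.

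I do not expect a serious obstacle here; the only thing to be careful about is the bookkeeping around injectivity --- it is what lets $|C(S)|$ be preserved rather than merely bounded, and it is also what guarantees $|h(S_n)| \le n$ in the $\Omega$ direction. A secondary point worth stating explicitly is that Lemma~\ref{lem:omega-families} is a biconditional, so it may legitimately be used to \emph{extract} a witnessing family from $\L$ and then to \emph{conclude} the bound for $\L'$ from the transported family.
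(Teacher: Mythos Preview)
Your proof is correct and follows essentially the same approach as the paper: establish $h(C(S)) = C(h(S))$ from join-preservation, use injectivity to preserve cardinalities, and derive $\CS{\L}(n) \le \CS{\L'}(n)$. The only cosmetic difference is that the paper deduces both bullets directly from that single inequality, whereas you route the $\Omega$ case through Lemma~\ref{lem:omega-families}; either works.
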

The following example illustrates the usefulness of embeddings
for proving bounds.
\begin{example}
  Free boolean algebras over a set of principals $A$, the set of
  propositional logic formulas with atomic propositions from $A$ ordered by
  implication, form the basis of a number of security lattices in the
  literature, most notably Disjunction Category Labels (DC Labels)
  \cite{DCLabels} and the Flow Limited Authorization Model (FLAM) \cite{FLAM}.
  The powerset lattice can be embedded into any such free boolean algebra
  by the embedding:
  $$\textit{embed}(\{a_0, \ldots, a_n\}) = a_0 \vee \ldots \vee a_1$$
  By Theorems \ref{thm:embedding} and \ref{thm:global-bounds} we now have that the free
  boolean algebra is $\Omega(2^n)$ and $\Oh(2^n)$ respectively, giving us the tight bound of
  $\Theta(2^n)$.
\end{example}
Next, we explore the way that bounds interact with a few methods for
forming lattices from smaller lattices and introduce the \emph{$k$-truncated}
powerset lattice $\Pow_k(A)$.
Specifically, the next three subsections establish results in Table
\ref{tbl:complexity-classes} and the reader is free to skip them on first
reading, while the final subsection is important to understand later examples.

\subsection{Product Lattices}

The first lattice formation method we consider is the product lattice.

\begin{definition}[Product Lattice]
  The lattice $\L_0 \times \L_1$ is called the \emph{product} of lattices $\L_0$ and $\L_1$
  and has pairs $(\ell_0, \ell_1)$ as elements where $\ell_i \in \L_i$ and has
  order:
  $$
  (\ell_0, \ell_1) \canFlowTo (\jmath_0, \jmath_1)\ \iff\ \forall i \in \{0, 1\}.\ \ell_i  \canFlowTo \jmath_i
  $$
\end{definition}

The least-upper-bound of $(\ell_0, \ell_1)$ and $(\jmath_0, \jmath_1)$ is the
pair of least-upper-bounds:
$$
(\ell_0, \ell_1) \lub (\jmath_0, \jmath_1) = (\ell_0 \lub \jmath_0, \ell_1 \lub \jmath_1)
$$

The IFC literature has many examples of product lattices, many of which
simultaneously track both confidentiality and integrity.
For example, an element of the DC-labels lattice \cite{DCLabels} is formed by
taking a pair of CNF formulas over principals; one represents confidentiality
requirements on data and the other integrity requirements.
%

Next we begin to establish bounds for these product lattices.
\begin{restatable}{Theorem}{ohltimesl}
  \label{thm:oh-l-times-l}
  If $\L$ is $\Oh(u(n))$ and $\L'$ is $\Oh(u'(n))$, then
  $\L \times \L'$ is $\Oh(u(n)u'(n))$.
\end{restatable}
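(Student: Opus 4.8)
The plan is to show that the closure set of any size-$n$ subset of $\L \times \L'$ embeds into a product of two closure sets, one from $\L$ and one from $\L'$, each taken over a set of size at most $n$. Let $S \subseteq \L \times \L'$ with $|S| \le n$. Write $\pi_0(S) = \{\ell_0 \mid (\ell_0,\ell_1) \in S\}$ and $\pi_1(S) = \{\ell_1 \mid (\ell_0,\ell_1) \in S\}$ for the two coordinate projections; both have size at most $n$. The key observation is that the least upper bound in the product is computed coordinatewise, so for any $S' \subseteq S$ we have $\bigsqcup S' = (\bigsqcup \pi_0(S'), \bigsqcup \pi_1(S'))$, and hence $\bigsqcup \pi_0(S') \in C(\pi_0(S))$ and $\bigsqcup \pi_1(S') \in C(\pi_1(S))$.

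First I would make that observation precise and use it to exhibit an injection $C(S) \hookrightarrow C(\pi_0(S)) \times C(\pi_1(S))$ sending each $\bigsqcup S'$ to the pair of its coordinates; injectivity is immediate since the map $(\ell_0,\ell_1) \mapsto \ell_0$ and $(\ell_0,\ell_1)\mapsto\ell_1$ jointly determine the element. This gives
$$
|C(S)| \le |C(\pi_0(S))| \cdot |C(\pi_1(S))| \le \CS{\L}(n) \cdot \CS{\L'}(n),
$$
where the second inequality uses $|\pi_i(S)| \le n$ together with the definition of closure-size. Taking the maximum over all $S$ with $|S|\le n$ yields $\CS{\L\times\L'}(n) \le \CS{\L}(n)\,\CS{\L'}(n)$.

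Then I would finish by invoking the hypotheses: $\L$ is $\Oh(u(n))$ means $\CS{\L}(n)$ is $\Oh(u(n))$, and likewise $\CS{\L'}(n)$ is $\Oh(u'(n))$. By Definition~\ref{def:bounds} there are constants $C, C'$ and thresholds $N_0, N_0'$ with $\CS{\L}(n) \le C\,u(n)$ for $n \ge N_0$ and $\CS{\L'}(n) \le C'\,u'(n)$ for $n \ge N_0'$; hence for $n \ge \max(N_0,N_0')$ we get $\CS{\L\times\L'}(n) \le C C'\, u(n)u'(n)$, which is exactly the statement that $\L \times \L'$ is $\Oh(u(n)u'(n))$.

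I do not expect any real obstacle here: the entire argument rests on the coordinatewise computation of joins, and the only point requiring a little care is confirming that the map out of $C(S)$ is well-defined and injective (it lands in the product of the \emph{projections'} closure sets, not in $C(\pi_0(S) \times \pi_1(S))$, which matters because the latter could be larger). The asymptotic bookkeeping at the end is routine constant-chasing against Definition~\ref{def:bounds}.
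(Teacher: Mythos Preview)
Your proposal is correct and follows essentially the same route as the paper's proof: the paper also shows $C(S) \subseteq C(\pi_0(S)) \times C(\pi_1(S))$ via the coordinatewise computation of joins (its Lemma~\ref{lem:lubs-distr}), concludes $\CS{\L\times\L'}(n) \le \CS{\L}(n)\,\CS{\L'}(n)$, and reads off the $\Oh$-bound. Your ``injection'' is literally the inclusion map, so the two arguments are the same up to phrasing.
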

This theorem says that upper bounds multiply in the product lattice,
what about lower bounds?
One might expect that if $\L$ and $\L'$ are $\Theta(l(n))$ and $\Theta(l'(n))$
respectively, then $\L\times\L'$ is $\Theta(l(n)l'(n))$.
However, we know that $\Pow(A)$ is $\Theta(2^n)$ for non-finite $A$, if lower
bounds multiply we would have that $\Pow(A)\times\Pow(A)$ is $\Theta(2^n2^n)$.
But $\Pow(A)\times\Pow(A)$, like all lattices, is $\Oh(2^n)$ and consequently
$2^n2^n$ would also be upper bounded by $\Oh(2^n)$, which it is not.

\begin{example}
  The lattice $\Nat\times\Nat$ is $\Theta(n^2)$.
  That $\Nat\times\Nat$ is $\Oh(n^2)$ follows from Theorem \ref{thm:oh-l-times-l}.
  To see that $\Nat\times\Nat$ is $\Omega(n^2)$ we construct the family:
  $$
  S_n = [0\ldots\lfloor \frac{n}{2} \rfloor-1] \times \{0\} \cup \{0\} \times [0\ldots\lfloor \frac{n}{2}\rfloor-1]
  $$
  To see the construction of $S_n$  visually, see Figure \ref{fig:S10}.
  %

  Clearly, $|S_n| \le n$ and so it remains to show that $C(S_n)$ is $\Omega(n^2)$.
  It suffices to show that $[0\ldots\lfloor\frac{n}{2}\rfloor-1]^2 \subseteq C(S_n)$ as
  $|[0\ldots\lfloor\frac{n}{2}\rfloor-1]^2|$ is $\Omega(n^2)$.
  If $(i, j) \in [0\ldots\lfloor\frac{n}{2}\rfloor-1]^2$ then $(i, j) = (i, 0) \lub (0, j)$ and
  $i, j \in [0\ldots\lfloor\frac{n}{2}\rfloor-1]$.
  Consequently, $(i, j) = \bigsqcup\{(i, 0), (0, j)\}$, which is in $C(S_n)$.
  Giving us that $|C(S_n)|$ is $\Omega(n^2)$ and, by Lemma
  \ref{lem:omega-families}, that $\Nat\times\Nat$ is $\Theta(n^2)$.
\end{example}

\begin{figure}
  \center
  \begin{tikzpicture}[scale=0.5]
      \foreach \x in {0,1,...,4}{
      \foreach \y in {0,1,...,4}{
        \node[draw,circle,inner sep=1pt,fill] at (\x,\y+\x) (point-\x-\y) {};
        \draw[] (\x,{max(\y+\x-1,\x)}) -- (point-\x-\y);
        \draw[] ({max(\x-1,0)},{max(\y+\x-1,0)}) -- (point-\x-\y);
      }
    }
    \node[draw,thick,rounded corners=.2cm,dashed,fit=(point-0-0) (point-0-4)] (box-left) {};
    \node[draw,thick,rounded corners=.2cm,dashed, rotate fit = -45, fit =(point-0-0) (point-4-0)] (box-right) {};
    \node[left=0.1cm of box-left] {$[0\ldots\lfloor \frac{10}{2}\rfloor-1] \times \{0\}$};
    \node[right=0.1cm of box-right] {$\{0\} \times [0\ldots\lfloor \frac{10}{2}\rfloor-1]$};
  \end{tikzpicture}
  \caption{\label{fig:S10} An illustration of $S_{10}$.}
\end{figure}
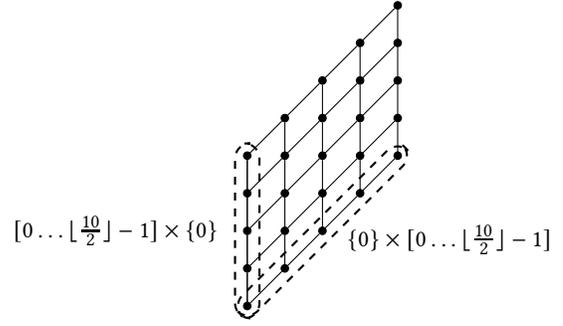

The construction in the example above can be generalised to show that $\Nat^k$ is $\Theta(n^k)$
for any $k$.
The same generalisation allows us to prove the following theorem.

\begin{restatable}{Theorem}{omegaproduct}
  \label{thm:product-lower}
  If $\L$ and $\L'$ are $\Omega(l(n))$ and $\Omega(l'(n))$ for strictly
  positive functions $l(n)$ and $l'(n)$, then $\L\times\L'$ is
  $\Omega(l(\lfloor\frac{n}{2}\rfloor)l'(\lfloor\frac{n}{2}\rfloor))$.
\end{restatable}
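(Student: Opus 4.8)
The plan is to mimic the construction from the $\Nat \times \Nat$ example, but abstractly: since $\L$ is $\Omega(l(n))$ and $\L'$ is $\Omega(l'(n))$, Lemma \ref{lem:omega-families} hands me families $\{S_n\}$ and $\{S'_n\}$ with $S_n \subseteq \L$, $|S_n| \le n$, $|C(S_n)|$ is $\Omega(l(n))$, and symmetrically for $S'_n$. The idea is to build a witnessing family $T_n \subseteq \L \times \L'$ of size at most $n$ whose closure contains a copy of $C(S_{\lfloor n/2\rfloor}) \times C(S'_{\lfloor n/2\rfloor})$, and then apply Lemma \ref{lem:omega-families} in the other direction.

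\medskip
Concretely, I would set
$$
T_n \defAs \big(S_{\lfloor n/2\rfloor} \times \{\bot_{\L'}\}\big)\ \cup\ \big(\{\bot_\L\} \times S'_{\lfloor n/2\rfloor}\big).
$$
Then $|T_n| \le \lfloor n/2\rfloor + \lfloor n/2\rfloor \le n$, establishing conditions (1) and (2). For condition (3) I claim $C(S_{\lfloor n/2\rfloor}) \times C(S'_{\lfloor n/2\rfloor}) \subseteq C(T_n)$. Take any $(\ell, \ell')$ in the left-hand side; by definition of the closure set there are $U \subseteq S_{\lfloor n/2\rfloor}$ with $\bigsqcup U = \ell$ and $U' \subseteq S'_{\lfloor n/2\rfloor}$ with $\bigsqcup U' = \ell'$. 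Consider the subset $(U \times \{\bot_{\L'}\}) \cup (\{\bot_\L\} \times U') \subseteq T_n$. Using that joins in the product lattice are computed componentwise and that $\bot$ is the identity for $\lub$, its least upper bound is $(\bigsqcup U \lub \bot_\L,\ \bot_{\L'} \lub \bigsqcup U') = (\ell, \ell')$. Hence $(\ell,\ell') \in C(T_n)$. (One should note the edge case where $U$ or $U'$ is empty, in which case $\bigsqcup \emptyset = \bot$ and the argument still goes through.)

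\medskip
It follows that $|C(T_n)| \ge |C(S_{\lfloor n/2\rfloor})| \cdot |C(S'_{\lfloor n/2\rfloor})|$, which is $\Omega(l(\lfloor n/2\rfloor) l'(\lfloor n/2\rfloor))$ --- here is where the hypothesis that $l$ and $l'$ are strictly positive is used, so that the product of the two lower bounds really is a lower bound on the product of sizes rather than being swallowed by a zero factor, and so that $l(\lfloor n/2\rfloor) l'(\lfloor n/2\rfloor)$ is a sensible target function. Then Lemma \ref{lem:omega-families}, applied to the family $\{T_n\}$, yields that $\L \times \L'$ is $\Omega(l(\lfloor n/2\rfloor) l'(\lfloor n/2\rfloor))$, as required.

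\medskip
The only mildly delicate point --- the ``hard part,'' such as it is --- is the asymptotic bookkeeping around $\lfloor n/2\rfloor$ and the constants coming out of the two $\Omega$ hypotheses: one must check that multiplying two eventually-lower-bounded functions stays eventually lower-bounded by the product (taking $N_0$ to be the max of the two thresholds and $C$ the product of the two constants), and this is exactly where strict positivity is needed to rule out the degenerate case. Everything else is the routine componentwise-join computation already rehearsed in the $\Nat \times \Nat$ example.
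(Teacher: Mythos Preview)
Your proposal is correct and follows essentially the same approach as the paper: obtain the two $\Omega$-families via Lemma~\ref{lem:omega-families}, form $T_n = (S_{\lfloor n/2\rfloor}\times\{\bot_{\L'}\}) \cup (\{\bot_\L\}\times S'_{\lfloor n/2\rfloor})$, and show $C(S_{\lfloor n/2\rfloor})\times C(S'_{\lfloor n/2\rfloor}) \subseteq C(T_n)$ using componentwise joins. If anything, your containment argument via $(U\times\{\bot_{\L'}\}) \cup (\{\bot_\L\}\times U')$ is slightly cleaner than the paper's, which appeals to Lemma~\ref{lem:lubs-distr} on paired-up elements in a way that glosses over why the witnessing set actually lies in $P_n$.
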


Note that the insight that we can divide the two halves of the set
between $\L$ and $\L'$ can be generalise.
Specifically, \emph{any} split between $\L$ and $\L'$ works, which allows us to
give other bounds, such as
$\Omega(l(\lfloor\frac{n}{5}\rfloor)l'(\lfloor\frac{4n}{5}\rfloor))$ and
$\Omega(\textit{maximum}_{0\le k \le n}(l(k)l'(n-k)))$.

This insight may be used to achieve tighter bounds than Theorem \ref{thm:product-lower}.
For example, Theorem \ref{thm:product-lower} gives a lower bound of $\Omega(\frac{n}{2}\cdot2^{\frac{n}{2}})$
for the lattice $\Nat\times\Pow(\Nat)$.
However, taking $S_n = \{0\}\times\{\{i\} | i \in [0..n]\}$ gives the lower-bound $\Oh(2^n)$.

\subsection{Sum Lattices}

\begin{figure}
  \centering
  \begin{tikzpicture}[scale=1]
    \node at (-1, 0) (LL') {$\L\hsum\L'$};
    \node at (0, 0) (EQ) {$=$};
    \node at (1.5, -1) (bot) {$0$};
    \node at (1, 0)  (L)   {$\L$};
    \node at (2, 0)  (L')  {$\L'$};
    \node at (1.5, 1) (top) {$1$};
    \draw (bot) -- (L);
    \draw (bot) -- (L');
    \draw (L) -- (top);
    \draw (L') -- (top);

    \node at (4, 0) (LL'2) {$\L\vsum\L'$};
    \node at (5, 0) (EQ2) {$=$};
    \node at (6, -0.5) (L2) {$\L$};
    \node at (6, 0.5) (L'2) {$\L'$};
    \draw (L2) -- (L'2);
  \end{tikzpicture}
  \caption{\label{fig:latticesums} Sum Lattices}
\end{figure}
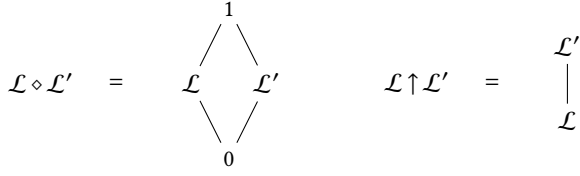

We define two types of sums to capture disconnected parts of a lattice.
\begin{definition}
  The \emph{vertical} sum of $\L$ and $\L'$, written $\L\vsum\L'$, has elements in
  $\L \uplus \L'$ (defined as $\{0\}\times\L \cup \{1\}\times\L'$) where $\ell
  \canFlowTo_{\L\vsum\L'}\ell'$ if and only if:
  \begin{itemize}
    \item $\ell = (0, \jmath)$ and $\ell' = (1,\jmath')$ or
    \item $\ell = (i, \jmath)$, $\ell' = (i, \jmath')$, and $\jmath \canFlowTo \jmath'$
  \end{itemize}

  The \emph{horizontal} sum of $\L$ and $\L'$, written $\L\hsum\L'$,
  has elements in $\{0, 1\} \cup (\L \uplus \L')$
  where $\ell \canFlowTo_{\L\hsum\L'}\ell'$ if and only if either
  $\ell = 0$, $\ell' = 1$, or  $\ell = (i, \jmath)$, $\ell' = (i, \jmath')$, and $\jmath \canFlowTo \jmath'$.
\end{definition}
In other words, $\L\vsum\L'$ is putting $\L'$ on top of $\L$
and $\L\hsum\L'$ is putting $\L$ and $\L'$ next to each other and
gluing $0$ to the bottom and $1$ to the top of the two lattices.
Figure \ref{fig:latticesums} contains a graphical rendition of lattice sums.

These kind of structures appear in the literature in the form of lattices that
incorporate two disjoint parts of an organisation or application.
For example, the \texttt{Hello Retail!} lattice of Alpernas et al. is similar
to a horizontal sum lattice \cite{alpernas2018secure} and the Zone Hierarchies
of Yip et al. also form a horizontal sum \cite{yip2009privacy}.

We can establish bounds on the size of the closure sets for these lattice sums.
Specifically, because there is no complex interaction between $\L$ and $\L'$ in
either $\L\vsum\L'$ nor $\L\hsum\L'$ both sums have closure sets that scale
like the closure sets of $\L$ and $\L'$ taken in isolation.
\begin{restatable}{Theorem}{theoremsumbounds}
  If $\L$ and $\L'$ are $\Theta(f(n))$ and $\Theta(g(n))$ respectively, then $\L\vsum\L'$
  and $\L\hsum\L'$ are both $\Theta(f(n)+g(n))$.
\end{restatable}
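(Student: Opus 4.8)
The plan is to exploit that the two summands of a sum lattice never interact in a way that can enlarge a closure set. In both $\L \vsum \L'$ and $\L \hsum \L'$, a join of elements taken entirely from the copy of $\L$ stays in that copy and equals the join computed in $\L$ --- and symmetrically for $\L'$ --- while a join that involves elements of \emph{both} copies is forced all the way up: to the corresponding element of the upper copy in $\L \vsum \L'$, and to the adjoined top in $\L \hsum \L'$. I would first record this precisely. For a finite $S$ in the sum, let $S_0 \subseteq \L$ and $S_1 \subseteq \L'$ be the parts of $S$ lying in the two copies. Then every element of $C(S)$ is either one of the (at most two) adjoined extremal points, or lies in the copy of $C(S_0)$, or lies in the copy of $C(S_1)$; and these copies are disjoint since they sit in different summands. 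Hence
$$|C(S)| \;\le\; |C(S_0)| + |C(S_1)| + c, \qquad |S_0| + |S_1| \le |S|,$$
with $c = 0$ for the vertical sum and $c = 2$ for the horizontal sum. (In fact $C(S)$ is, up to those few extremal points and possibly a missing bottom in each copy, the disjoint union of copies of $C(S_0)$ and $C(S_1)$, but only the upper estimate is needed below.)

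For the upper bound, recall that the closure-size function is non-decreasing (raising the cardinality bound only adds candidates to the maximum). So for $|S| \le n$ the displayed inequality gives $|C(S)| \le \CS{\L}(|S_0|) + \CS{\L'}(|S_1|) + c \le \CS{\L}(n) + \CS{\L'}(n) + c$, and maximising over such $S$ yields $\CS{\L\vsum\L'}(n) \le \CS{\L}(n) + \CS{\L'}(n) + c$ and likewise for $\L\hsum\L'$. Since $\L$ is $\Theta(f(n))$ and $\L'$ is $\Theta(g(n))$, the right-hand side is $\Oh(f(n)) + \Oh(g(n)) + \Oh(1) = \Oh(f(n)+g(n))$ --- the additive constant is absorbed because $\CS{\L}(n) \ge 1$ always (the empty join lies in every closure), so $f(n)+g(n)$ is eventually at least a positive constant. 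Hence both sums are $\Oh(f(n)+g(n))$.

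For the lower bound, observe that both $\L$ and $\L'$ embed into each sum: the evident inclusion of a summand's copy, after redirecting the copy's bottom to the global bottom of the sum when these differ, is injective and preserves $\bot$ and $\lub$, hence is a lattice embedding. By Theorem \ref{thm:embedding} and the hypotheses that $\L$ is $\Omega(f(n))$ and $\L'$ is $\Omega(g(n))$, each of $\L\vsum\L'$ and $\L\hsum\L'$ is both $\Omega(f(n))$ and $\Omega(g(n))$. Finally, a lattice that is simultaneously $\Omega(f(n))$ and $\Omega(g(n))$ is $\Omega(f(n)+g(n))$: if its closure-size is $\ge C_1 f(n)$ and $\ge C_2 g(n)$ for all large $n$, then it is $\ge \tfrac12(C_1 f(n) + C_2 g(n)) \ge \tfrac12 \min(C_1,C_2)\,(f(n)+g(n))$ there. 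Combining with the upper bound gives $\Theta(f(n)+g(n))$ for both $\L\vsum\L'$ and $\L\hsum\L'$.

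The step I expect to require the most care is the structural claim underlying the displayed inequality, and in particular the bookkeeping for the horizontal sum, where the adjoined bottom and top are genuinely new points (distinct from the bottoms of the copies of $\L$ and $\L'$), so one gets an additive slack rather than an exact count; one must also check for the vertical sum that a mixed join collapses onto the upper copy and contributes nothing new, and that a summand's copy really is closed under joins so that the redirected inclusion is a homomorphism. These are short case analyses on the two order definitions; the remaining asymptotic manipulations --- monotonicity, absorbing $\Oh(1)$, and ``$\Omega(f)$ and $\Omega(g)$ imply $\Omega(f+g)$'' --- are routine.
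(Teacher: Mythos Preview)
Your argument is correct and largely parallels the paper's. The upper-bound half is essentially identical: the paper also splits any $S$ into its two summand parts, shows $C(S) \subseteq C(L) \uplus C(L')$ for the vertical sum and $C(S) \subseteq \{0,1\} \cup (C(L) \uplus C(L'))$ for the horizontal sum, and then uses monotonicity of $\CS{}$ exactly as you do.

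The lower-bound half differs in packaging. The paper does not invoke Theorem~\ref{thm:embedding}; instead it builds an explicit $\Omega$-family directly from the $\Omega$-families $S_n, S'_n$ of $\L,\L'$ by setting $F_n$ to whichever of $\{0\}\times S_n$ or $\{1\}\times S'_n$ has the larger closure at that $n$, giving $|C(F_n)| \ge \max(|C(S_n)|,|C(S'_n)|)$ and then using $\max(f,g) = \Theta(f+g)$ (Lemma~\ref{lem:useful}). Your route via embeddings is a clean alternative: it reuses an existing theorem rather than re-doing the $\Omega$-family bookkeeping, at the cost of having to verify that the ``redirect $\bot$'' inclusions really are join-preserving (which they are, as you note). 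The paper's route avoids that check entirely since it never needs the inclusion to be a homomorphism---placing a subset of one summand inside the sum and computing its closure there is already enough. Both approaches ultimately encode the same observation, that each summand's closure behaviour is faithfully visible inside the sum; yours factors through a general lemma, the paper's is more hands-on.
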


\subsection{Exponential Lattices}

\begin{definition}
  \label{def:exp:lattices}
  Given the lattice $\L$ we define the \emph{exponential lattice}
  $2^\L$ as the lattice whose elements are subsets of $\L$
  and where:
  $$\ell \canFlowTo \ell'\ \iff\ \forall \jmath \in \ell.\exists \jmath' \in \ell'.\ \jmath \canFlowTo \jmath'$$
  We also require that the set of labels in $2^\L$ is additionally
  quotiented by the equivalence relation $\sim$ given by:
  $$\ell \sim \ell'\ \iff\ \ell \canFlowTo \ell' \wedge \ell' \canFlowTo \ell$$
\end{definition}

The last requirement of Definition \ref{def:exp:lattices} is a technical
necessity to make $\canFlowTo$ antisymmetric (i.e. that $\ell \canFlowTo \ell'
\canFlowTo \ell \implies \ell = \ell'$ ).
If $\L$ is not quotiented by $\sim$ and there exists $\ell, \ell' \in \L$ such
that $\ell \canFlowTo \ell'$, we have that
$\{\ell'\} \canFlowTo \{\ell, \ell'\} \canFlowTo \{\ell'\}$ but
$\{\ell, \ell'\} \not= \{\ell'\}$.

A label in the $2^\L$ lattice represents the ``most liberal'' extension of
a collection of labels $L \subseteq \L$ to a security label.
It allows us to extend a lattice by introducing additional least upper bounds,
and so it considers more programs secure than the underlying $\L$ lattice.
However, as demonstrated by the following theorem this naturally introduces
additional overhead.
\begin{restatable}{Theorem}{exponentialexponential}
  If there is a non-finite $L \subseteq \L$ such that
  $\forall \ell, \ell' \in L.\ \ell \canFlowTo \ell' \implies \ell = \ell'$
  then $2^\L$ is $\Theta(2^n)$.
\end{restatable}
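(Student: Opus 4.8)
The plan is to prove the two bounds $\Omega(2^n)$ and $\Oh(2^n)$ separately; the upper bound is immediate from Theorem~\ref{thm:global-bounds}, so the real content is the lower bound, which I would obtain by exhibiting an explicit family of subsets of $2^\L$ whose closure sets have size $\Omega(2^n)$ and invoking Lemma~\ref{lem:omega-families}. The natural candidate is to use the given non-finite antichain $L \subseteq \L$: pick $n$ distinct elements $\ell_1, \dots, \ell_n \in L$ and set $S_n \defAs \{\ \{\ell_i\}\ |\ i \in [1..n]\ \}$, a set of $n$ singleton labels in $2^\L$, so $|S_n| \le n$. I would then argue that the join in $2^\L$ of a subfamily $\{\{\ell_i\} : i \in T\}$ for $T \subseteq [1..n]$ is (the $\sim$-equivalence class of) the set $\{\ell_i : i \in T\}$, so that $C(S_n) \supseteq \{\ \{\ell_i : i \in T\}\ |\ T \subseteq [1..n]\ \}$.

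The crux is therefore showing that these $2^n$ sets $\{\ell_i : i \in T\}$ are pairwise non-$\sim$-equivalent, i.e. that distinct subsets $T \ne T'$ of $[1..n]$ give distinct elements of $2^\L$. Suppose $T \ne T'$; without loss of generality there is some $k \in T \setminus T'$. By the definition of $\canFlowTo$ in $2^\L$, $\{\ell_i : i \in T\} \canFlowTo \{\ell_i : i \in T'\}$ would require that for $\ell_k$ there is some $j \in T'$ with $\ell_k \canFlowTo \ell_j$. But $\ell_k, \ell_j \in L$ and $L$ is an antichain, so $\ell_k \canFlowTo \ell_j$ forces $\ell_k = \ell_j$, contradicting $k \notin T'$ (the $\ell_i$ being distinct). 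Hence $\{\ell_i : i \in T\} \not\canFlowTo \{\ell_i : i \in T'\}$, so in particular the two are not $\sim$-related, and the $2^n$ subsets represent $2^n$ distinct lattice elements. This gives $|C(S_n)| \ge 2^n$, which is $\Omega(2^n)$, and Lemma~\ref{lem:omega-families} then yields that $2^\L$ is $\Omega(2^n)$.

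I also need to double-check the one routine algebraic fact I glossed over: that $\bigsqcup_{i \in T} \{\ell_i\}$ really equals (the class of) $\{\ell_i : i \in T\}$ in $2^\L$. Unwinding Definition~\ref{def:exp:lattices}, one verifies that $\{\ell_i : i \in T\}$ is an upper bound of all the $\{\ell_i\}$, $i \in T$, and that any $\ell' \in 2^\L$ dominating every $\{\ell_i\}$ must dominate their union by the $\forall\exists$ shape of $\canFlowTo$; so the union is the least upper bound. (Strictly, joins in $2^\L$ are just unions of representatives, modulo $\sim$ — this is worth a sentence but no more.) Combining the lower bound with the global $\Oh(2^n)$ upper bound from Theorem~\ref{thm:global-bounds} gives the claimed $\Theta(2^n)$.

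The main obstacle is the antisymmetry/quotient bookkeeping: because $2^\L$ is quotiented by $\sim$, I must be careful to phrase everything in terms of $\canFlowTo$ and $\sim$ rather than literal set equality, and the key step is precisely the antichain argument showing the candidate elements are pairwise $\sim$-distinct. Everything else — $|S_n| \le n$, the join computation, and the appeal to the two cited results — is routine.
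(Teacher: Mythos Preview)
Your proposal is correct and follows essentially the same approach as the paper: construct the family $S_n = \{\{\ell_1\},\ldots,\{\ell_n\}\}$ of singletons from the antichain, observe that $C(S_n)$ is (in bijection with) the powerset of $\{\ell_1,\ldots,\ell_n\}$, invoke Lemma~\ref{lem:omega-families} for the lower bound, and Theorem~\ref{thm:global-bounds} for the upper bound. You supply considerably more detail than the paper --- in particular the careful verification that distinct subsets give $\sim$-inequivalent elements and that joins are unions --- whereas the paper compresses all of this into a single ``Clearly''.
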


\subsection{$k$-Truncated Powersets}

The final lattice we explore is the \emph{$k$-truncated powerset lattice}.
This lattice is like the powerset lattice, with the exception that it is
truncated, all sets of size greater than $k$ are replaced
by $\top$.

\begin{definition}
  The lattice $\Pow_k(A)$ is the lattice of all subsets of $A$, ordered by
  inclusion, with cardinality less than or equal to $k$ adjoined with a
  distinguished greatest element $\top$.
\end{definition}

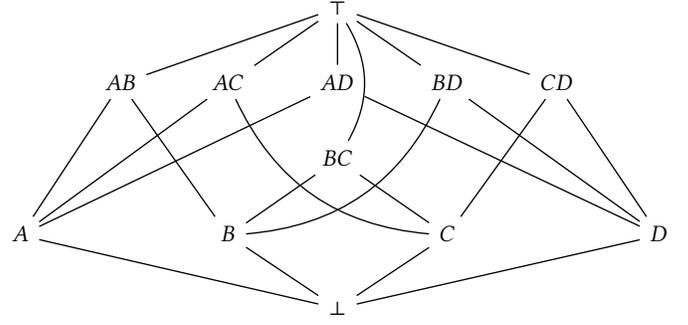
\begin{figure}
  \center
  \begin{tikzcd}
      & & & \top & & & \\
      & AB \arrow[rru, no head] & AC \arrow[ru, no head] & AD \arrow[u, no head] & BD \arrow[lu, no head] & CD \arrow[llu, no head]  & \\

      &  & & BC \arrow[uu, no head, bend right] & & & \\

      A \arrow[ruu, no head] \arrow[rruu, no head] \arrow[rrruu, no head] & &
      B \arrow[luu, no head] \arrow[ru, no head] \arrow[rruu, no head, bend right] & &
      C \arrow[ruu, no head] \arrow[lu, no head] \arrow[lluu, no head, bend left] & &
      D \arrow[luu, no head] \arrow[lluu, no head] \arrow[llluu, no head] \\
      & & & \bot \arrow[lllu, no head] \arrow[lu, no head] \arrow[ru, no head] \arrow[rrru, no head] & & &
  \end{tikzcd}
  \caption{\label{fig:pow2} A graphical representation of the $\Pow_2(\{A, B, C, D\})$ truncated powerset lattice.}
\end{figure}

A graphical rendition of the $2$-truncated powerset lattice can be found in
Figure \ref{fig:pow2}.
There are four principals $A$ through $D$ in the lattice, and a further six
combinations of at most two principals that form the upper bound of any
two singleton labels, for example $AD$ or $BC$.
However, all two-principals labels have upper bound $\top$, unlike the standard powerset
lattice there are no labels $ABC$ or $BCD$ in $\Pow_2(\{A, B, C, D\})$.

To establish bounds on $\Pow_k(A)$ we note that the number of subsets of size $k$
of an $n$ element set is exactly equal to $n$ choose $k$:
$$
{n \choose k} = \frac{n!}{k!(n-k)!}
$$
Two observations about this function are necessary to get convenient upper and
lower bounds for $\Pow_k(A)$\footnote{see Lemma \ref{lem:bin-bounds} in the Appendix}:
$$
\frac{n^k}{k^k} \le \frac{n!}{k!(n-k)!} \le \frac{n^k}{k!}
$$
Because $k$ and consequently $k^k$ and $k!$ are constants, we can
establish a tight bound of $\Theta(n + n^2 + \ldots + n^k) = \Theta(n^k)$ for
the closure-size of $\Pow_k(A)$.

\begin{restatable}{Theorem}{theorempowk}
  If $A$ is non-finite then $\Pow_k(A)$ is $\Theta(n^k)$.
\end{restatable}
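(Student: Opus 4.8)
The plan is to prove matching upper and lower bounds on $\CS{\Pow_k(A)}(n)$, which by definition of a lattice being $\Theta$ amounts to showing $\CS{\Pow_k(A)}(n)$ is both $\Oh(n^k)$ and $\Omega(n^k)$. For the upper bound, I would take any $S \subseteq \Pow_k(A)$ with $|S| \le n$ and analyze $C(S)$. Each element of $C(S)$ is a join $\bigsqcup S'$ for some $S' \subseteq S$. In $\Pow_k(A)$ the join of a collection of sets is their union if that union has size at most $k$, and $\top$ otherwise. Let $B = \bigcup S \subseteq A$ be the union of all the (non-$\top$) sets appearing in $S$; then $|B| \le kn$ since each of the at most $n$ sets in $S$ contributes at most $k$ principals. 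Every non-$\top$ element of $C(S)$ is a subset of $B$ of size at most $k$, so $|C(S)| \le 1 + \sum_{j=0}^{k} \binom{|B|}{j} \le 1 + \sum_{j=0}^{k}\binom{kn}{j}$. Using the stated bound $\binom{m}{j} \le m^j / j!$ and the fact that $k$ (hence $k^j$, $j!$) is a constant, this is $\Oh((kn)^k) = \Oh(n^k)$. So by the upper-bound reasoning principle stated after Lemma~\ref{lem:omega-families} (take $L_n$ to be all subsets of $[1\ldots kn]$ of size $\le k$, together with $\top$), $\Pow_k(A)$ is $\Oh(n^k)$.

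For the lower bound, I would exhibit an explicit family $\{S_n\}$ satisfying the hypotheses of Lemma~\ref{lem:omega-families}. Since $A$ is non-finite, pick $n$ distinct principals $a_1, \ldots, a_n \in A$ and set $S_n = \{\{a_1\}, \ldots, \{a_n\}\}$, the singletons; clearly $S_n \subseteq \Pow_k(A)$ and $|S_n| = n \le n$. For any $S' \subseteq S_n$ with $|S'| \le k$, the join $\bigsqcup S'$ is just the union of those singletons, a subset of $\{a_1,\ldots,a_n\}$ of size $|S'| \le k$, and conversely every such subset of size $\le k$ arises this way. Hence $C(S_n) \supseteq \{\, T \subseteq \{a_1,\ldots,a_n\} : |T| \le k \,\}$, which has size $\sum_{j=0}^{k}\binom{n}{j} \ge \binom{n}{k} \ge n^k/k^k$. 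Since $k^k$ is constant, $|C(S_n)|$ is $\Omega(n^k)$, and Lemma~\ref{lem:omega-families} gives that $\Pow_k(A)$ is $\Omega(n^k)$.

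Combining the two bounds yields $\CS{\Pow_k(A)}(n) = \Theta(n^k)$, i.e.\ $\Pow_k(A)$ is $\Theta(n^k)$. I expect the only real subtlety to be bookkeeping in the upper bound: one must be careful that the union of the sets in $S$ has size at most $kn$ rather than $n$, and that $\top$ contributes only the harmless additive constant $1$; otherwise one might naively get $\Oh(n^k / k^k)$-style expressions and worry about the hidden constant, but since $k$ is fixed these are all absorbed into the $\Oh$/$\Omega$ constant. The binomial estimates $n^k/k^k \le \binom{n}{k} \le n^k/k!$ quoted just before the theorem statement (and proved as Lemma~\ref{lem:bin-bounds} in the appendix) are exactly what make both directions go through cleanly, so the proof is essentially an application of those estimates together with the two reasoning principles for upper and lower bounds on lattices.
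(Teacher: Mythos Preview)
Your proposal is correct. The lower bound is essentially identical to the paper's: both pick $S_n$ to be $n$ distinct singletons and invoke the estimate $\binom{n}{k}\ge n^k/k^k$ together with Lemma~\ref{lem:omega-families}.

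For the upper bound, however, you take a genuinely different route. The paper argues that any non-$\top$ element $\bigsqcup S'$ of $C(S)$, being a set of size at most $k$, can already be obtained as $\bigsqcup S''$ for some $S''\subseteq S$ with $|S''|\le k$ (pick one witness from $S'$ for each of the at most $k$ atoms); hence $|C(S)|\le 1+\sum_{i\le k}\binom{|S|}{i}\le 1+\sum_{i\le k}|S|^i$. You instead observe that every non-$\top$ element of $C(S)$ is a subset of size $\le k$ of the ambient atom set $B=\bigcup S$, and bound $|B|\le kn$, giving $|C(S)|\le 1+\sum_{j\le k}\binom{kn}{j}$. Your argument is more immediate (no need for the ``few witnesses suffice'' step) at the cost of the looser constant $(kn)^k$ versus the paper's $n^k$; since $k$ is fixed this is harmless asymptotically. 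One small omission: the paper treats $k=0$ separately (where $\Pow_0(A)$ is the two-element lattice and $\Theta(1)=\Theta(n^0)$), whereas your use of $n^k/k^k$ tacitly assumes $k\ge 1$; this is a trivial edge case but worth a sentence.
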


\section{Fast and Slow}
\label{sec:fastAndSlow} 

In the previous section, we established bounds for the number of executions
required to multi-execute the program $p$ given the lattice $\L$.
However, it is not sufficient that $\L$ has small closure sets for
multi-execution to be efficient.
Specifically, $\MEF$ has to do a number of things other than executing $p$.
Recall the definition of $\MEF[p]$:
$$\MEF[p](x) = \bigcup\{\ p(x\project\ell)@(\ell\uparrow C(\L(x)))\ |\ \ell \in C(\L(x))\ \}$$
There are three computations that we may classify as overhead:
\begin{enumerate}
  \item Enumerating the elements of $C(\L(x))$.
  \item Computing $x\project\ell$ for each $\ell \in C(\L(x))$.
  \item Computing membership of $\ell \uparrow C(\L(x))$ for $p(x\project\ell)@(\ell\uparrow C(\L(x)))$.
\end{enumerate}
In this analysis, we conservatively assume that all lattice operations (like
$\canFlowTo$ and $\lub$) take constant time.\footnote{For the interested reader
there is a significant body of work on the efficiency of lattice operations
\cite{munro2019space,ait1989efficient,brodnik1999membership}.}
Clearly, (2) above is computable in $\Oh(|x|)$ time.
In this section we show that both (1) and (3) are also computable
with reasonable bounds.

To address (1), the following lemma and theorem allow us to give an algorithm
for efficiently computing (or enumerating) $C(L)$ given that we have an upper
bound on the elements of $C(L)$.

\begin{restatable}{lemma}{lemmadownsetdecide}
\label{lem:downset-decide}
  Let $S\project\ell = \{\ \ell' \in S\ |\ \ell' \canFlowTo \ell\ \}$, then
  $$\bigsqcup (S\project\ell) = \ell\ \iff\ \ell \in C(S)$$
\end{restatable}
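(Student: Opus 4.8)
The plan is to prove the biconditional in Lemma~\ref{lem:downset-decide} by showing each direction separately, working directly from the definitions of $C(S)$ and of the least upper bound.

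For the forward direction, assume $\bigsqcup(S\project\ell) = \ell$. By definition, $S\project\ell = \{\ell' \in S \mid \ell' \canFlowTo \ell\}$ is a subset of $S$, so $\bigsqcup(S\project\ell)$ is of the form $\bigsqcup S'$ for $S' = S\project\ell \subseteq S$, which is exactly the membership condition for $C(S)$. Hence $\ell = \bigsqcup(S\project\ell) \in C(S)$. This direction is essentially immediate.

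For the reverse direction, assume $\ell \in C(S)$, so $\ell = \bigsqcup S'$ for some $S' \subseteq S$. I would first observe that since $\ell$ is an upper bound of $S'$, every $\ell' \in S'$ satisfies $\ell' \canFlowTo \ell$, hence $S' \subseteq S\project\ell$. Taking least upper bounds and using monotonicity of $\bigsqcup$ (a larger set has a larger join), we get $\ell = \bigsqcup S' \canFlowTo \bigsqcup(S\project\ell)$. For the other inequality, note that every element of $S\project\ell$ flows to $\ell$ by definition, so $\ell$ is an upper bound of $S\project\ell$, and since $\bigsqcup(S\project\ell)$ is the \emph{least} such upper bound, $\bigsqcup(S\project\ell) \canFlowTo \ell$. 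Antisymmetry of $\canFlowTo$ then yields $\bigsqcup(S\project\ell) = \ell$. One technical subtlety worth a sentence: when $S' = \emptyset$ (or $S\project\ell = \emptyset$) the join is $\bot$, so the degenerate cases should be checked, though they follow from the same reasoning since $\bot \canFlowTo \ell$ always.

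The main obstacle, such as it is, is minor: one must be careful that the monotonicity fact $S' \subseteq S'' \implies \bigsqcup S' \canFlowTo \bigsqcup S''$ is used correctly (it follows because $\bigsqcup S''$ is an upper bound for $S'' \supseteq S'$, hence for $S'$, and $\bigsqcup S'$ is the least such), and that the empty-set edge cases are handled. Beyond that the lemma is a straightforward exercise in lattice-theoretic bookkeeping, and the proof is short.
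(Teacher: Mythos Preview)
Your proof is correct and follows essentially the same approach as the paper: the forward direction is immediate from $S\project\ell \subseteq S$, and the reverse direction uses $S' \subseteq S\project\ell$ together with monotonicity of $\bigsqcup$ and the fact that $\ell$ is an upper bound of $S\project\ell$ to sandwich $\bigsqcup(S\project\ell)$ between $\ell$ and $\ell$. If anything, your presentation is slightly more direct than the paper's, which routes the monotonicity step through $\bigsqcup(L \cup L')$ rather than invoking $S' \subseteq S\project\ell \Rightarrow \bigsqcup S' \canFlowTo \bigsqcup(S\project\ell)$ outright.
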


\begin{Lemma}
  \label{lem:time-upper-bound}
  Assume a function $f$ that is computable in $\Oh(t(n))$ time and is such that
  for all $S$ it is the case that $C(S) \subseteq f(S)$.
  Then $C(S)$ can be computed in time:
  $$\Oh(t(|S|) + |f(S)||S|)$$
\end{Lemma}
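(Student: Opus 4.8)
The plan is to give an explicit algorithm that computes $C(S)$ and bound its running time using Lemma \ref{lem:downset-decide}. First I would compute the candidate set $f(S)$, which takes $\Oh(t(|S|))$ time by assumption, and note that by hypothesis $C(S) \subseteq f(S)$, so it suffices to decide, for each candidate $\ell \in f(S)$, whether $\ell \in C(S)$, and then collect exactly those $\ell$ that pass the test. The total output $C(S)$ is then assembled in time proportional to $|f(S)|$ times the per-candidate cost.

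The heart of the argument is the per-candidate membership test, and here I would invoke Lemma \ref{lem:downset-decide}: $\ell \in C(S)$ if and only if $\bigsqcup(S\project\ell) = \ell$, where $S\project\ell = \{\ell' \in S \mid \ell' \canFlowTo \ell\}$. Computing $S\project\ell$ requires iterating over all $|S|$ elements of $S$ and performing one $\canFlowTo$ comparison each, which is $\Oh(|S|)$ under the constant-time lattice-operation assumption. Folding $\lub$ over the (at most $|S|$-element) set $S\project\ell$ is another $\Oh(|S|)$ operations, and the final equality check $\bigsqcup(S\project\ell) = \ell$ is $\Oh(1)$. So each candidate costs $\Oh(|S|)$, and ranging over all $|f(S)|$ candidates gives $\Oh(|f(S)||S|)$ for the filtering phase. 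Adding the cost of producing $f(S)$ yields the claimed bound $\Oh(t(|S|) + |f(S)||S|)$.

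I would present this as: (i) state the algorithm — compute $f(S)$, then for each $\ell \in f(S)$ test whether $\bigsqcup(S\project\ell) = \ell$ via Lemma \ref{lem:downset-decide}, outputting those that pass; (ii) argue correctness — the output equals $C(S)$ because $C(S) \subseteq f(S)$ ensures no element of $C(S)$ is missed, and Lemma \ref{lem:downset-decide} ensures the test admits exactly the elements of $C(S)$; (iii) do the timing accounting as above. The main obstacle, such as it is, is purely bookkeeping: one must be careful that the $|S|$ factor in the second term genuinely covers both the projection-building step and the $\lub$-fold, and that we are not double-counting or omitting the cost of deduplicating $f(S)$ — but since $f(S)$ is already a set and we only read from it, no extra cost arises. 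There is no deep difficulty here; the lemma does the real work and this is its algorithmic corollary.
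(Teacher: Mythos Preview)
Your proposal is correct and follows essentially the same approach as the paper: compute $f(S)$, then iterate over its elements and use Lemma~\ref{lem:downset-decide} to test membership in $C(S)$, with each test costing $\Oh(|S|)$. Your write-up is in fact more explicit than the paper's about the per-candidate cost breakdown (projection, fold, equality check), but the algorithm and the bound are identical.
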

\begin{proof}
  Lemma \ref{lem:downset-decide} suggests a procedure for enumerating $C(S)$
  given $S$ and $f(S)$:
  \begin{algorithmic}
    \FOR {$\ell \in f(S)$}
      \IF {$\bigsqcup (S\project\ell) = \ell$}
        \STATE {\textbf{emit} $\ell$}
      \ENDIF
    \ENDFOR
  \end{algorithmic}
  This procedure takes $\Oh(t(|S|) + |f(S)||S|)$ time and by Lemma
  \ref{lem:downset-decide} we can see that it enumerates $C(S)$.
\end{proof}

Next we tackle (3), Lemma \ref{lem:time-upper-bound} lets us convert our upper
bounds for the elements of $C(S)$ into upper bounds on the time it takes to
compute $C(S)$.
For example, if $S \subseteq \L_0 \times \L_1$ we know that $C(S) \subseteq
C(S_0) \times C(S_1)$ for $S_i = \{\ \ell_i\ |\ (\ell_0, \ell_i) \in S\ \}$ and
so the time it takes to compute $C(S)$ is bounded by $|S||C(S_0)||C(S_1)|$ and
the time it takes to compute $C(S_0) \times C(S_1)$.
Similarly, as we saw in Examples \ref{ex:prelim-lattice} and \ref{ex:lattices} in the discrete
lattice $\discrete(A)$ we have that $C(L) \subseteq L \cup \{\bot, \top\}$.

Furthermore, there is an equivalent formulation of membership in $\ell \uparrow
C(\L(x))$ that can be read as a linear-time algorithm.
This is a novel formulation that does not depend on the size of $C(\L(x))$, in
contrast to the formulation in Section \ref{\LabelPrefix:sec:prelim}.

\begin{Lemma}
  \label{lem:time-uparrow}
  Given $\ell \in C(L)$ it is possible to compute $\jmath \in \ell \uparrow
  C(L)$ in $\Oh(|L|)$ time.
\end{Lemma}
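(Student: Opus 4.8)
The plan is to replace the quantifier over the (possibly exponentially large) set $C(L)$ that occurs in the definition of $\ell \uparrow C(L)$ by a condition that mentions only the elements of $L$. Writing $L\project\ell = \{\ \ell' \in L\ |\ \ell' \canFlowTo \ell\ \}$ as in Lemma~\ref{lem:downset-decide}, I would first establish the characterisation that, for $\ell \in C(L)$,
$$\jmath \in \ell \uparrow C(L) \quad\iff\quad L\project\jmath = L\project\ell .$$
Granting this, the algorithm is immediate: scan $L$ once and, for each $\ell' \in L$, check that $\ell' \canFlowTo \ell$ holds exactly when $\ell' \canFlowTo \jmath$ holds. Under the assumption that each $\canFlowTo$ test is $\Oh(1)$, the whole procedure runs in $\Oh(|L|)$ time and accepts precisely when $L\project\jmath = L\project\ell$, i.e.\ when $\jmath \in \ell\uparrow C(L)$.

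For the right-to-left direction of the characterisation I would use Lemma~\ref{lem:downset-decide}. If $L\project\jmath = L\project\ell$ then, since $\ell \in C(L)$, that lemma gives $\bigsqcup(L\project\ell) = \ell$, hence $\bigsqcup(L\project\jmath) = \ell$; as $\jmath$ is an upper bound of $L\project\jmath$, the least upper bound $\ell$ satisfies $\ell \canFlowTo \jmath$, which is the first conjunct of up-set membership. For the second conjunct, any $\iota \in C(L)$ with $\iota \canFlowTo \jmath$ is a join $\bigsqcup S'$ with $S' \subseteq L$; each $s \in S'$ satisfies $s \canFlowTo \iota \canFlowTo \jmath$, so $s \in L\project\jmath = L\project\ell$, and therefore $\iota = \bigsqcup S' \canFlowTo \ell$.

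The left-to-right direction is where the one real idea lies: the singletons $\{s\}$ with $s \in L$ are themselves elements of $C(L)$, hence admissible witnesses $\iota$ in the defining condition of $\ell \uparrow C(L)$. Given $\jmath \in \ell \uparrow C(L)$: if $s \in L\project\ell$ then $s \canFlowTo \ell \canFlowTo \jmath$ (using the first conjunct), so $s \in L\project\jmath$; conversely, if $s \in L\project\jmath$ then instantiating the second conjunct at $\iota = s \in C(L)$ turns $s \canFlowTo \jmath$ into $s \canFlowTo \ell$, so $s \in L\project\ell$. The main obstacle is recognising that this collapse is legitimate — that testing only the generators of $L$ suffices, because any larger element of $C(L)$ lying below $\jmath$ already has all of its generators below $\jmath$, so controlling the generators controls every join. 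Once that is seen, the remainder is bookkeeping with constant-time lattice operations.
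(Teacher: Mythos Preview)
Your proposal is correct and follows essentially the same approach as the paper: both replace the quantifier over $C(L)$ by one over $L$, using that singletons $\{s\}$ with $s\in L$ already lie in $C(L)$. The only cosmetic difference is the packaging of the test: the paper checks $\ell \canFlowTo \jmath$ together with the one-sided inclusion $L\project\jmath \subseteq L\project\ell$, whereas you check the symmetric condition $L\project\jmath = L\project\ell$ and recover $\ell \canFlowTo \jmath$ from Lemma~\ref{lem:downset-decide}; given the hypothesis $\ell \in C(L)$ these are equivalent.
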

\begin{proof}
  It suffices to check the condition $P(\jmath, \ell)$ defined as:
  $$P(\jmath, \ell)\ \iff\ \ell \canFlowTo \jmath \wedge \forall \iota \in L.\ \iota \canFlowTo \jmath \implies \iota \canFlowTo \ell.$$
  To see that $P(\jmath, \ell) \implies \jmath \in \ell \uparrow C(L)$, consider
  that if $P(\jmath, \ell)$ then $\ell$ is an upper bound on any subset $L'$ of $L$
  such that $\bigsqcup L' \canFlowTo \jmath$ and so $\jmath \in \ell \uparrow C(L)$.
  Likewise, if $\jmath \in \ell \uparrow C(L)$ then $\ell \canFlowTo \jmath$ and if
  $\iota \in L$ and $\iota \canFlowTo \jmath$ then $\bigsqcup \{\iota\} \canFlowTo \jmath$
  and so $\bigsqcup \{\iota\} \canFlowTo \ell$ by the definition of $\jmath \in \ell \uparrow C(L)$.
\end{proof}

Finally, we put these lemmas together to give an upper bound on the execution time of $\MEF[p](x)$.

\begin{restatable}[Time Complexity of Multi-Execution]{Theorem}{theoremtimemef}
  \label{thm:time-MEF}
  Assume:
  \begin{enumerate}
    \item That the lattice $\L$ is $\Oh(s_\L(n))$.
    \item That $p(x)$ can be computed in $\Oh(t_p(|x|))$.
    \item A function $f$ that is computable in $\Oh(t_f(n))$ time and for all $S$,
      $C(S) \subseteq f(S)$ and $|f(S)|$ is $\Oh(s_\L(|S|)$.
  \end{enumerate}
  Then the elements of $\MEF[p](x)$ can be enumerated in time:
  $$\Oh(t_f(|x|) + s_\L(|x|)t_p(|x|)|x|)$$
\end{restatable}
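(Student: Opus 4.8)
The plan is to bound the total running time of $\MEF[p](x)$ by accounting separately for the three sources of overhead identified just before the theorem, plus the time spent actually running $p$. Write $n = |x|$ throughout, and note that $|\L(x)| \le n$, so every bound phrased in terms of $|\L(x)|$ can be weakened to one in terms of $n$. First I would invoke Lemma \ref{lem:time-upper-bound} with the given function $f$: since $C(\L(x)) \subseteq f(\L(x))$, $f$ runs in $\Oh(t_f(n))$, and $|f(\L(x))|$ is $\Oh(s_\L(n))$, we can enumerate $C(\L(x))$ in time $\Oh(t_f(n) + |f(\L(x))|\cdot|\L(x)|) = \Oh(t_f(n) + s_\L(n)\, n)$. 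This handles overhead (1) and produces, as a byproduct, the list of all $\ell \in C(\L(x))$ over which the outer union ranges; call its length $m$, and note $m = |C(\L(x))| \le |f(\L(x))|$ is $\Oh(s_\L(n))$.

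Next I would bound the work done inside one iteration of the loop, for a fixed $\ell \in C(\L(x))$. Computing the projection $x\project\ell$ costs $\Oh(|x|) = \Oh(n)$ (overhead (2)), since it is a single linear scan over $x$ comparing each label against $\ell$ with the assumed constant-time $\canFlowTo$. Running $p$ on this projection costs $\Oh(t_p(|x\project\ell|)) = \Oh(t_p(n))$ because $|x\project\ell| \le |x|$ and $t_p$ is monotone (or can be assumed so without loss of generality). Then, to form $p(x\project\ell)@(\ell\uparrow C(\L(x)))$, for each output element $a^\jmath$ produced by $p$ we must decide whether $\jmath \in \ell\uparrow C(\L(x))$; by Lemma \ref{lem:time-uparrow} this membership test costs $\Oh(|\L(x)|) = \Oh(n)$ per element (overhead (3)), and the number of output elements is $\Oh(t_p(n))$ since $p$ cannot emit more elements than steps it takes. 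So the selection step costs $\Oh(t_p(n)\, n)$, which dominates the $\Oh(t_p(n))$ cost of running $p$ itself. Thus one iteration costs $\Oh(n + t_p(n)\, n) = \Oh(t_p(n)\, n)$ (absorbing the $\Oh(n)$ projection cost into the larger term, assuming $t_p(n) \ge 1$).

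Finally I would multiply the per-iteration cost by the number of iterations $m = \Oh(s_\L(n))$ and add the enumeration cost, and take the union of the $m$ resulting sets. The union over $m$ sets, each of size $\Oh(t_p(n))$, costs $\Oh(m\, t_p(n)) = \Oh(s_\L(n)\, t_p(n))$, which is absorbed. Altogether the running time is
$$\Oh\bigl(t_f(n) + s_\L(n)\, n\bigr) + \Oh\bigl(s_\L(n)\bigr)\cdot\Oh\bigl(t_p(n)\, n\bigr) = \Oh\bigl(t_f(n) + s_\L(n)\, t_p(n)\, n\bigr),$$
where the stray $s_\L(n)\, n$ term from enumeration is swallowed by $s_\L(n)\, t_p(n)\, n$ (again using $t_p(n) \ge 1$). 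This is exactly the claimed bound $\Oh(t_f(|x|) + s_\L(|x|)\, t_p(|x|)\, |x|)$.

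The main obstacle, such as it is, is bookkeeping rather than mathematical depth: I need to be careful that the quantities appearing in the black-box bounds for sub-executions — in particular $|x\project\ell|$, the output size of $p(x\project\ell)$, and $|C(\L(x))|$ — are all correctly bounded by monotone functions of $n = |x|$, and that the constant-time assumption on lattice operations is what lets the $\uparrow$-membership and projection tests run in linear time via Lemmas \ref{lem:time-uparrow} and \ref{lem:downset-decide}. A secondary subtlety is justifying that the number of elements emitted by $p$ is $\Oh(t_p(n))$; this follows from the standard convention that a machine running in time $t_p(n)$ writes at most $\Oh(t_p(n))$ output symbols, which I would state explicitly. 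Once those conventions are pinned down, the arithmetic of combining the three overhead terms is routine.
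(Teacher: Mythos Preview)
Your proposal is correct and follows essentially the same approach as the paper's proof: invoke Lemma~\ref{lem:time-upper-bound} to bound the enumeration of $C(\L(x))$, use Lemma~\ref{lem:time-uparrow} to bound the per-output $\uparrow$-membership test, multiply the per-iteration cost by the $\Oh(s_\L(n))$ iterations, and combine. If anything, your version is more careful than the paper's, which is quite terse; your explicit justification that $|p(x\project\ell)|$ is $\Oh(t_p(n))$ and your absorption of the lower-order terms are exactly the bookkeeping the paper leaves implicit.
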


\section{Through the Looking Glass}
\label{sec:lookingGlass} 

If we find that the lattice used by some application is causing unacceptable
performance overheads, how do we switch to a different lattice?
The trick is Galois connections \cite{blyth2005lattices}.

\subsection{Galois Connections}

To understand how Galois connections relate to information flow
control, consider translating data labeled in one lattice
$\L$ to another lattice $\L'$.
To be secure, this needs to be done with a monotonic function $F : \L \to \L'$
that translates each label in $\L$ to a new label in $\L'$.
Suppose additionally that we want to securely back-translate labels from $\L'$
to $\L$ using a function $G : \L' \to \L$.
The goal is then to find $F$ and $G$ such that ``re-labeling'' using $F$ and
``back-labeling'' using $G$ composes to a secure function.

For example, consider the re-labeling function $F$ defined as:
$$
F(\ell) = \text{if}\ \ell \canFlowTo \{\Alice, \Bob\}\ \text{then}\ \bot\ \text{else}\ \top
$$
between the powerset and two-point lattices.
To find a reasonable back-labeling for $F$, consider using $F$ to translate
the labels in the following set:
$$
\{1^\Alice, 2^\Bob, 3^\Charlie, 4^\Dave\} \mapsto_F \{1^\bot, 2^\bot, 3^\top, 4^\top\}
$$
How ought we back-translate this set from the two-point lattice to the powerset lattice?
Back-translating $\bot$ to either $\Alice$ or $\Bob$ would be wrong, as doing so would leak
the value from one to the other.
Likewise, back-translating $\top$ to either $\Charlie$ or $\Dave$ would be
wrong for the same reason.
Fortunately, the following back-translation works well:
$$
G(\ell) = \text{if}\ \ell = \top\ \text{then}\ \top\ \text{else}\ \{\Alice, \Bob\}
$$
The round-trip we get is then:
\begin{align*}
  &\{1^\Alice, 2^\Bob, 3^\Charlie, 4^\Dave\} \mapsto_{F}\\
  &\{1^\bot, 2^\bot, 3^\top, 4^\top\} \mapsto_{G}\\
  &\{1^{\{\Alice,\Bob\}}, 2^{\{\Alice, \Bob\}}, 3^\top, 4^\top\}
\end{align*}

It turns out that if $F$ preserves lower bounds, formally that $F(\bigsqcup L)
= \bigsqcup\{ F(\ell) | \ell \in L \}$, then there is $G$ that is uniquely
determined by $F$ defined as:
$$
G(\jmath) = \bigsqcup \{ \ell \in \L | F(\ell) \canFlowTo_{\L'} \jmath \}
$$
Here we say that $F$ and $G$ form a \emph{Galois connection}.

The usual formal definition of Galois connections is the following.
\begin{definition}
  \label{def:galois}
  A Galois connection $F \dashv G$ between $\L$ and $\L'$ is a pair
  $F : \L \to \L'$ and $G : \L' \to \L$ of functions such that:
  $$F(\ell) \canFlowTo_{\L'} \jmath\ \iff\ \ell \canFlowTo_\L G(\jmath)$$
\end{definition}

Galois connections have a number of useful theoretical properties.
For example, given a Galois connection $F \dashv G$, $G\circ F$ is a
\emph{closure operator}, meaning that:
$$
\ell \canFlowTo G(F(\jmath))\ \iff\ G(F(\ell)) \canFlowTo G(F(\jmath))
$$
Furthermore, two Galois connections $F \dashv G$ between $\L$ and $\L'$ and $F' \dashv G'$ between
$\L'$ and $\L''$ compose to form a Galois connection $(F' \circ F) \dashv (G \circ G')$ between
$\L$ and $\L''$.
The list goes on and the interested reader is encouraged to explore these
structures at their leisure using a textbook of their choice (for example \cite{blyth2005lattices}).

\begin{figure}
  \center
  \begin{tikzcd}
    [execute at end picture={
      \draw[draw=black, rounded corners, thick, dashed]
        (A.north west) -- (AB.north west) -- (AB.north east) -- (B.north east) -- (bot.south east) -- (bot.south west) -- (A.south west) -- cycle;
      \draw[draw=black, rounded corners, thick, dashed]
        (AC.south west) -- (ABC.north west) -- (ABC.north east) -- (BC.north east) -- (C.north east) -- (C.south east) -- (C.south) -- cycle;
      \draw[draw=black, rounded corners, thick, dashed]
        (ABC'.north) -| (ABC'.east) |- (ABC'.south) -| (ABC'.west) |- (ABC'.north);
      \draw[draw=black, rounded corners, thick, dashed]
        (AB'.north) -| (AB'.east) |- (AB'.south) -| (AB'.west) |- (AB'.north);
    }]
    & & |[alias=ABC]|ABC \arrow[ld, no head] \arrow[d, no head] \arrow[rd, no head] & & & & |[alias=ABC']|ABC \arrow[thick, llll, "\unspecify_{AB}", bend right] \\
    & |[alias=AB]|AB & |[alias=AC]|AC \arrow[lld, no head] \arrow[rrd, no head] & |[alias=BC]|BC \arrow[ld, no head] \arrow[rd, no head] \arrow[thick, rrru, "\specify_{AB}", bend left] & & & \\
    |[alias=A]|A \arrow[ru, no head] & {} \arrow[thick, rrrrrd, "\specify_{AB}", bend right] & |[alias=B]|B \arrow[lu, no head] & & |[alias=C]|C &  & \\
    & & |[alias=bot]|\bot \arrow[llu, no head] \arrow[rru, no head] \arrow[u, no head] & &  & & |[alias=AB']|AB \arrow[uuu, no head] \arrow[thick, llllluu, "\unspecify_{AB}", bend left=60]
  \end{tikzcd}
  \caption{\label{fig:galois-select-unselect} A graphical representation of the $\specify_{AB}\dashv\unspecify_{AB}$ Galois connection}
\end{figure}
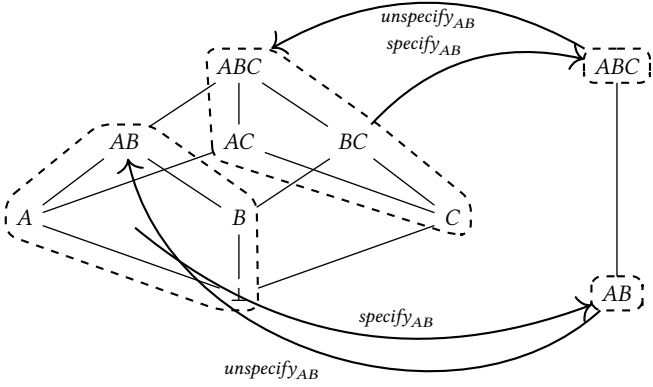

\begin{example}
  \label{ex:galois}
  We can derive a Galois connection between the DC-labels lattice over a set
  $A$ of principals and the $k$-truncated powerset lattice $\Pow_k(A)$.
  The idea is that we map the DC-label $\ell$ to the label of
  at most $k$ principals who either have confidentiality concerns registered in
  or who can vouch for data labeled by $\ell$.
  For example, the label:
  $$(\Alice, \Alice \wedge (\Bob \vee \Charlie))$$
  labels data with $\Alice$'s confidentiality that either $\Alice$ alone or both $\Bob$ and $\Charlie$
  together can vouch for and this label maps to the label $\{\Alice, \Bob, \Charlie\}$ in $\Pow_k(A)$ if $k \ge 3$.
  The label $\{\Alice, \Bob, \Charlie\}$ meanwhile, maps back to the DC-label:
  $$(\Alice \wedge \Bob \wedge \Charlie, \Alice \vee \Bob \vee \Charlie).$$
  The outline of the Galois connection is:
  $$\texttt{DCLabels} \leftrightarrow \Pow(A)\times\Pow(A) \leftrightarrow \Pow(A) \leftrightarrow \Pow_k(A)$$
  The first step maps a DC label to the collection of principals in the label:
  \begin{align*}
    &(\Alice, \Alice \wedge (\Bob \vee \Charlie)) \mapsto\\
    &\;\;\;(\{\Alice\}, \{\Alice, \Bob, \Charlie\})
  \end{align*}
  The second step unites the confidentiality and integrity principals:
  $$
  (\{\Alice\}, \{\Alice, \Bob, \Charlie\}) \mapsto \{\Alice, \Bob, \Charlie\}
  $$
  Finally, we map $\Pow(A)$ to $\Pow_k(A)$ by a Galois connection we call $\truncate_k \dashv \embed$:
  \begin{align*}
    \truncate_k(S) &= \text{if}\ |S| \le k\ \text{then}\ S\ \text{else}\ \top\\
    \embed(S)      &= S
  \end{align*}
  We refer to the $\truncate_k \dashv \embed$ Galois connection as $t\dashv e$ when $k$ is clear from
  the context.
  In our example where $k =3$, truncating the set does nothing.
  
  The chain going back to DC-labels is similar:
  \begin{align*}
    &\{\Alice, \Bob, \Charlie\} \mapsto\\
    &\;\;\;(\{\Alice, \Bob, \Charlie\}, \{\Alice, \Bob, \Charlie\}) \mapsto\\
    &\;\;\;\;\;\;(\Alice \wedge \Bob \wedge \Charlie, \Alice \vee \Bob \vee \Charlie)
  \end{align*}

  If there are more than $k$ principals, like the label
  $\ell = (\Alice \wedge \Bob, \Charlie \vee \Dave)$ for $k = 3$, we lose information when
  going all the way from DC-labels to $\Pow_k(A)$.
  Specifically, $\ell$ maps to $\top$ in $\Pow_k(A)$, which maps back to $\top$.
\end{example}

Next we use the insight that Galois connections can collapse large lattices
into smaller ones by defining a variant of the $\MEF$ enforcement mechanism.
If $F : A \to B$ and $S \subseteq A$ let $F^*(S) = \{\ F(s)\ |\ s \in S\ \}$.
\begin{definition}
  Given a Galois connection $F \dashv G$ between $\L$ and $\L'$,
  we define:
  \begin{alignat*}{2}
    C_{F\dashv G}(S)\       &=\ \mathrlap{G^*(C(F^*(S)))}\\
    \MEF^{F\dashv G}[p](x)\ &=\ \bigcup&&\{\ p(x\project\ell)@(\ell\uparrow C_{F\dashv G}(\L(x)))\\
                            &          &&|\ \ell \in C_{F\dashv G}(\L(x))\ \}
  \end{alignat*}
\end{definition}
Note that $\MEF^{F \dashv G}$ is essentially the same as $\MEF$, except that instead of enumerating
the labels in $C(\L(x))$ we enumerate the labels in $C_{F\dashv G}(\L(x))$.
This means that we only take the closure over the labels in the \emph{target
lattice} of the $F \dashv G$ Galois connection.

Recall the $\goodSum_L$ program from Example \ref{ex:runnings}:
$$
\goodSum_L(x) \defAs \{ (\sum \{\ a\ |\ a^\ell \in x, \ell \in L\ \})^{\bigsqcup L}\ \}
$$
This program is noninterfering and only produces output at level $\bigsqcup L$.
By providing a Galois connection between $\Pow(A)$ and the two-point lattice $\{\bot, \top\}$
we can optimize $\MEF[\goodSum_L]$.
The Galois connection in question, which we call $\specify_\ell \dashv \unspecify_\ell$,
is outlined in Figure \ref{fig:galois-select-unselect} for $\ell = AB$ and is defined as:
\begin{align*}
  \specify_\ell(\jmath) &\defAs\ \text{if}\ \jmath \canFlowTo \ell\ \text{then}\ \bot\ \text{else}\ \top\\
  \unspecify_\ell(\jmath) &\defAs\ \text{if}\ \jmath = \top\ \text{then}\ \top\ \text{else}\ \ell
\end{align*}
The reader is free to verify that these two form a Galois connection.
In the rest of the paper, we abbreviate $\specify_\ell \dashv
\unspecify_\ell$ as $s\dashv u$ when $\ell$ is clear from context.
Now, note that $\bigsqcup L \in C_{\specify_{\bigsqcup L} \dashv
\unspecify_{\bigsqcup L}}(\L(x))$ for all $x$, and so
$$\MEF^{s\dashv u}[\goodSum_L](x) = \MEF[\goodSum_L](x) = \goodSum_L(x)$$
as $\goodSum_L(x\project\bigsqcup L)$ will always be executed by $\MEF^{s\dashv
u}$ and its result included in the final output.
While $\MEF$ needs to run $\Oh(2^n)$ executions of $\goodSum_L$, $\MEF^{s\dashv
u}$ needs one execution of $\goodSum_L$ in the best case, and two in the worst,
as the size of $C_{F\dashv G}(\L(x))$ scales as the target lattice, which in
this case has only two elements.
Specifically, if $\bigsqcup \L(x) \canFlowTo \bigsqcup L$ then $C_{s\dashv
u}(\L(x)) = \{\bigsqcup L\}$ and so there will only be one execution of
$\goodSum_L$, while if there is some $\ell \in \L(x)$ such that $\ell
\not\canFlowTo \bigsqcup L$ we consequently have that $C_{s\dashv u}(\L(x)) =
\{\bigsqcup L, \top\}$ and so we get two executions of $\goodSum_L$.

For a practical example, consider the COWL system \cite{stefan:2014:protecting}
that provides an IFC framework for the web.
One application of COWL is so-called ``mashups'', sites that include content
from many different websites and present it in aggregate.
For example, a mashup can collect and compare price information to display
purchase recommendations from multiple online retailers to the user.
In this mashup, individual retailers need their own security label to manage
sensitive data, and the mashup needs a label that collects data to provide the
price recommendation.

In other words, the mashup site needs the discrete lattice $\discrete(S)$ where
$S$ is the set of sites.
If we use a general label and container system like COWL and attempt to apply
multi-execution to this example, we would need a Galois connection between
DC-labels, the native labels in COWL, and $\discrete(S)$.
Fortunately, $\discrete(S)$ is isomorphic to $\Pow_1(S)$, and so the Galois connection
from Example \ref{ex:galois} with $k = 1$ is sufficient.
This brings the number of executions of our hypothetical site down from
$\Oh(2^n)$ for the DC-labels lattice, to $\Oh(n)$ in the $\Pow_1(S)$ lattice.

We can establish noninterference for $\MEF^{F\dashv G}$.
\begin{restatable}{Theorem}{theoremgaloisnoninterference}
  $\MEF^{F \dashv G}[p]$ is noninterfering.
\end{restatable}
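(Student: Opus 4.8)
The plan is to reduce noninterference of $\MEF^{F \dashv G}[p]$ to noninterference of $\MEF[p]$ (the original Security Theorem), by showing that the only structural property of $C(\L(x))$ that the original proof actually uses is also enjoyed by $C_{F\dashv G}(\L(x))$. Concretely, inspecting the definition of $\MEF$, the key facts that make $\MEF[p]$ secure are: (a) the set of labels looped over is \emph{projection-closed} for $x$, in the sense that for every $\ell \in \L$ there is some $\ell'$ in the set with $x\project\ell' = x\project\ell$ (so every ``view'' is represented); and (b) the $\uparrow$ construction assigns each label $\jmath$ to a run $\ell$ with $x\project\jmath = x\project\ell$, so that no output leaks between inequivalent views. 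So the heart of the argument is the following lemma: for any $x$ and any $\ell \in \L$, there exists $\ell' \in C_{F\dashv G}(\L(x))$ with $x\project\ell' = x\project\ell$.

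To prove that lemma, I would use the Galois connection machinery from the excerpt. Given $\ell$, consider $G(F(\ell))$ — or more precisely, the element $\ell' = \bigsqcup\{\, \jmath \in C_{F\dashv G}(\L(x)) \mid \jmath \canFlowTo \ell \,\}$. The two things to check are that (i) $\ell' \in C_{F\dashv G}(\L(x))$, and (ii) $x\project\ell' = x\project\ell$. For (i), recall $C_{F\dashv G}(\L(x)) = G^*(C(F^*(\L(x))))$; since $G$ preserves meets/arbitrary joins of its image appropriately (as the right adjoint, $G$ is a closure-like map and $C_{F\dashv G}(S)$ is closed under the joins that matter), one shows the join of a down-selected subfamily lands back in the set — this is exactly the analogue of Lemma~\ref{lem:downset-decide} applied in the target lattice and transported back through $G$. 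For (ii), the inclusion $x\project\ell' \subseteq x\project\ell$ is immediate since every contributing $\jmath$ satisfies $\jmath\canFlowTo\ell$. For the reverse inclusion, take $a^{\kappa}\in x$ with $\kappa\canFlowTo\ell$; then $\kappa \in \L(x)$, so $G(F(\kappa)) \in C_{F\dashv G}(\L(x))$, and by the Galois connection $\kappa \canFlowTo G(F(\kappa))$ while $G(F(\kappa)) \canFlowTo \ell$ follows from $\kappa\canFlowTo\ell$ together with the adjunction $F(\kappa)\canFlowTo F(\ell)$ and unfolding $G(F(\ell))$. Hence $G(F(\kappa))$ is one of the $\jmath$'s in the join defining $\ell'$, so $\kappa\canFlowTo\ell'$ and $a^\kappa \in x\project\ell'$.

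With this lemma in hand, the rest mirrors the original Security Theorem line for line: given $x \sim_\ell y$ with both $\MEF^{F\dashv G}[p](x)$ and $\MEF^{F\dashv G}[p](y)$ defined, pick $\ell' \in C_{F\dashv G}(\L(x))$ with $x\project\ell' = x\project\ell$; the relevant runs $p(x\project\ell')$ and $p(y\project\ell')$ are over equal projections (using $x\sim_\ell y$ and transitivity of the equivalence), so they produce identical output, and the $\uparrow$-selection guarantees that whatever is emitted at labels below $\ell$ comes from exactly such runs. One also needs the symmetric statement that $C_{F\dashv G}(\L(x))$ and $C_{F\dashv G}(\L(y))$ select the same set of labels ``visible at $\ell$'' — this again follows from $x\project\ell = y\project\ell$, since $\L(x\project\ell) = \L(y\project\ell)$ and the construction of $C_{F\dashv G}$ only depends on labels at or below $\ell$ once we restrict attention to $\uparrow$-sets of labels $\canFlowTo\ell$.

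The main obstacle I anticipate is part (i) of the lemma — showing that down-selected joins stay inside $C_{F\dashv G}(\L(x)) = G^*(C(F^*(\L(x))))$. Unlike $C(S)$, which is literally closed under all joins of its members, the image $G^*(\text{-})$ need not be join-closed in general, so I cannot simply say ``the join of elements of $C_{F\dashv G}(\L(x))$ is again in $C_{F\dashv G}(\L(x))$.'' The fix is to do the join computation on the $F$-side: map the relevant $\jmath$'s forward, take their join there (which \emph{is} in $C(F^*(\L(x)))$), apply $G$, and then argue via the unit/counit inequalities $\ell \canFlowTo G(F(\ell))$ and $F(G(\jmath))\canFlowTo\jmath$ that the resulting element has the same projection behaviour as the naive join. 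Getting this transport argument exactly right — in particular checking that $G$ applied to the $F$-side join really does realize the view $x\project\ell$ and not something coarser — is the delicate step; everything else is bookkeeping against the original proof.
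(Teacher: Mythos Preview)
Your central lemma --- that for every $\ell\in\L$ there is some $\ell'\in C_{F\dashv G}(\L(x))$ with $x\project\ell' = x\project\ell$ --- is \emph{false}, and the argument you give for it contains the error. In the ``reverse inclusion'' step you take $\kappa\in\L(x)$ with $\kappa\canFlowTo\ell$ and claim $G(F(\kappa))\canFlowTo\ell$, but the Galois unit goes the other way: you only get $\kappa\canFlowTo G(F(\kappa))$, and from $\kappa\canFlowTo\ell$ plus monotonicity you get $G(F(\kappa))\canFlowTo G(F(\ell))$, which is \emph{above} $\ell$, not below it. Concretely, with $\specify_{\{A,B\}}\dashv\unspecify_{\{A,B\}}$ and $x=\{1^{\Alice},2^{\Bob}\}$, $\ell=\Alice$, every element of $C_{F\dashv G}(\L(x))$ that lies below $\Alice$ already lies below $\{A,B\}$, and the smallest such element with the view of $\Alice$ simply does not exist: any $\ell'\in C_{F\dashv G}(\L(x))$ with $\Alice\canFlowTo\ell'$ also has $\Bob\canFlowTo\ell'$, so $x\project\ell'\neq x\project\Alice$. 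The property you are trying to establish is exactly what makes $\MEF^{F\dashv G}$ \emph{fail} transparency in general; it cannot hold for arbitrary Galois connections.

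The paper's proof does not need this property. It proves security directly: the key lemma is that $L\sim_\ell L'$ implies $C_{F\dashv G}(L)\sim_\ell C_{F\dashv G}(L')$ (because if $\jmath=G(\bigsqcup F^*(S))\canFlowTo\ell$ for $S\subseteq L$, then every $\imath\in S$ satisfies $\imath\canFlowTo G(F(\imath))\canFlowTo\jmath\canFlowTo\ell$, so $S\subseteq L\project\ell=L'\project\ell$). With that in hand one argues elementwise: any $a^\jmath\in\MEF^{F\dashv G}[p](x)$ with $\jmath\canFlowTo\ell$ came from a run at some $\iota\in C_{F\dashv G}(\L(x))$ with $\iota\canFlowTo\jmath\canFlowTo\ell$; since $x\sim_\ell y$, that same $\iota$ lies in $C_{F\dashv G}(\L(y))$, the projections $x\project\iota=y\project\iota$ agree, and $\jmath$ lands in the same up-set on the $y$ side. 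No ``every view is represented'' claim is required --- security only needs that the run producing a $\jmath$-labelled output sits below $\jmath$, which is immediate from the definition of $\uparrow$.
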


Next we tackle transparency.
The Galois connection changes the behaviour of $\MEF^{F\dashv G}$
and there are Galois connections for which $\MEF^{F\dashv G}[p](x) \not= p(x)$ even
for noninterfering $p$.
In fact, we encourage the reader to come up with an example of such a Galois
connection.\footnote{Hint: what happens if the target lattice has only a single
element?}
However, if a Galois connection accurately captures the behaviour of a
noninterfering program $p$, such as the case in the example of $\goodSum_L$
with $s \dashv u$ above, then we expect that the semantics of $p$ is preserved
by $\MEF^{F\dashv G}$.
\begin{restatable}{Theorem}{theoremgaloistransparent}
  \label{thm:galois-transparency}
  If $p$ is noninterfering and $\jmath \in (G\circ F)^*(\L)$
  then:
  $$\MEF^{F\dashv G}[p](x)@\{\jmath\} = p(x)@\{\jmath\}$$
\end{restatable}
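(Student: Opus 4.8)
The plan is to show the two sets $\MEF^{F\dashv G}[p](x)@\{\jmath\}$ and $p(x)@\{\jmath\}$ are equal by exhibiting, for the label $\jmath = G(F(\jmath_0))$, a single execution inside $\MEF^{F\dashv G}$ whose output at $\jmath$ is exactly $p(x)@\{\jmath\}$, and then arguing noninterference of $p$ forces every execution that could contribute output labeled $\jmath$ to agree with it. First I would unfold the definition: $\MEF^{F\dashv G}[p](x) = \bigcup\{\, p(x\project\ell)@(\ell\uparrow C_{F\dashv G}(\L(x))) \mid \ell \in C_{F\dashv G}(\L(x))\,\}$, so $\MEF^{F\dashv G}[p](x)@\{\jmath\}$ collects, over all $\ell \in C_{F\dashv G}(\L(x))$ with $\jmath \in \ell\uparrow C_{F\dashv G}(\L(x))$, the values $a$ such that $a^\jmath \in p(x\project\ell)$.

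The key structural fact I would establish is that $\jmath \in C_{F\dashv G}(\L(x))$ whenever $\jmath \in (G\circ F)^*(\L)$. Write $\jmath = G(F(\jmath_0))$. Since $G$ preserves least upper bounds (it is the right adjoint of a Galois connection, hence preserves all existing joins, or more simply $G^*$ of a set always contains the image of its join's closure --- I would lean on Lemma~\ref{lem:downset-decide} style reasoning here), the element $G(\bigsqcup\{F(\ell') : \ell' \in \L(x), F(\ell') \canFlowTo F(\jmath_0)\})$ lands in $C_{F\dashv G}(\L(x)) = G^*(C(F^*(\L(x))))$, and I would show this element equals $\jmath = G(F(\jmath_0))$ by a join argument using the Galois adjunction $F(\ell') \canFlowTo F(\jmath_0) \iff \ell' \canFlowTo G(F(\jmath_0))$. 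Thus $\jmath$ itself is one of the labels $\ell$ over which the big union ranges, and $\jmath \in \jmath\uparrow C_{F\dashv G}(\L(x))$ trivially (reflexivity), so the execution $p(x\project\jmath)$ contributes all of $p(x\project\jmath)@\{\jmath\}$ to the output. The final piece is that $p(x\project\jmath)@\{\jmath\} = p(x)@\{\jmath\}$: this follows from noninterference of $p$ since $x \sim_\jmath x\project\jmath$ (both have the same $\jmath$-projection because $(x\project\jmath)\project\jmath = x\project\jmath$).

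For the reverse inclusion I would take any $\ell \in C_{F\dashv G}(\L(x))$ with $\jmath \in \ell\uparrow C_{F\dashv G}(\L(x))$ and show the contribution $p(x\project\ell)@\{\jmath\}$ is already contained in $p(x)@\{\jmath\}$. The definition of $\ell\uparrow S$ says every $\iota \in S$ (in particular, here, we want it for the $\iota$ relevant to $\L(x)$) with $\iota \canFlowTo \jmath$ satisfies $\iota \canFlowTo \ell$; the intended reading, as the paper notes, is $x\project\jmath = x\project\ell$, so $x\project\ell \sim_\jmath x$, and noninterference of $p$ gives $p(x\project\ell) \sim_\jmath p(x)$ (when both defined), hence their $\jmath$-labeled outputs coincide. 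I would need to be careful that the ``$x\project\ell = x\project\jmath$'' reading holds for the $C_{F\dashv G}$ version of the up-set --- this requires $\L(x) \subseteq$ (something whose closure under the connection behaves well), and is the analogue of the reasoning behind Theorem~\ref{thm:MEF-transparent}, so I would adapt that argument. The main obstacle I anticipate is precisely this last point: verifying that the up-set condition relative to $C_{F\dashv G}(\L(x))$ --- rather than the full $C(\L(x))$ --- still guarantees $x\project\ell = x\project\jmath$, since $C_{F\dashv G}$ may omit labels present in $C(\L(x))$, potentially making the up-set too large and admitting an $\ell$ with $x\project\ell \neq x\project\jmath$. Resolving this likely needs the Galois property that $G\circ F$ is a closure operator together with the fact that $\jmath$ is $(G\circ F)$-closed, which pins down $\ell$ enough to recover the projection equality; handling the partiality side-condition (that $p(x\project\ell)$ and $p(x)$ are both defined exactly when their outputs are nonempty) is a routine but necessary bookkeeping step.
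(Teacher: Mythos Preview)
Your overall plan---unfold the definition, use noninterference to replace $p(x\project\ell)$ by $p(x)$ at the relevant label, and exhibit a single execution in $C_{F\dashv G}(\L(x))$ that covers $\jmath$---is exactly the paper's strategy. But your central structural claim is wrong: it is \emph{not} generally true that $\jmath \in C_{F\dashv G}(\L(x))$. Take $F = \specify_{\ell_0}$, $G = \unspecify_{\ell_0}$ between $\Pow(A)$ and the two-point lattice, let $\jmath = \top = G(F(\top))$, and let $x$ contain only labels below $\ell_0$. Then $F^*(\L(x)) \subseteq \{\bot\}$, so $C_{F\dashv G}(\L(x)) = \{\ell_0\}$ and $\top \notin C_{F\dashv G}(\L(x))$. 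Your ``join argument'' cannot save this: the element you construct, $G(\bigsqcup\{F(\ell') : \ell' \in \L(x),\ F(\ell') \canFlowTo F(\jmath_0)\})$, does lie in $C_{F\dashv G}(\L(x))$, but it equals $\ell_0$, not $\jmath$. (Incidentally, it is the \emph{left} adjoint $F$ that preserves joins, not $G$.)

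The repair is small and is precisely what the paper does in Lemma~\ref{lem:C-galois-covers-C}: do not insist that the constructed element equals $\jmath$; instead call it $\ell$ and show $\jmath \in \ell \uparrow C_{F\dashv G}(\L(x))$. Your element $\ell = G(\bigsqcup F^*(\L(x)\project\jmath))$ is the right one; one checks $\ell \canFlowTo \jmath$ (monotonicity plus $G(F(\jmath)) = \jmath$) and that any $\iota \in C_{F\dashv G}(\L(x))$ with $\iota \canFlowTo \jmath$ satisfies $\iota \canFlowTo \ell$. With that in hand, your two inclusions go through: $p(x\project\ell)@\{\jmath\} = p(x)@\{\jmath\}$ by noninterference and $x\project\ell = x\project\jmath$, and for the reverse inclusion your worry about the $C_{F\dashv G}$-up-set being ``too large'' is resolved exactly as you guessed, using that $\jmath$ is $(G\circ F)$-closed: for any $\ell' \in \L(x)$ with $\ell' \canFlowTo \jmath$ one has $G(F(\ell')) \in C_{F\dashv G}(\L(x))$ and $G(F(\ell')) \canFlowTo G(F(\jmath)) = \jmath$, so the up-set condition forces $\ell' \canFlowTo G(F(\ell')) \canFlowTo \ell$.
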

The second precondition of Theorem \ref{thm:galois-transparency} can be read as
stating a condition on $F$ and $G$ relative to $p$.
In effect, it says that a \emph{transparent} Galois connection $F \dashv G$ for the
program $p$ is such that $\forall x.\ \L(p(x)) \subseteq (G\circ F)^*(\L)$.
Consider the noninterfering example programs in Example \ref{ex:runnings},
we see that $\specify_{\bigsqcup L} \dashv \unspecify_{\bigsqcup L}$ is a transparent 
Galois connection for $\goodSum_L$.

However, $\truncate_2 \dashv \embed$ is not a transparent Galois connection for $\pairwise$!
To understand why, consider that:
$$1^{\{\Alice, \Bob, \Charlie\}} \in \pairwise(\{0^{\{\Alice, \Bob\}}, 1^{\{\Charlie\}}\})$$
But $\{\Alice, \Bob, \Charlie\} \not\in (\embed \circ \truncate_2)^*(\Pow(A))$
and consequently $\truncate_2 \dashv \embed$ misses this output of $\pairwise$.
In other words, if there are elements of the input to $\pairwise$ where the labels have
more than one element, this label gets forgotten by the $\truncate_2 \dashv \embed$
Galois connection.
However, this does not preclude this Galois connection from being useful.
For example, we can multi-execute $\pairwise$ on the input $x$ with the
$\truncate_k \dashv \embed$ if $k \ge |\ell \lub \jmath|$ for all pairs of
labels $\ell, \jmath$ in $\L(x)$.

\subsection{Execution Time of $\MEF^{F\dashv G}$}

The factors influencing the execution time of $\MEF$ discussed in Section
\ref{sec:fastAndSlow} are also present for $\MEF^{F\dashv G}$.
Firstly, if $F \dashv G$ goes between $\L$ and $\L'$, the size
of $C_{F\dashv G}(\L(x))$ and so the number of executions of $p$ grows as $\CS{\L'}$.
In other words, if $\L'$ is $\Oh(f(n))$ then the number of executions of $p$
is too.

Secondly, following Lemma \ref{lem:time-uparrow}, next we give a similar
lemma relating to computing membership of $\ell \uparrow C_{F\dashv
G}(\L(x))$ that gives us polynomial time algorithm for this piece of overhead
as well.
\begin{restatable}{Lemma}{theoremgaloistime}
  \label{lem:uptime-galois}
  Given $\ell \in L$ and assuming $F \dashv G$ such that both $F$ and $G$ are
  constant time it is possible to compute $\jmath \in G(F(\ell)) \uparrow C_{F\dashv G}(L)$
  in $\Oh(|L|)$ time by computing:
  $$
    G(F(\ell)) \canFlowTo \jmath \wedge\ \forall \iota \in F^*(L).\ G(\iota) \canFlowTo \jmath \implies G(\iota) \canFlowTo G(F(\ell))
  $$
\end{restatable}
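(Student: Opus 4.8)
The plan is to mirror the proof of Lemma~\ref{lem:time-uparrow}, transporting the argument across the Galois connection. First I would recall the definition of the modified closure set, $C_{F\dashv G}(L) = G^*(C(F^*(L)))$, so that its elements all have the form $G(\iota)$ for some $\iota \in C(F^*(L))$, i.e.\ $\iota = \bigsqcup L'$ for some $L' \subseteq F^*(L)$. The target is to characterise membership in $G(F(\ell)) \uparrow C_{F\dashv G}(L)$ without enumerating $C_{F\dashv G}(L)$, exactly as Lemma~\ref{lem:time-uparrow} did for $C(L)$. Unfolding the definition of $\uparrow$ from Section~\ref{\LabelPrefix:sec:prelim}, we have $\jmath \in G(F(\ell)) \uparrow C_{F\dashv G}(L)$ iff $G(F(\ell)) \canFlowTo \jmath$ and for every $\kappa \in C_{F\dashv G}(L)$, $\kappa \canFlowTo \jmath$ implies $\kappa \canFlowTo G(F(\ell))$.

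The key step is to show that quantifying over all $\kappa \in C_{F\dashv G}(L)$ can be replaced by quantifying over just the ``generators'' $G(\iota)$ for $\iota \in F^*(L)$, and that this in turn is the condition stated in the lemma. The forward direction ($\jmath \in G(F(\ell))\uparrow C_{F\dashv G}(L)$ implies the displayed condition) is easy: for $\iota \in F^*(L)$ we have $\{\iota\} \subseteq F^*(L)$, so $\bigsqcup\{\iota\} = \iota \in C(F^*(L))$, hence $G(\iota) \in C_{F\dashv G}(L)$, and instantiating the universal gives $G(\iota) \canFlowTo \jmath \implies G(\iota) \canFlowTo G(F(\ell))$. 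For the reverse direction I would take an arbitrary $\kappa = G(\bigsqcup L') \in C_{F\dashv G}(L)$ with $L' \subseteq F^*(L)$ and $\kappa \canFlowTo \jmath$, and argue $\kappa \canFlowTo G(F(\ell))$. Here I expect to use the defining adjunction $G(\bigsqcup L') \canFlowTo \jmath$ does not immediately decompose, so instead I would observe that each $\iota \in L'$ satisfies $\iota \canFlowTo \bigsqcup L'$, hence $G(\iota) \canFlowTo G(\bigsqcup L') = \kappa \canFlowTo \jmath$ by monotonicity of $G$; the displayed hypothesis then gives $G(\iota) \canFlowTo G(F(\ell))$ for each $\iota \in L'$. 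It remains to lift this from the individual $G(\iota)$ to $G(\bigsqcup L')$, which is where the Galois structure does real work: since $G(\iota) \canFlowTo G(F(\ell))$ for all $\iota \in L'$, $G(F(\ell))$ is an upper bound for $\{G(\iota) : \iota \in L'\}$, so $\bigsqcup_{\iota \in L'} G(\iota) \canFlowTo G(F(\ell))$; and because $G$ (as the right adjoint) preserves meets but not joins, I would instead use the adjunction directly: $G(\iota) \canFlowTo G(F(\ell))$ iff (by $F\dashv G$ applied with $\ell' := G(\iota)$, or more cleanly) $\iota \canFlowTo F(G(F(\ell)))$; since $F(G(F(\ell))) = F(\ell)$ (a standard Galois identity, $F\circ G\circ F = F$), this says $\iota \canFlowTo F(\ell)$ for all $\iota \in L'$, hence $\bigsqcup L' \canFlowTo F(\ell)$, and applying $G$ monotonically gives $\kappa = G(\bigsqcup L') \canFlowTo G(F(\ell))$, as required.

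Finally I would note the complexity bound: checking $G(F(\ell)) \canFlowTo \jmath$ is constant time given that $F$, $G$, and $\canFlowTo$ are constant time, and the universal quantifier ranges over $F^*(L)$, which has at most $|L|$ elements, each check costing constant time --- so the whole predicate is decidable in $\Oh(|L|)$ time. The main obstacle I anticipate is the lifting step in the reverse direction: getting from ``$G(\iota) \canFlowTo G(F(\ell))$ for each generator $\iota$'' to ``$G(\bigsqcup L') \canFlowTo G(F(\ell))$''. The naive move fails because $G$ need not preserve joins, so the argument must route through the adjunction and the identity $F\circ G\circ F = F$ rather than pushing $G$ through $\bigsqcup$ directly; I would make sure to state and briefly justify $F(G(F(\ell))) = F(\ell)$ (it follows from $F$ being monotone, $G(F(\ell))$ being the largest $\ell'$ with $F(\ell') \canFlowTo F(\ell)$, hence $\ell \canFlowTo G(F(\ell))$ giving $F(\ell) \canFlowTo F(G(F(\ell)))$, and $F(G(\jmath)) \canFlowTo \jmath$ with $\jmath := F(\ell)$ giving the reverse inequality). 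Everything else is a routine transcription of the Lemma~\ref{lem:time-uparrow} proof.
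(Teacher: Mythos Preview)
Your proposal is correct and mirrors the paper's proof closely: both reduce the universal over $C_{F\dashv G}(L)$ to the generators $G(\iota)$ for $\iota \in F^*(L)$, and both handle the lifting step (where $G$ need not preserve joins) by routing through the adjunction and the identity $F\circ G\circ F = F$. One small remark: the equivalence you state, $G(\iota)\canFlowTo G(F(\ell))\iff \iota\canFlowTo F(G(F(\ell)))$, is not literally what the adjunction yields (it gives $F(G(\iota))\canFlowTo F(\ell)$), but since every $\iota\in L'\subseteq F^*(L)$ lies in the image of $F$ you have $F(G(\iota))=\iota$, and the desired conclusion $\iota\canFlowTo F(\ell)$ follows either way.
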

With this lemma, we have all the pieces we need to find the execution time
bound on $\MEF^{F\dashv G}[p](x)$.
It is essentially the same as the bound in Theorem \ref{thm:time-MEF}, where
the lattice $\L$ is the target lattice of the $F \dashv G$ Galois connection.

\begin{restatable}[Time Complexity of Multi-Execution]{Theorem}{theoremtimemefgalois}
  \label{thm:time-MEF-Galois}
  Assume:
  \begin{enumerate}
    \item That $F \vdash G$ is a Galois connection between $\L$ and $\L'$.
    \item That the lattice $\L'$ is $\Oh(s_\L(n))$.
    \item That $p(x)$ can be computed in $\Oh(t_p(|x|))$.
    \item A function $f$ that is computable in $\Oh(t_f(n))$ time and for all $S$,
      $C(S) \subseteq f(S)$ and $|f(S)|$ is $\Oh(s_\L(|S|)$ for the lattice $\L'$.
  \end{enumerate}
  Then the elements of $\MEF^{F\vdash G}[p](x)$ can be enumerated in time:
  $$\Oh(t_f(|x|) + s_\L(|x|)t_p(|x|)|x|)$$
\end{restatable}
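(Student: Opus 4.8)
The plan is to adapt the proof of Theorem~\ref{thm:time-MEF}, replacing the source lattice $\L$ throughout by the target lattice $\L'$ of the Galois connection $F \dashv G$, and using Lemma~\ref{lem:uptime-galois} wherever that proof invoked Lemma~\ref{lem:time-uparrow}. As there, the execution of $\MEF^{F\dashv G}[p](x)$ decomposes into three sources of overhead beyond the runs of $p$ themselves: (i) producing the set $C_{F\dashv G}(\L(x))$ that the computation loops over; (ii) forming $x\project\ell$ for each $\ell$ in that set; and (iii) for each such $\ell$, selecting from the output of $p(x\project\ell)$ those elements whose label lies in $\ell\uparrow C_{F\dashv G}(\L(x))$. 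Throughout I take the bounding functions non-decreasing, as is standard.

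For (i), unfold $C_{F\dashv G}(\L(x)) = G^*(C(F^*(\L(x))))$. Computing $S' \defAs F^*(\L(x))$ costs $\Oh(|x|)$ since $F$ is constant time and $|\L(x)| \le |x|$, and $|S'| \le |x|$. By hypothesis~(4), $f$ is exactly the upper-bounding function that Lemma~\ref{lem:time-upper-bound} requires for the lattice $\L'$, so $C(S')$ can be enumerated in time $\Oh\bigl(t_f(|S'|) + |f(S')|\,|S'|\bigr) = \Oh\bigl(t_f(|x|) + s_\L(|x|)\,|x|\bigr)$, and in particular $|C(S')|$ is $\Oh(s_\L(|x|))$. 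Applying $G$ to each element of $C(S')$ (constant time each) then yields $C_{F\dashv G}(\L(x))$ in a further $\Oh(s_\L(|x|))$ steps, and since $G^*$ cannot increase cardinality, $|C_{F\dashv G}(\L(x))|$ is itself $\Oh(s_\L(|x|))$. Piece (ii) is then immediate: one $\Oh(|x|)$ projection per loop iteration, hence $\Oh(s_\L(|x|)\,|x|)$ overall.

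For (iii), fix $\ell \in C_{F\dashv G}(\L(x))$. Running $p(x\project\ell)$ costs $\Oh(t_p(|x|))$ (using $|x\project\ell| \le |x|$) and produces an output of size $\Oh(t_p(|x|))$, so there are at most that many labels $\jmath$ for which membership in $\ell\uparrow C_{F\dashv G}(\L(x))$ must be decided. The key observation is that every $\ell$ occurring in the loop satisfies $G(F(\ell)) = \ell$: writing $\ell = G(\kappa)$ with $\kappa \in C(S')$, the standard Galois identity $G\circ F\circ G = G$ gives $G(F(\ell)) = G(F(G(\kappa))) = G(\kappa) = \ell$. Hence $\ell\uparrow C_{F\dashv G}(\L(x)) = G(F(\ell))\uparrow C_{F\dashv G}(\L(x))$, so Lemma~\ref{lem:uptime-galois} decides each such membership test in $\Oh(|\L(x)|) = \Oh(|x|)$ time. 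For a fixed $\ell$ the execution together with piece~(iii) thus costs $\Oh(t_p(|x|)\,|x|)$; summing over the $\Oh(s_\L(|x|))$ values of $\ell$ gives $\Oh(s_\L(|x|)\,t_p(|x|)\,|x|)$, and adding the cost of (i) and absorbing the $s_\L(|x|)\,|x|$ term (since $t_p(|x|) \ge 1$) yields the stated bound $\Oh\bigl(t_f(|x|) + s_\L(|x|)\,t_p(|x|)\,|x|\bigr)$. If a duplicate-free enumeration of the union is wanted, a final pass costs no more than the total already accounted for.

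The main obstacle I anticipate is the bookkeeping in the third step rather than any conceptual difficulty: Lemma~\ref{lem:uptime-galois} is phrased for $\ell$ drawn from the \emph{input} label set $L$, whereas the outer loop of $\MEF^{F\dashv G}$ ranges over $C_{F\dashv G}(\L(x))$, which is typically strictly larger. One therefore has to check that the lemma's decision procedure and its $\Oh(|L|)$ cost depend only on the fact $G(F(\ell)) = \ell$ and not on $\ell \in L$ per se --- equivalently, to restate Lemma~\ref{lem:uptime-galois} for arbitrary $\ell$ with $G(F(\ell)) = \ell$ and adjust its proof. Once the right target-lattice analogue of Theorem~\ref{thm:time-MEF} is set up, everything else is routine accounting of the kind already carried out for $\MEF$.
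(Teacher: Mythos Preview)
Your proposal is correct and follows essentially the same approach as the paper: both decompose the overhead into enumerating $C_{F\dashv G}(\L(x))$ via Lemma~\ref{lem:time-upper-bound} applied in $\L'$, and deciding up-set membership via Lemma~\ref{lem:uptime-galois}, then sum the costs. Your writeup is in fact more careful than the paper's, which glosses over the point you flag about Lemma~\ref{lem:uptime-galois} being stated for $\ell \in L$; your observation that $G\circ F\circ G = G$ forces $G(F(\ell)) = \ell$ for every $\ell \in C_{F\dashv G}(\L(x))$ cleanly closes that gap.
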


\subsection{Finding Galois Connections}

Next we address the issue of finding the right Galois connection.
Specifically, we show that there is a specification for a \emph{most coarse
grained} Galois connection for each program.
This can be used to show that, for some programs, no transparent Galois connection can
reduce overhead while for other programs.
We do this by using the closure operator ($G \circ F$) of a Galois connection
$F \dashv G$.
Specifically, we obtain a Galois connection from any closure operator $k$.
\begin{definition}
  If $\L$ is a lattice we say that a function $k : \L \to \L$ is a \emph{closure
  operator} if and only if it satisfies:
  \begin{alignat*}{2}
    &\text{(1) Extensivity:}\;  &&\ell \canFlowTo k(\ell)\\
    &\text{(2) Monotonicity:}\; &&\ell \canFlowTo \jmath \implies k(\ell) \canFlowTo k(\jmath)\\
    &\text{(3) Idempotence:}\;  &&k(\ell) = k(k(\ell))
  \end{alignat*}
  Alternatively, $k$ is a closure operator if and only if
  $$\ell \canFlowTo k(\jmath)\ \iff\ k(\ell) \canFlowTo k(\jmath).$$
\end{definition}
\begin{Theorem}[From \cite{blyth2005lattices}]
  If $k$ is a closure operator then $\L\setminus k$ forms a lattice
  with equivalence classes up to $k$ for elements and the order inherited
  from $\L$.
\end{Theorem}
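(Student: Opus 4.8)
The claim to prove is a classical fact from lattice theory: if $k : \L \to \L$ is a closure operator, then the set of equivalence classes $\L \setminus k$ (equivalently, the set of $k$-closed elements, i.e.\ the image $k(\L)$) forms a lattice under the inherited order. I sketch the plan below.

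\medskip

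The plan is to work with the concrete representative set $\L_k \defAs \{\ \ell \in \L\ |\ k(\ell) = \ell\ \}$ of $k$-closed elements, rather than abstract equivalence classes, since the map $\ell \mapsto k(\ell)$ gives a canonical bijection between $\L/{\sim}$ and $\L_k$ (here $\ell \sim \jmath \iff k(\ell) = k(\jmath)$; idempotence guarantees each class contains exactly one fixed point, namely $k(\ell)$). First I would check that the order $\canFlowTo$ inherited from $\L$ is well-defined on equivalence classes and agrees with the restriction to $\L_k$: using the alternative characterisation $\ell \canFlowTo k(\jmath) \iff k(\ell) \canFlowTo k(\jmath)$, one sees that $k(\ell) \canFlowTo k(\jmath)$ depends only on the classes of $\ell$ and $\jmath$, and that $\canFlowTo$ restricted to $\L_k$ is reflexive, transitive, and antisymmetric (antisymmetry is inherited directly from $\L$ since $\L_k \subseteq \L$). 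Then I would exhibit the lattice structure explicitly: the bottom element of $\L_k$ is $k(\bot_\L)$, and the least upper bound of two closed elements $\ell, \jmath \in \L_k$ is $k(\ell \lub \jmath)$.

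\medskip

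The main steps are: (1) show $k(\ell \lub \jmath) \in \L_k$, which is immediate from idempotence; (2) show it is an \emph{upper bound} of $\ell$ and $\jmath$ in $\L_k$ --- we have $\ell \canFlowTo \ell \lub \jmath \canFlowTo k(\ell \lub \jmath)$ using extensivity, and symmetrically for $\jmath$; (3) show it is the \emph{least} such --- if $\iota \in \L_k$ with $\ell \canFlowTo \iota$ and $\jmath \canFlowTo \iota$, then $\ell \lub \jmath \canFlowTo \iota$ in $\L$, so by monotonicity $k(\ell \lub \jmath) \canFlowTo k(\iota) = \iota$, the last equality because $\iota$ is closed. For the bottom element, $k(\bot_\L)$ is closed by idempotence, and for any $\iota \in \L_k$ we have $\bot_\L \canFlowTo \iota$ so $k(\bot_\L) \canFlowTo k(\iota) = \iota$ by monotonicity and closedness. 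Translating these facts back through the bijection to equivalence classes is purely cosmetic.

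\medskip

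I do not expect any genuine obstacle here; the result is standard and every step follows mechanically from the three closure-operator axioms. The only point requiring a little care is the bookkeeping around whether to phrase everything in terms of equivalence classes (as the statement does) or in terms of the fixed-point representatives (which is cleaner); I would explicitly state the bijection $\ell \mapsto k(\ell)$ early and then work entirely with $\L_k$, noting at the end that this transports the structure to $\L/{\sim}$. A secondary minor subtlety is confirming that the ``inherited order'' on classes is well-defined, but as noted this falls out of the alternative characterisation of closure operators stated just above the theorem.
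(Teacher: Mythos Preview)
Your proposal is correct and is the standard argument for this classical fact. Note, however, that the paper does not actually prove this theorem: it is stated with the attribution ``From \cite{blyth2005lattices}'' and no proof appears in either the body or the appendix, so there is no paper-side argument to compare against. Your sketch would serve perfectly well as the missing proof.
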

\begin{Theorem}[From \cite{blyth2005lattices}]
  \label{thm:closure-galois} 
  If $k$ is a closure operator then $k \dashv \lambda \ell. \ell$ is a Galois
  connection between $\L$ and the quotient lattice $\L\setminus \{(\ell,
  \jmath)|k(\ell) = k(\jmath)\}$, where elements are equivalence classes up to
  $k$ ordered by $\canFlowTo_\L$ on representative elements (fixpoints of $k$).
\end{Theorem}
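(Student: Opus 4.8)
The plan is to unfold Definition~\ref{def:galois} and verify its defining biconditional directly. With $F = k$ and $G = \lambda \ell.\ell$, the target lattice is $\L' = \L\setminus k$, whose elements are the $k$-equivalence classes; by the preceding theorem these inherit their order from $\canFlowTo_\L$ via their fixpoint representatives. First I would record the bookkeeping that makes $F$ and $G$ well-defined maps between $\L$ and $\L'$: $F$ sends $\ell$ to the class of $k(\ell)$, and $G$ sends a class to its unique fixpoint representative, which is $k(\ell)$ for any $\ell$ in the class (uniqueness because if $m = k(m)$ and $k(m) = k(\ell)$ then $m = k(\ell)$). In particular $G(F(\ell)) = k(\ell)$, and $k(\ell)$ is indeed a fixpoint by idempotence, so $F(\ell)$ genuinely lands in $\L'$.

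Having reduced the order on $\L'$ to the order on fixpoints, the goal becomes: for every $\ell \in \L$ and every fixpoint $\jmath = k(\jmath)$,
$$k(\ell) \canFlowTo_\L \jmath\ \iff\ \ell \canFlowTo_\L \jmath.$$
This is immediate from the two halves of the closure-operator definition. For the forward direction, extensivity gives $\ell \canFlowTo k(\ell)$, and transitivity of $\canFlowTo_\L$ then yields $\ell \canFlowTo \jmath$. For the backward direction, monotonicity gives $k(\ell) \canFlowTo k(\jmath)$, and $k(\jmath) = \jmath$ finishes it. Alternatively, the equivalent characterization $\ell \canFlowTo k(\jmath) \iff k(\ell) \canFlowTo k(\jmath)$ collapses the whole argument to the single substitution $k(\jmath) = \jmath$, which I would present as a one-line remark.

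I do not anticipate a genuine obstacle: the mathematical content is a two-line consequence of extensivity, monotonicity, and idempotence. The only place demanding care is the quotient bookkeeping — being explicit that $\canFlowTo_{\L'}$ is the order inherited from $\L$ on fixpoint representatives, that $G$ is single-valued on classes, and that $k(\ell)$ is a fixpoint so that $G(F(\ell)) = k(\ell)$ — all of which lean on idempotence and the preceding theorem. Once those are pinned down, the displayed biconditional is exactly the Galois-connection law of Definition~\ref{def:galois}, so nothing further (such as separately checking monotonicity of $F$ or $G$ as maps of lattices) needs verification, since that follows automatically from the adjunction.
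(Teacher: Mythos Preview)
Your proof is correct. The paper itself does not prove this theorem; it is stated as a citation from \cite{blyth2005lattices} with no accompanying argument, so there is nothing to compare against beyond noting that your direct verification via extensivity, monotonicity, and idempotence is the standard textbook route.
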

The key corollary of Theorem \ref{thm:closure-galois} is that if $k$
differentiates between all labels that are in the range of $p$, then $k$ gives
rise to a transparent Galois connection that can be used together with
$\MEF^{F\dashv G}$ to multi-execute $p$.
\begin{corollary}
  Given a closure operator $k$ on $\L$ such that $\L(p(x)) \subseteq k^*(\L)$
  for all $x$ we have that $k \dashv \lambda \ell. \ell$ is a transparent
  Galois connection between $\L$ and $\L\setminus \{(\ell, \jmath)|k(\ell) = k(\jmath)\}$ for $p$
\end{corollary}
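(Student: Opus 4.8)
The plan is to derive this as a routine corollary of the two preceding results, Theorem~\ref{thm:closure-galois} and the discussion following it (the ``key corollary'' paragraph), combined with the definition of a transparent Galois connection extracted from Theorem~\ref{thm:galois-transparency}. First I would recall that Theorem~\ref{thm:closure-galois} already gives us that $k \dashv \lambda\ell.\ell$ is a Galois connection between $\L$ and the quotient lattice $\L\setminus\{(\ell,\jmath)\mid k(\ell)=k(\jmath)\}$ whenever $k$ is a closure operator; since $k$ is a closure operator by hypothesis, this part is immediate and nothing needs to be reproven.

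The only remaining obligation is to check the \emph{transparency} condition for this particular Galois connection with respect to $p$. Recall from the remark after Theorem~\ref{thm:galois-transparency} that a Galois connection $F\dashv G$ is transparent for $p$ precisely when $\forall x.\ \L(p(x)) \subseteq (G\circ F)^*(\L)$. So I would instantiate $F := k$ and $G := \lambda\ell.\ell$, compute $(G\circ F)^*(\L) = (\lambda\ell.\ell \circ k)^*(\L) = k^*(\L)$, and observe that the required inclusion $\L(p(x)) \subseteq k^*(\L)$ for all $x$ is exactly the hypothesis of the corollary. (One should be mildly careful here about the fact that elements of the quotient lattice are equivalence classes rather than the fixpoints themselves; but since $G = \lambda\ell.\ell$ sends an equivalence class to its canonical fixpoint representative, $G\circ F$ acting on $\L$ lands exactly in the set of $k$-fixpoints $k^*(\L)$, so the identification causes no trouble.) Then by Theorem~\ref{thm:galois-transparency}, for every $\jmath \in (G\circ F)^*(\L) = k^*(\L)$ and every noninterfering $p$ we get $\MEF^{k\dashv\lambda\ell.\ell}[p](x)@\{\jmath\} = p(x)@\{\jmath\}$; combined with $\L(p(x))\subseteq k^*(\L)$, which says every label actually appearing in $p(x)$ is such a $\jmath$, we conclude $\MEF^{k\dashv\lambda\ell.\ell}[p](x) = p(x)$, i.e.\ the connection is transparent for $p$.

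There is essentially no main obstacle here — the work has all been done in the earlier theorems — so the proof is a two-line unfolding of definitions. The one subtlety worth stating explicitly, and the only place a careless reader might stumble, is the bookkeeping between the quotient lattice's equivalence-class elements and their fixpoint representatives when computing $(G\circ F)^*(\L)$; I would spend one sentence on that and otherwise just cite Theorems~\ref{thm:closure-galois} and~\ref{thm:galois-transparency}.
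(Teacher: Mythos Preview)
Your proposal is correct and matches the paper's treatment: the corollary is stated there without proof, as it follows immediately from Theorem~\ref{thm:closure-galois} (which supplies the Galois connection) together with the observation that $(G\circ F)^*(\L) = ((\lambda\ell.\ell)\circ k)^*(\L) = k^*(\L)$, so the hypothesis $\L(p(x)) \subseteq k^*(\L)$ is literally the definition of ``transparent Galois connection for $p$''. Your final paragraph invoking Theorem~\ref{thm:galois-transparency} to derive $\MEF^{k\dashv\lambda\ell.\ell}[p](x) = p(x)$ goes beyond what the corollary actually asserts---``transparent'' here is \emph{defined} as the inclusion condition, not as semantic preservation---so that step (and its attendant need for $p$ to be noninterfering) is unnecessary, though harmless.
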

Now, we can finally give a detailed specification of a closure operator
for lattices that have greatest lower bounds.
\begin{definition}
  \label{def:kp}
  Given program $p : \Pow(I \times \L) \to \Pow(O \times \L)$ for a lattice $\L$ with meets
  we define:
  $$
  k_p(\ell) = \bigsqcap\{\ \jmath\ |\ \exists x.\ \jmath \in \L(p(x)) \wedge \ell \canFlowTo \jmath\ \}
  $$
\end{definition}
\begin{restatable}{Theorem}{theoremkpclosure}
  \label{thm:kpclosure}
  $k_p$ is a closure operator.
\end{restatable}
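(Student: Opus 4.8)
The plan is to verify the three defining properties of a closure operator directly from the formula in Definition~\ref{def:kp}. Throughout, I would abbreviate $U(\ell) \defAs \{\ \jmath\ |\ \exists x.\ \jmath \in \L(p(x)) \wedge \ell \canFlowTo \jmath\ \}$, so that $k_p(\ell) = \bigsqcap U(\ell)$. The single observation that drives everything is that $\ell$ is a lower bound of $U(\ell)$: every $\jmath \in U(\ell)$ satisfies $\ell \canFlowTo \jmath$ by construction. Since $k_p(\ell) = \bigsqcap U(\ell)$ is the \emph{greatest} lower bound of $U(\ell)$, it lies above $\ell$, which is exactly extensivity: $\ell \canFlowTo k_p(\ell)$.

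For monotonicity, suppose $\ell \canFlowTo \ell'$. If $\jmath \in U(\ell')$, witnessed by some $x$ with $\jmath \in \L(p(x))$ and $\ell' \canFlowTo \jmath$, then the same $x$ witnesses $\ell \canFlowTo \ell' \canFlowTo \jmath$, so $\jmath \in U(\ell)$; hence $U(\ell') \subseteq U(\ell)$. Taking the meet over a larger set yields a smaller-or-equal element, so $k_p(\ell) = \bigsqcap U(\ell) \canFlowTo \bigsqcap U(\ell') = k_p(\ell')$.

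For idempotence I would establish the stronger fact $U(k_p(\ell)) = U(\ell)$, from which $k_p(k_p(\ell)) = \bigsqcap U(k_p(\ell)) = \bigsqcap U(\ell) = k_p(\ell)$ follows immediately. The inclusion $U(k_p(\ell)) \subseteq U(\ell)$ comes from extensivity plus transitivity: if $\jmath \in U(k_p(\ell))$ then $k_p(\ell) \canFlowTo \jmath$, and combining with $\ell \canFlowTo k_p(\ell)$ gives $\ell \canFlowTo \jmath$, keeping the same witness $x$. Conversely, if $\jmath \in U(\ell)$ then $k_p(\ell) \canFlowTo \jmath$ because $k_p(\ell)$ is a lower bound of $U(\ell)$; the same witness $x$ still has $\jmath \in \L(p(x))$, so $\jmath \in U(k_p(\ell))$.

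The only point that genuinely needs care is well-definedness: $U(\ell)$ may be infinite or empty, so the statement implicitly relies on $\L$ having arbitrary meets (this is what ``a lattice $\L$ with meets'' must mean here), and when $U(\ell) = \emptyset$ one reads $\bigsqcap \emptyset$ as the top element $\top$. I would flag this assumption explicitly; each of the three arguments specialises harmlessly to that case (for instance, $U(\ell) = \emptyset$ forces $U(\ell') = \emptyset$, so monotonicity is trivial). Beyond this, there is no real obstacle — the proof is a short chase of the definitions, with the ``meet over a larger set is smaller'' step in the monotonicity argument being the one place to state carefully.
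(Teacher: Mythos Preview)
Your proof is correct and follows essentially the same route as the paper: introduce the auxiliary set (the paper calls it $S_p(\ell)$, you call it $U(\ell)$), observe that $\ell$ is a lower bound of it for extensivity, that it is antitone in $\ell$ for monotonicity, and that $S_p(k_p(\ell)) = S_p(\ell)$ for idempotence. You spell out the two inclusions for idempotence and flag the well-definedness issue around arbitrary meets, both of which the paper leaves implicit, but the underlying argument is identical.
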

Using Theorem \ref{thm:kpclosure} we can re-construct the transparency
of the $\specify_{\bigsqcup L} \dashv \unspecify_{\bigsqcup L}$ Galois connection
for $\goodSum_L$.
Specifically, $k_{\goodSum_L}$ is isomorphic to $\specify_{\bigsqcup L}$:
$$
k_{\goodSum_L}(\ell) = \text{if}\ \ell \canFlowTo \bigsqcup L\ \text{then}\ \bigsqcap\{\bigsqcup L\}\ \text{else}\ \bigsqcap\emptyset
$$

Having seen that the $k_p$ closure operators allow us to
compute Galois connections from programs, we next turn to the question of
universality.
Specifically, we prove that the $k_p$ closure operator is the most coarse-grained
closure operator that gives rise to a transparent Galois connection.
This coupled with the fact that each Galois connection can be determined up to
isomorphism from some closure operator and vice versa \cite{blyth2005lattices}
then gives a roadmap to determine for a given program if there is some Galois
connection that allows for efficient multi-execution.
\begin{restatable}[Canonicity of $k_p$]{Lemma}{lemmacanonicity}
  \label{lem:canonicity_kp}
  Given $k$ such that $\L(p(x)) \subseteq k^*(\L)$ for all $x$,
  if $k_p(\ell) \not= k_p(\jmath)$ then $k(\ell) \not= k(\jmath)$.
\end{restatable}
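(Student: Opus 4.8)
The plan is to prove the contrapositive: I will show that $k(\ell) = k(\jmath)$ implies $k_p(\ell) = k_p(\jmath)$. The cleanest route is to establish the single identity $k_p = k_p \circ k$, i.e.\ that $k_p(\ell) = k_p(k(\ell))$ for every $\ell \in \L$; the lemma follows immediately, since $k(\ell) = k(\jmath)$ then gives $k_p(\ell) = k_p(k(\ell)) = k_p(k(\jmath)) = k_p(\jmath)$, so $k_p(\ell) \neq k_p(\jmath)$ forces $k(\ell) \neq k(\jmath)$.

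To prove $k_p(\ell) = k_p(k(\ell))$ I would argue the two directions separately. For $k_p(\ell) \canFlowTo k_p(k(\ell))$: since $k$ is a closure operator, $\ell \canFlowTo k(\ell)$, and $k_p$ is monotone by Theorem~\ref{thm:kpclosure}, so the inequality is immediate. (Equivalently, the defining set $\{\jmath \mid \exists x.\ \jmath \in \L(p(x)) \wedge k(\ell) \canFlowTo \jmath\}$ for $k_p(k(\ell))$ is a subset of the one for $k_p(\ell)$, hence its meet is larger.)

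The reverse inequality $k_p(k(\ell)) \canFlowTo k_p(\ell)$ is where the hypothesis $\L(p(x)) \subseteq k^*(\L)$ is used. It suffices to show $k_p(k(\ell))$ is a lower bound of $\{\jmath \mid \exists x.\ \jmath \in \L(p(x)) \wedge \ell \canFlowTo \jmath\}$, since $k_p(\ell)$ is by definition the greatest lower bound of that set. Fix $\jmath$ with $\jmath \in \L(p(x))$ for some $x$ and $\ell \canFlowTo \jmath$. Since $\jmath \in \L(p(x)) \subseteq k^*(\L)$ and $k$ is idempotent, $\jmath$ is a fixpoint of $k$, so $k(\jmath) = \jmath$. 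Applying monotonicity of $k$ to $\ell \canFlowTo \jmath$ gives $k(\ell) \canFlowTo k(\jmath) = \jmath$; hence $\jmath$ lies in the set defining $k_p(k(\ell))$, so $k_p(k(\ell)) \canFlowTo \jmath$. As this holds for every such $\jmath$, we get $k_p(k(\ell)) \canFlowTo k_p(\ell)$, completing the identity.

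The main obstacle, such as it is, is spotting that the whole argument pivots on a single observation — the output labels of $p$ are $k$-fixpoints — which is exactly what lets monotonicity of $k$ push the assumption $\ell \canFlowTo \jmath$ through $k(\ell)$; after that the rest is bookkeeping with meets. A minor care point is the empty-set case, where no $\jmath$ exists and both sides equal $\bigsqcap \emptyset = \top$, handled uniformly by the lower-bound phrasing above; and I should note that well-definedness of the (possibly infinite) meets in play is already presupposed by Definition~\ref{def:kp} and Theorem~\ref{thm:kpclosure}.
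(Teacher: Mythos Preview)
Your proof is correct. Both you and the paper pivot on the same observation---that every output label of $p$ is a $k$-fixpoint---and both argue the contrapositive. The difference is packaging: the paper picks a witness $\iota \in S_p(\ell) \setminus S_p(\jmath)$ (noting first that $k_p(\ell) \neq k_p(\jmath)$ forces $S_p(\ell) \neq S_p(\jmath)$) and derives a contradiction from the assumption $k(\ell) = k(\jmath)$ by chaining $\jmath \canFlowTo k(\jmath) = k(\ell) \canFlowTo k(\iota) = \iota$. You instead establish the stronger identity $k_p = k_p \circ k$ (equivalently $S_p(\ell) = S_p(k(\ell))$), from which the lemma falls out in one line. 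Your route yields a slightly more reusable statement---$k_p = k_p \circ k$ says directly that $k_p$ factors through any transparent closure $k$---while the paper's is marginally more direct since it does not split into two inequalities. Either way the mathematical content is the same.
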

A consequence of this Lemma is that the overhead for multi-execution on
$\pairwise$ can not be reduced below $\Oh(2^n)$ for the powerset lattice in the
worst-case using a Galois connection to a smaller lattice.
Specifically, this is because
$\{\ \ell\ |\ \exists x.\ \ell \in \L(\pairwise(x)) \} = \L$
and so $k_\pairwise(\ell) = \ell$ for all $\ell$.
Consequently, if $\L = \Pow(A)$ then every transparent Galois connection $F \dashv G$ for
$\pairwise$ has to preserve the structure of the powerset lattice, and thus
preserving the $\Oh(2^n)$ bound on $C(\L(x))$ in $C_{F\dashv G}(\L(x))$.

However, the following variant of $\pairwise$ that only works on singleton or
empty labels admits a transparent Galois connection:
$$
\pairwise_1(x) \defAs \{\ \textit{max}(a, b)^{\ell \lub \jmath}\ |\  a^\ell \in x, b^\jmath \in x, |\ell| = |\jmath| \le 1\ \}
$$
Specifically, we can construct the set $S$ of all labels in its co-domain:
$$
S = \{\ \ell\ |\ \exists x.\ \ell \in \L(\pairwise_1(x)) \} = \{\ \ell\ |\ |\ell| \le 2\ \}
$$
From which we get:
$$
k_{\pairwise_1}(\ell) = \text{if}\ |\ell| \le 2\ \text{then}\ \ell\ \text{else}\ \top = \truncate_2(\ell)
$$

To formalise this reasoning, the final theorem of this Section shows that every
transparent Galois connection for $p$ introduces at least as many executions in
$\MEF^{F\dashv G}$ as the Galois connection given by $k_p$.
\begin{restatable}[Canonicity of Galois Connections]{Theorem}{theoremcanonicitytwo}
  If $F \dashv G$ is a transparent for $p$, then $|C_{F\dashv G}(L)| \ge |k_p^*(C(L))|$.
\end{restatable}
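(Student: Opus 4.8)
The plan is to show that the map induced by $G \circ F$ on $C(L)$ factors through $k_p^*(C(L))$, so that the image $C_{F \dashv G}(L) = G^*(C(F^*(L)))$ is at least as large as $k_p^*(C(L))$. The starting observation is that $C_{F \dashv G}(L) = (G \circ F)^*(C(L))$: since $G \circ F$ is the closure operator associated with the Galois connection (Theorem \ref{thm:closure-galois} and the surrounding discussion), and closure sets commute with the join-preserving map $F^*$, we have $C(F^*(L)) = F^*(C(L))$ and hence $G^*(C(F^*(L))) = (G \circ F)^*(C(L))$. So it suffices to compare the two closure operators $G \circ F$ and $k_p$ \emph{restricted to elements of $C(L)$} and show that $|(G\circ F)^*(C(L))| \ge |k_p^*(C(L))|$.

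The key step is to invoke canonicity, Lemma \ref{lem:canonicity_kp}: because $F \dashv G$ is transparent for $p$, we have $\L(p(x)) \subseteq (G \circ F)^*(\L)$ for all $x$, which is exactly the hypothesis needed to conclude that whenever $k_p(\ell) \ne k_p(\jmath)$ then $(G \circ F)(\ell) \ne (G \circ F)(\jmath)$. In other words, $G \circ F$ refines (separates at least as finely as) the partition induced by $k_p$. Now I would run a simple counting argument: define the map $\phi$ sending a fixpoint $m \in (G\circ F)^*(C(L))$ to $k_p(m)$. I claim $\phi$ is surjective onto $k_p^*(C(L))$. Given $k_p(c)$ for some $c \in C(L)$, the element $m = (G\circ F)(c)$ lies in $(G\circ F)^*(C(L))$, and since $c \canFlowTo m$ (extensivity) with $m = (G\circ F)(c)$, and $k_p$ is monotone and idempotent, $k_p(m)$ and $k_p(c)$ coincide — here I would use that $(G\circ F)(c)$ and $c$ have the same $k_p$-value, which follows because $k_p$ does not separate any two elements that $G\circ F$ identifies (contrapositive of Lemma \ref{lem:canonicity_kp}): $(G\circ F)(c) = (G\circ F)((G\circ F)(c))$ forces $k_p(c) = k_p((G\circ F)(c))$. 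Hence $\phi(m) = k_p(c)$, proving surjectivity, and therefore $|(G\circ F)^*(C(L))| \ge |k_p^*(C(L))|$, which is the claim.

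One technical wrinkle I would need to be careful about is the distinction between elements of the quotient lattice and their fixpoint representatives: $C_{F\dashv G}(L)$ and $k_p^*(C(L))$ are most cleanly described as sets of canonical representatives (fixpoints of $G\circ F$ and of $k_p$ respectively), and the counting argument should be phrased at the level of these representative sets, or equivalently at the level of the equivalence classes the two operators induce on $C(L)$. The phrase ``transparent for $p$'' in the theorem statement I read as the condition $\forall x.\ \L(p(x)) \subseteq (G\circ F)^*(\L)$ extracted after Theorem \ref{thm:galois-transparency}, so I would state that reading explicitly at the top of the proof.

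The main obstacle is getting the commutation $C(F^*(L)) = F^*(C(L))$ and the identification $C_{F\dashv G}(L) = (G\circ F)^*(C(L))$ exactly right: the inclusion $F^*(C(L)) \subseteq C(F^*(L))$ is immediate from $F$ preserving joins, but the reverse inclusion needs that every join of elements of $F^*(L)$ is the $F$-image of the corresponding join in $L$, which again is just join-preservation of $F$ — so this is routine but must be spelled out, and then the outer $G^*$ combined with the fact that $G\circ F$ is idempotent is what collapses $G^*(F^*(C(L)))$ down to the fixpoint set $(G\circ F)^*(C(L))$ without double-counting. Once that bookkeeping is clean, the surjection argument via Lemma \ref{lem:canonicity_kp} is short.
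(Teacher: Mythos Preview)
Your proposal is correct and follows essentially the same approach as the paper: both identify $C_{F\dashv G}(L) = (G\circ F)^*(C(L))$ via join-preservation of $F$, then invoke Lemma~\ref{lem:canonicity_kp} (using transparency to supply its hypothesis) to conclude that $G\circ F$ separates at least as finely as $k_p$ on $C(L)$, hence the image under $G\circ F$ is at least as large. The paper leaves the final counting step implicit where you spell out an explicit surjection, but the argument is the same.
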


With this theory in place, we see a clear path for future work to take our
analysis of lattice shape and bring it into practice.
One recipe for harnessing this theory is the following:
\begin{enumerate}
  \item Propose a procedure $K(p)$ to approximate $k_p$.
  \item Find the complexity of $K$ and of the resulting lattice.
  \item Show that the overhead of $K(p)$ and $\MEF^{K(p) \dashv \lambda \ell. \ell}[p](x)$
    is less than that of $\MEF[p](x)$.
\end{enumerate}
This opens up a new research direction for Galois-Multi-Execution that we hope
will incorporate insights from across the static and dynamic program analysis
literature.

\section{Empirical Results}
\label{sec:empirical}

\begin{figure}
  \centering
\begin{tikzpicture}[scale=0.8]
  \begin{semilogyaxis}[ title = {$\MEF$ Performance}
                      , xlabel = {Input Size}
                      , ylabel = {Execution Time ($s$) (log axis)}
                      , legend pos = north east 
                      , legend cell align = left
                      , legend entries={$\MEF[\goodSum_L]$, $\MEF^{s\dashv u}[\goodSum_L]$, $\goodSum_L$}]
    \addplot[only marks, mark = x] table [x=Size, y=Mean, col sep=comma] {data/mef-goodSum.csv};
    \addplot[only marks, mark = triangle, mark repeat = 5] table [x=Size, y=Mean, col sep=comma] {data/mefGalois-goodSum.csv};
    \addplot[only marks, mark = o, mark repeat = 5] table [x=Size, y=Mean, col sep=comma] {data/goodSum.csv};
    \addplot[no markers, opacity = 0.5] gnuplot [raw gnuplot] { 
      set datafile separator ",";
      f(x) = b*x + c;     
      b=0.00001;c=0.0001;          
      fit f(x) 'data/mefGalois-goodSum.csv' u 1:2 via b,c; 
      plot [x=10:100] f(x);    
      set print "fit-parameters/mefGalois-goodSum-b.tex"; 
      print b; 
      set print "fit-parameters/mefGalois-goodSum-c.tex"; 
      print c; 
    };
    \addplot[no markers, opacity=0.5] gnuplot [raw gnuplot] { 
      set datafile separator ",";
      f(x) = b*x+c;     
      b=0.00001;c=0.0001;          
      fit f(x) 'data/goodSum.csv' u 1:2 via b,c; 
      plot [x=10:100] f(x);    
      set print "fit-parameters/goodSum-b.tex"; 
      print b; 
      set print "fit-parameters/goodSum-c.tex"; 
      print c; 
    };
  \end{semilogyaxis}
\end{tikzpicture}
  \caption{\label{fig:performance:sum:compare} The Execution Times of $\MEF$ and $\MEF^{s\dashv u}$ on $\goodSum_L$}
\end{figure}
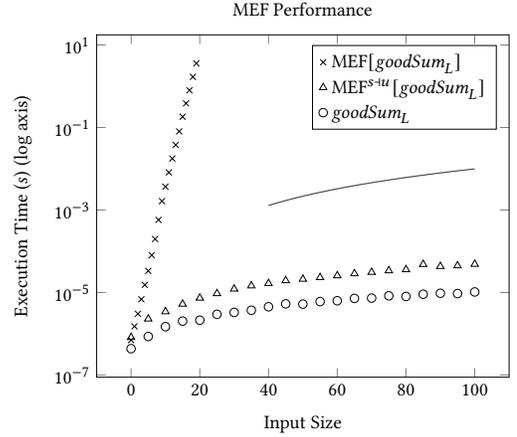

To validate our theoretical results empirically, we have implemented the
framework used in this paper as a
small\footnote{350 lines of code, including all our experiments} Haskell \cite{Haskell}
library\footnote{That is available as supplementary material to this paper}.
Our enforcement mechanisms $\MEF$ and $\MEF^{F\vdash G}$ are implemented as
higher order functions:
\begin{verbatim}
      mef :: (Lattice l, Ord l, Ord b)
          => (Set (a, l) -> Set (b, l))
          ->  Set (a, l) -> Set (b, l)

mefGalois :: ( Lattice l, Lattice l', Ord l, Ord l', Ord b)
          => Galois l l'
          -> (Set (a, l) -> Set (b, l))
          ->  Set (a, l) -> Set (b, l)
\end{verbatim}
The type signature for \texttt{mef} comes in three parts, line by line:
\begin{enumerate}
  \item \texttt{(Lattice l, Ord l, Ord b)} are constraints that require \texttt{l} to be a type that forms
    a lattice (\texttt{Lattice l}) that additionally has a total order (\texttt{Ord l}), note that this does
    not require the lattice ordering $\canFlowTo$ on \texttt{l} to be total, and that
    \texttt{b} has a total order (\texttt{Ord b}).
    The total order constraints are necessary in order to efficiently represent the inputs and outputs of
    \texttt{mef} as sets (implemented as e.g. AVL or red-black trees).
  \item \texttt{(Set (a, l) -> Set (b, l))} is a higher-order argument, a
    function \texttt{p} that takes sets of labeled \texttt{a}s as input and
    produces labeled \texttt{b}s as output.
  \item \texttt{Set (a, l) -> Set (b, l)} means that \texttt{mef p} is also a
    function from \texttt{Set (a, l)} to \texttt{Set (b, l)}.
\end{enumerate}
The difference between \texttt{mef} and \texttt{mefGalois} is that
\texttt{mefGalois} additionally requires two lattices and a Galois connection
between them as input (where a Galois connection is a pair of functions).

Figure \ref{fig:performance:sum:compare} contains a teaser of our empirical results.
It contains the log of runtime for $\MEF[\goodSum_L]$, $\MEF^{\specify \dashv
\unspecify}[\goodSum_L]$, and $\goodSum_L$ plotted against $100$ inputs of size
ranging from $0$ to $100$.
As can be seen in the Figure, going from using the powerset lattice to the
two-point lattice reduces the running time from exponential to polynomial.

In the experiments an input of size $n$ is the set
$\{1^{\{p_1\}}\ldots n^{\{p_n\}}\}$ where each principal $p_i$
is unique.
The definition of $L$ for $\goodSum_L$ for input $n$ is $L = \{p_1 \ldots p_n\}$.
$\MEF^{s\dashv u}$ uses the Galois connection $\specify_{\{p_1 \ldots p_n\}} \dashv \unspecify_{\{p_1 \ldots p_n\}}$
where the ``specified'' element of the powerset lattice is precisely
$\{p_1 \ldots p_n\}$.

From Figure \ref{fig:performance:sum:compare} we see that $\MEF[\goodSum_L]$ takes
exponential time.
$\MEF^{s\dashv u}$, meanwhile, has linear-time performance.
This is because $\MEF^{s\dashv u}$ introduces at most two executions of each
program, one for $\{p_1, \ldots,p_n\}$ and one for $\top$, and so the
running-time is proportional to the running time of $\goodSum_L$.
Figure \ref{fig:performance:sum:compare} also contains linear fit lines for
$\goodSum_L$ and $\MEF^{s\dashv u}[\goodSum_L]$.

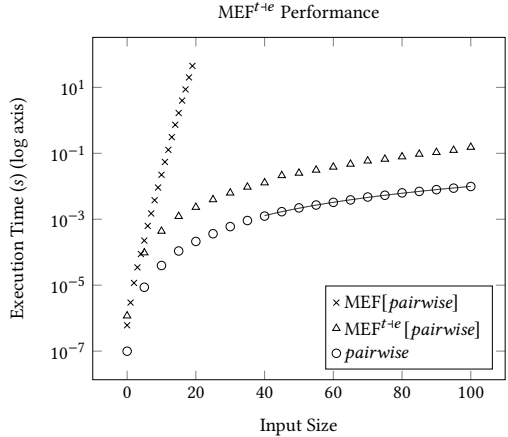
\begin{figure}
  \centering
\begin{tikzpicture}[scale=0.8]
  \begin{semilogyaxis}[ title = {$\MEF^{t\dashv e}$ Performance}
              , xlabel = {Input Size}
              , ylabel = {Execution Time ($s$) (log axis)}
              , scaled y ticks={base 10:2}
              , legend pos = south east
              , legend cell align = left ]
    \addplot[only marks, mark = x] table [x=Size, y=Mean, col sep=comma] {data/mef-pairwise.csv};
    \addlegendentry{$\MEF[\pairwise]$}
    \addplot[only marks, mark = triangle, mark repeat = 5] table [x=Size, y=Mean, col sep=comma] {data/mefGalois-pairwise.csv};
    \addlegendentry{$\MEF^{t\dashv e}[\pairwise]$}
    \addplot[only marks, mark = o, mark repeat = 5] table [x=Size, y=Mean, col sep=comma] {data/pairwise.csv};
    \addlegendentry{$\pairwise$}
    \addplot[no markers, opacity = 0.5] gnuplot [raw gnuplot] { 
      set datafile separator ",";
      f(x) = a*x**2+b*x + c;     
      a=0.00001;b=0.00001;c=0.0001;          
      fit f(x) 'data/mefGalois-pairwise.csv' u 1:2 via a,b,c; 
      plot [x=40:100] f(x);    
      set print "fit-parameters/mefGalois-pairwise-a.tex"; 
      print a; 
      set print "fit-parameters/mefGalois-pairwise-b.tex"; 
      print b; 
      set print "fit-parameters/mefGalois-pairwise-c.tex"; 
      print c; 
    };
    \addplot[no markers, opacity=0.5] gnuplot [raw gnuplot] { 
      set datafile separator ",";
      f(x) = a*x**2+b*x+c;     
      a=0.00001;b=0.00001;c=0.0001;          
      fit f(x) 'data/pairwise.csv' u 1:2 via a,b,c; 
      plot [x=40:100] f(x);    
      set print "fit-parameters/pairwise-a.tex"; 
      print a; 
      set print "fit-parameters/pairwise-b.tex"; 
      print b; 
      set print "fit-parameters/pairwise-c.tex"; 
      print c; 
    };
  \end{semilogyaxis}
\end{tikzpicture}
  \caption{\label{fig:performance:pairwise} The Execution Times of $\MEF$ and $\MEF^{t\dashv e}$ on $\pairwise$}
\end{figure}

In a second experiment, reported in Figure \ref{fig:performance:pairwise},
we compare the execution time of $\MEF$ and $\MEF^{t\dashv e}$ on the $\pairwise$
function.
The $t\dashv e$ Galois connection here refers to the canonical Galois connection
between $\Pow(A)$ and $\Pow_2(A)$.
The input is the same as in the other experiment.
We also fit the timing results for running $\MEF^{t\dashv e}[\pairwise]$ and
$\pairwise$ to polynomials of the shape $ax^2+bx+c$  using the Gnuplot
\cite{gnuplot} implementation of the Levenberg-Marquardt algorithm
\cite{levenberg1944method}.
The results can be seen as fit lines in Figure \ref{fig:performance:pairwise}.
The fits are good when the input size is greater than $40$, showing us that
quadratic order execution time is a good asymptotic fit.


\section{Related Work}
\label{sec:related}

There is a growing body of work on transparent IFC enforcement
\cite{SME,MF,jaskelioff2011secure,FSME,OGMF,OptimisingFSME,MultiExecutionBounds,
zanarini2013precise,de2012flowfox,de2014secure,ngo2018impossibility,
ngo2015runtime,pfeffer2019efficient,micinski2020abstracting,
rafnsson2016secure,bolocsteanu2016asymmetric,bielova2016spot} and the
efficiency of multi-execution as an enforcement mechanism has been studied in
various settings going back to the beginning \cite{SME}.

\paragraph*{Theoretical Work on Multi-Execution Performance}
In the introduction of the first paper on multi-execution Devriese and Piessens
\cite{SME} remark that:
\begin{quote}
``One obvious disadvantage of multi-execution is its cost in terms of CPU time
and memory use.''
\end{quote}
Some theoretical effort has gone into using the multiple-facets \cite{MF}
framework for reducing the number of \emph{superfluous} runs of parts of
programs under multi-execution \cite{OGMF,OptimisingFSME}.
Algehed et al. \cite{OptimisingFSME} limit the number of runs of a program
under multi-execution by pruning what output levels in the security lattice are
used by providing a primitive for pruning the lattice using a boolean algebra
over labels.
Their modified transparency criteria is similar to our notion of
transparency up to a Galois connection in Theorem \ref{thm:galois-transparency}. 
The $\Omega(|C(\L(x))|)$ lower bound on the overhead of black-box transparent
enforcement has been informally discussed in the literature
\cite{MF,FSME,OptimisingFSME,OGMF}, and was recently formally proven \cite{MultiExecutionBounds}.

\paragraph*{Empirical Work on Multi-Execution Performance}
The first empirical measurements of the performance overhead of multi-execution
are in the original paper, where the authors study the timing overhead of SME
for the two-point lattice on a number of small but realistic benchmarks
\cite{SME}.
Additionally, the ``real-world'' overhead of SME has been studied in the
setting of the FlowFox IFC browser \cite{de2012flowfox}.
The first experiment to study how multi-execution scales with the number of
security levels that we are aware of was in the work on the
multiple facets (MF) version of multi-execution \cite{MF}.
Extending MF, Schmitz et al. \cite{FSME} present Faceted Secure multi-execution
(FSME) that unifies MF and SME and study trade-offs between time and memory use
in these two formulations of multi-execution.
Finally, Algehed et al. \cite{OptimisingFSME} empirically evaluate the effect
of filtering the views (akin to executions in multi-execution) of faceted
values that appear in MF by selecting executions that may lead to observable
outputs and ignoring ones that do not and find that it presents similar speedups
to the ones presented in this paper.

\paragraph*{Implementations of IFC in Haskell}
There is a significant body of work on embedding IFC in Haskell
\cite{MAC,LIO,jaskelioff2011secure,FSME,DCCInHaskell,waye2017cryptographically,
HLIO,parker2019lweb,giffin2012hails,vassena2017securing,vassena2019foundations}.
Most of which falls into the category of ``monadic'' IFC libraries in which the
code that is subject to IFC enforcement is written using a specialized
interface exported by the library.
This differs from our implementation, which works on non-monadic, native, code.
On the other hand, because our setting only applies to batch-job programs, our
library does not deal with reactive program IO, non-determinism, and other such
effects.
However, we believe that this limitation is orthogonal to how lattice shape
influences overhead even in more fully-fledged multi-execution implementations,
like FlowFox \cite{de2012flowfox} and Multef \cite{FSME}.

\section{Conclusions}
In this paper, we have presented a framework for reasoning about upper and
lower bounds on the time overhead of multi-execution.
We have shown that the choice of lattice alters this overhead; lattices that
allow the programmer to express many different combinations of security levels
result in large overheads.
We also show how to use Galois connections to switch between different
lattices, thus allowing programmers to switch from a lattice with high overhead
to one with low overhead.
This switching potentially comes at the cost of altering the behaviour of the
target program, but for many programs it is possible to reduce overhead without
affecting semantics.
We show that a canonical Galois connection that is both as coarse-grained as
possible (reducing the overhead as much as possible) and does not alter program
semantics exists for every lattice with greatest lower bounds.
Finally, we empirically evaluate our performance predictions on a small
implementation of our framework in Haskell and find that the theory matches our
empirical results.


\bibliographystyle{ACM-Reference-Format}
\bibliography{main.bib}


\begin{thebibliography}{00}


\ifx \showCODEN    \undefined \def \showCODEN     #1{\unskip}     \fi
\ifx \showDOI      \undefined \def \showDOI       #1{#1}\fi
\ifx \showISBNx    \undefined \def \showISBNx     #1{\unskip}     \fi
\ifx \showISBNxiii \undefined \def \showISBNxiii  #1{\unskip}     \fi
\ifx \showISSN     \undefined \def \showISSN      #1{\unskip}     \fi
\ifx \showLCCN     \undefined \def \showLCCN      #1{\unskip}     \fi
\ifx \shownote     \undefined \def \shownote      #1{#1}          \fi
\ifx \showarticletitle \undefined \def \showarticletitle #1{#1}   \fi
\ifx \showURL      \undefined \def \showURL       {\relax}        \fi
\providecommand\bibfield[2]{#2}
\providecommand\bibinfo[2]{#2}
\providecommand\natexlab[1]{#1}
\providecommand\showeprint[2][]{arXiv:#2}

\bibitem[\protect\citeauthoryear{A{\"\i}t-Kaci, Boyer, Lincoln, and
  Nasr}{A{\"\i}t-Kaci et~al\mbox{.}}{1989}]%
        {ait1989efficient}
\bibfield{author}{\bibinfo{person}{Hassan A{\"\i}t-Kaci},
  \bibinfo{person}{Robert Boyer}, \bibinfo{person}{Patrick Lincoln}, {and}
  \bibinfo{person}{Roger Nasr}.} \bibinfo{year}{1989}\natexlab{}.
\newblock \showarticletitle{Efficient implementation of lattice operations}.
\newblock \bibinfo{journal}{{\em ACM Transactions on Programming Languages and
  Systems (TOPLAS)\/}} \bibinfo{volume}{11}, \bibinfo{number}{1}
  (\bibinfo{year}{1989}), \bibinfo{pages}{115--146}.
\newblock


\bibitem[\protect\citeauthoryear{Algehed and Flanagan}{Algehed and
  Flanagan}{2020}]%
        {MultiExecutionBounds}
\bibfield{author}{\bibinfo{person}{Maximilian Algehed} {and}
  \bibinfo{person}{Cormac Flanagan}.} \bibinfo{year}{2020}\natexlab{}.
\newblock \showarticletitle{Transparent IFC Enforcement: Possibility and
  (In)Efficiency Results}. In \bibinfo{booktitle}{{\em 2020 IEEE Symposium on
  Computer Security Foundations}}. \bibinfo{publisher}{IEEE}.
\newblock


\bibitem[\protect\citeauthoryear{Algehed and Russo}{Algehed and Russo}{2017}]%
        {DCCInHaskell}
\bibfield{author}{\bibinfo{person}{Maximilian Algehed} {and}
  \bibinfo{person}{Alejandro Russo}.} \bibinfo{year}{2017}\natexlab{}.
\newblock \showarticletitle{Encoding DCC in Haskell}. In
  \bibinfo{booktitle}{{\em Proceedings of the 2017 Workshop on Programming
  Languages and Analysis for Security}}. ACM, \bibinfo{pages}{77--89}.
\newblock


\bibitem[\protect\citeauthoryear{Algehed, Russo, and Flanagan}{Algehed
  et~al\mbox{.}}{2019}]%
        {OptimisingFSME}
\bibfield{author}{\bibinfo{person}{M. Algehed}, \bibinfo{person}{A. Russo},
  {and} \bibinfo{person}{C. Flanagan}.} \bibinfo{year}{2019}\natexlab{}.
\newblock \showarticletitle{{Optimising Faceted Secure Multi-Execution}}. In
  \bibinfo{booktitle}{{\em {Proc. of the 2019 32nd IEEE Computer Security
  Foundations Symp.}}} {\em (\bibinfo{series}{{CSF '19}})}.
  \bibinfo{publisher}{IEEE Computer Society}.
\newblock


\bibitem[\protect\citeauthoryear{Alpernas, Flanagan, Fouladi, Ryzhyk, Sagiv,
  Schmitz, and Winstein}{Alpernas et~al\mbox{.}}{2018}]%
        {alpernas2018secure}
\bibfield{author}{\bibinfo{person}{Kalev Alpernas}, \bibinfo{person}{Cormac
  Flanagan}, \bibinfo{person}{Sadjad Fouladi}, \bibinfo{person}{Leonid Ryzhyk},
  \bibinfo{person}{Mooly Sagiv}, \bibinfo{person}{Thomas Schmitz}, {and}
  \bibinfo{person}{Keith Winstein}.} \bibinfo{year}{2018}\natexlab{}.
\newblock \showarticletitle{Secure serverless computing using dynamic
  information flow control}.
\newblock \bibinfo{journal}{{\em arXiv preprint arXiv:1802.08984\/}}
  (\bibinfo{year}{2018}).
\newblock


\bibitem[\protect\citeauthoryear{Arden, Liu, and Myers}{Arden
  et~al\mbox{.}}{2015a}]%
        {arden2015flow}
\bibfield{author}{\bibinfo{person}{Owen Arden}, \bibinfo{person}{Jed Liu},
  {and} \bibinfo{person}{Andrew~C Myers}.} \bibinfo{year}{2015}\natexlab{a}.
\newblock \showarticletitle{Flow-limited authorization}. In
  \bibinfo{booktitle}{{\em 2015 IEEE 28th Computer Security Foundations
  Symposium}}. IEEE, \bibinfo{pages}{569--583}.
\newblock


\bibitem[\protect\citeauthoryear{Arden, Liu, and Myers}{Arden
  et~al\mbox{.}}{2015b}]%
        {FLAM}
\bibfield{author}{\bibinfo{person}{O. Arden}, \bibinfo{person}{J. Liu}, {and}
  \bibinfo{person}{A.~C. Myers}.} \bibinfo{year}{2015}\natexlab{b}.
\newblock \showarticletitle{Flow-Limited Authorization}. In
  \bibinfo{booktitle}{{\em 2015 IEEE 28th Computer Security Foundations
  Symposium (CSF)}}. \bibinfo{publisher}{IEEE Computer Society},
  \bibinfo{address}{Los Alamitos, CA, USA}, \bibinfo{pages}{569--583}.
\newblock
\showISSN{1063-6900}
\showDOI{%
\url{https://doi.org/10.1109/CSF.2015.42}}


\bibitem[\protect\citeauthoryear{Austin and Flanagan}{Austin and
  Flanagan}{2010}]%
        {austin2010permissive}
\bibfield{author}{\bibinfo{person}{Thomas~H Austin} {and}
  \bibinfo{person}{Cormac Flanagan}.} \bibinfo{year}{2010}\natexlab{}.
\newblock \showarticletitle{Permissive dynamic information flow analysis}. In
  \bibinfo{booktitle}{{\em Proceedings of the 5th ACM SIGPLAN Workshop on
  Programming Languages and Analysis for Security}}. \bibinfo{pages}{1--12}.
\newblock


\bibitem[\protect\citeauthoryear{Austin and Flanagan}{Austin and
  Flanagan}{2012}]%
        {MF}
\bibfield{author}{\bibinfo{person}{Thomas~H Austin} {and}
  \bibinfo{person}{Cormac Flanagan}.} \bibinfo{year}{2012}\natexlab{}.
\newblock \showarticletitle{Multiple facets for dynamic information flow}. In
  \bibinfo{booktitle}{{\em ACM Sigplan Notices}}, Vol.~\bibinfo{volume}{47}.
  ACM, \bibinfo{pages}{165--178}.
\newblock


\bibitem[\protect\citeauthoryear{Banerjee and Naumann}{Banerjee and
  Naumann}{2002}]%
        {JIF}
\bibfield{author}{\bibinfo{person}{Anindya Banerjee} {and}
  \bibinfo{person}{David~A Naumann}.} \bibinfo{year}{2002}\natexlab{}.
\newblock \showarticletitle{Secure Information Flow and Pointer Confinement in
  a Java-like Language.}. In \bibinfo{booktitle}{{\em CSFW}},
  Vol.~\bibinfo{volume}{2}. \bibinfo{pages}{253}.
\newblock


\bibitem[\protect\citeauthoryear{Barthe, Crespo, and Kunz}{Barthe
  et~al\mbox{.}}{2011a}]%
        {barthe2011relational}
\bibfield{author}{\bibinfo{person}{Gilles Barthe}, \bibinfo{person}{Juan~Manuel
  Crespo}, {and} \bibinfo{person}{C{\'e}sar Kunz}.}
  \bibinfo{year}{2011}\natexlab{a}.
\newblock \showarticletitle{Relational verification using product programs}. In
  \bibinfo{booktitle}{{\em International Symposium on Formal Methods}}.
  Springer, \bibinfo{pages}{200--214}.
\newblock


\bibitem[\protect\citeauthoryear{Barthe, D'argenio, and Rezk}{Barthe
  et~al\mbox{.}}{2011b}]%
        {barthe2011secure}
\bibfield{author}{\bibinfo{person}{Gilles Barthe}, \bibinfo{person}{Pedro~R
  D'argenio}, {and} \bibinfo{person}{Tamara Rezk}.}
  \bibinfo{year}{2011}\natexlab{b}.
\newblock \showarticletitle{Secure information flow by self-composition}.
\newblock \bibinfo{journal}{{\em Mathematical Structures in Computer
  Science\/}} \bibinfo{volume}{21}, \bibinfo{number}{6} (\bibinfo{year}{2011}),
  \bibinfo{pages}{1207}.
\newblock


\bibitem[\protect\citeauthoryear{Bielova and Rezk}{Bielova and Rezk}{2016}]%
        {bielova2016spot}
\bibfield{author}{\bibinfo{person}{Nataliia Bielova} {and}
  \bibinfo{person}{Tamara Rezk}.} \bibinfo{year}{2016}\natexlab{}.
\newblock \showarticletitle{Spot the difference: Secure multi-execution and
  multiple facets}. In \bibinfo{booktitle}{{\em European Symposium on Research
  in Computer Security}}. Springer, \bibinfo{pages}{501--519}.
\newblock


\bibitem[\protect\citeauthoryear{Blyth}{Blyth}{2005}]%
        {blyth2005lattices}
\bibfield{author}{\bibinfo{person}{T.S. Blyth}.}
  \bibinfo{year}{2005}\natexlab{}.
\newblock \bibinfo{booktitle}{{\em Lattices and Ordered Algebraic Structures}}.
\newblock \bibinfo{publisher}{Springer London}.
\newblock
\showISBNx{9781852339050}
\showLCCN{2004056612}
\showURL{%
\url{https://books.google.se/books?id=jQYwCI\_\_1mgC}}


\bibitem[\protect\citeauthoryear{Bolo{\c{s}}teanu and Garg}{Bolo{\c{s}}teanu
  and Garg}{2016}]%
        {bolocsteanu2016asymmetric}
\bibfield{author}{\bibinfo{person}{Iulia Bolo{\c{s}}teanu} {and}
  \bibinfo{person}{Deepak Garg}.} \bibinfo{year}{2016}\natexlab{}.
\newblock \showarticletitle{Asymmetric secure multi-execution with
  declassification}. In \bibinfo{booktitle}{{\em International Conference on
  Principles of Security and Trust}}. Springer, \bibinfo{pages}{24--45}.
\newblock


\bibitem[\protect\citeauthoryear{Brodnik and Munro}{Brodnik and Munro}{1999}]%
        {brodnik1999membership}
\bibfield{author}{\bibinfo{person}{Andrej Brodnik} {and} \bibinfo{person}{J~Ian
  Munro}.} \bibinfo{year}{1999}\natexlab{}.
\newblock \showarticletitle{Membership in constant time and almost-minimum
  space}.
\newblock \bibinfo{journal}{{\em SIAM Journal on computing\/}}
  \bibinfo{volume}{28}, \bibinfo{number}{5} (\bibinfo{year}{1999}),
  \bibinfo{pages}{1627--1640}.
\newblock


\bibitem[\protect\citeauthoryear{Buiras, Vytiniotis, and Russo}{Buiras
  et~al\mbox{.}}{2015}]%
        {HLIO}
\bibfield{author}{\bibinfo{person}{Pablo Buiras}, \bibinfo{person}{Dimitrios
  Vytiniotis}, {and} \bibinfo{person}{Alejandro Russo}.}
  \bibinfo{year}{2015}\natexlab{}.
\newblock \showarticletitle{HLIO: Mixing static and dynamic typing for
  information-flow control in Haskell}. In \bibinfo{booktitle}{{\em ACM SIGPLAN
  Notices}}, Vol.~\bibinfo{volume}{50}. ACM, \bibinfo{pages}{289--301}.
\newblock


\bibitem[\protect\citeauthoryear{De~Groef, Devriese, Nikiforakis, and
  Piessens}{De~Groef et~al\mbox{.}}{2012}]%
        {de2012flowfox}
\bibfield{author}{\bibinfo{person}{Willem De~Groef}, \bibinfo{person}{Dominique
  Devriese}, \bibinfo{person}{Nick Nikiforakis}, {and} \bibinfo{person}{Frank
  Piessens}.} \bibinfo{year}{2012}\natexlab{}.
\newblock \showarticletitle{FlowFox: a web browser with flexible and precise
  information flow control}. In \bibinfo{booktitle}{{\em Proceedings of the
  2012 ACM conference on Computer and communications security}}. ACM,
  \bibinfo{pages}{748--759}.
\newblock


\bibitem[\protect\citeauthoryear{De~Groef, Devriese, Nikiforakis, and
  Piessens}{De~Groef et~al\mbox{.}}{2014}]%
        {de2014secure}
\bibfield{author}{\bibinfo{person}{Willem De~Groef}, \bibinfo{person}{Dominique
  Devriese}, \bibinfo{person}{Nick Nikiforakis}, {and} \bibinfo{person}{Frank
  Piessens}.} \bibinfo{year}{2014}\natexlab{}.
\newblock \showarticletitle{Secure multi-execution of web scripts: Theory and
  practice}.
\newblock \bibinfo{journal}{{\em Journal of Computer Security\/}}
  \bibinfo{volume}{22}, \bibinfo{number}{4} (\bibinfo{year}{2014}),
  \bibinfo{pages}{469--509}.
\newblock


\bibitem[\protect\citeauthoryear{Denning}{Denning}{1976}]%
        {Denning}
\bibfield{author}{\bibinfo{person}{Dorothy~E Denning}.}
  \bibinfo{year}{1976}\natexlab{}.
\newblock \showarticletitle{A lattice model of secure information flow}.
\newblock \bibinfo{journal}{{\it Commun. ACM}} \bibinfo{volume}{19},
  \bibinfo{number}{5} (\bibinfo{year}{1976}), \bibinfo{pages}{236--243}.
\newblock


\bibitem[\protect\citeauthoryear{Devriese and Piessens}{Devriese and
  Piessens}{2010}]%
        {SME}
\bibfield{author}{\bibinfo{person}{Dominique Devriese} {and}
  \bibinfo{person}{Frank Piessens}.} \bibinfo{year}{2010}\natexlab{}.
\newblock \showarticletitle{Noninterference through secure multi-execution}. In
  \bibinfo{booktitle}{{\em 2010 IEEE Symposium on Security and Privacy}}. IEEE,
  \bibinfo{pages}{109--124}.
\newblock


\bibitem[\protect\citeauthoryear{Giffin, Levy, Stefan, Terei, Mazi{\`e}res,
  Mitchell, and Russo}{Giffin et~al\mbox{.}}{2012}]%
        {giffin2012hails}
\bibfield{author}{\bibinfo{person}{Daniel~B Giffin}, \bibinfo{person}{Amit
  Levy}, \bibinfo{person}{Deian Stefan}, \bibinfo{person}{David Terei},
  \bibinfo{person}{David Mazi{\`e}res}, \bibinfo{person}{John~C Mitchell},
  {and} \bibinfo{person}{Alejandro Russo}.} \bibinfo{year}{2012}\natexlab{}.
\newblock \showarticletitle{Hails: Protecting data privacy in untrusted web
  applications}. In \bibinfo{booktitle}{{\em Presented as part of the 10th
  $\{$USENIX$\}$ Symposium on Operating Systems Design and Implementation
  ($\{$OSDI$\}$ 12)}}. \bibinfo{pages}{47--60}.
\newblock


\bibitem[\protect\citeauthoryear{Goguen and Meseguer}{Goguen and
  Meseguer}{1982}]%
        {goguen1982security}
\bibfield{author}{\bibinfo{person}{Joseph~A Goguen} {and}
  \bibinfo{person}{Jos{\'e} Meseguer}.} \bibinfo{year}{1982}\natexlab{}.
\newblock \showarticletitle{Security policies and security models}. In
  \bibinfo{booktitle}{{\em 1982 IEEE Symposium on Security and Privacy}}. IEEE,
  \bibinfo{pages}{11--11}.
\newblock


\bibitem[\protect\citeauthoryear{Hedin, Birgisson, Bello, and Sabelfeld}{Hedin
  et~al\mbox{.}}{2014}]%
        {hedin2014jsflow}
\bibfield{author}{\bibinfo{person}{Daniel Hedin}, \bibinfo{person}{Arnar
  Birgisson}, \bibinfo{person}{Luciano Bello}, {and} \bibinfo{person}{Andrei
  Sabelfeld}.} \bibinfo{year}{2014}\natexlab{}.
\newblock \showarticletitle{JSFlow: Tracking information flow in JavaScript and
  its APIs}. In \bibinfo{booktitle}{{\em Proceedings of the 29th Annual ACM
  Symposium on Applied Computing}}. \bibinfo{pages}{1663--1671}.
\newblock


\bibitem[\protect\citeauthoryear{Hedin and Sabelfeld}{Hedin and
  Sabelfeld}{2012}]%
        {hedin2012perspective}
\bibfield{author}{\bibinfo{person}{Daniel Hedin} {and} \bibinfo{person}{Andrei
  Sabelfeld}.} \bibinfo{year}{2012}\natexlab{}.
\newblock \showarticletitle{A Perspective on Information-Flow Control.}
\newblock \bibinfo{journal}{{\em Software Safety and Security\/}}
  \bibinfo{volume}{33} (\bibinfo{year}{2012}), \bibinfo{pages}{319--347}.
\newblock


\bibitem[\protect\citeauthoryear{Hudak, Peyton~Jones, Wadler, Boutel,
  Fairbairn, Fasel, Guzm{\'a}n, Hammond, Hughes, Johnsson, et~al\mbox{.}}{Hudak
  et~al\mbox{.}}{1992}]%
        {Haskell}
\bibfield{author}{\bibinfo{person}{Paul Hudak}, \bibinfo{person}{Simon
  Peyton~Jones}, \bibinfo{person}{Philip Wadler}, \bibinfo{person}{Brian
  Boutel}, \bibinfo{person}{Jon Fairbairn}, \bibinfo{person}{Joseph Fasel},
  \bibinfo{person}{Mar{\'\i}a~M Guzm{\'a}n}, \bibinfo{person}{Kevin Hammond},
  \bibinfo{person}{John Hughes}, \bibinfo{person}{Thomas Johnsson},
  {et~al\mbox{.}}} \bibinfo{year}{1992}\natexlab{}.
\newblock \showarticletitle{Report on the programming language Haskell: a
  non-strict, purely functional language version 1.2}.
\newblock \bibinfo{journal}{{\em ACM SigPlan notices\/}} \bibinfo{volume}{27},
  \bibinfo{number}{5} (\bibinfo{year}{1992}), \bibinfo{pages}{1--164}.
\newblock


\bibitem[\protect\citeauthoryear{Jaskelioff and Russo}{Jaskelioff and
  Russo}{2011}]%
        {jaskelioff2011secure}
\bibfield{author}{\bibinfo{person}{Mauro Jaskelioff} {and}
  \bibinfo{person}{Alejandro Russo}.} \bibinfo{year}{2011}\natexlab{}.
\newblock \showarticletitle{Secure multi-execution in Haskell}. In
  \bibinfo{booktitle}{{\em International Andrei Ershov Memorial Conference on
  Perspectives of System Informatics}}. Springer, \bibinfo{pages}{170--178}.
\newblock


\bibitem[\protect\citeauthoryear{King, Hicks, Hicks, and Jaeger}{King
  et~al\mbox{.}}{2008}]%
        {King}
\bibfield{author}{\bibinfo{person}{Dave King}, \bibinfo{person}{Boniface
  Hicks}, \bibinfo{person}{Michael Hicks}, {and} \bibinfo{person}{Trent
  Jaeger}.} \bibinfo{year}{2008}\natexlab{}.
\newblock \showarticletitle{Implicit flows: Can't live with 'em, can't live
  without 'em}. In \bibinfo{booktitle}{{\em International Conference on
  Information Systems Security}}. Springer, \bibinfo{pages}{56--70}.
\newblock


\bibitem[\protect\citeauthoryear{Levenberg}{Levenberg}{1944}]%
        {levenberg1944method}
\bibfield{author}{\bibinfo{person}{Kenneth Levenberg}.}
  \bibinfo{year}{1944}\natexlab{}.
\newblock \showarticletitle{A method for the solution of certain non-linear
  problems in least squares}.
\newblock \bibinfo{journal}{{\em Quarterly of applied mathematics\/}}
  \bibinfo{volume}{2}, \bibinfo{number}{2} (\bibinfo{year}{1944}),
  \bibinfo{pages}{164--168}.
\newblock


\bibitem[\protect\citeauthoryear{Magazinius, Askarov, and Sabelfeld}{Magazinius
  et~al\mbox{.}}{2010a}]%
        {magazinius2010lattice}
\bibfield{author}{\bibinfo{person}{Jonas Magazinius}, \bibinfo{person}{Aslan
  Askarov}, {and} \bibinfo{person}{Andrei Sabelfeld}.}
  \bibinfo{year}{2010}\natexlab{a}.
\newblock \showarticletitle{A lattice-based approach to mashup security}. In
  \bibinfo{booktitle}{{\em Proceedings of the 5th ACM symposium on information,
  computer and communications security}}. \bibinfo{pages}{15--23}.
\newblock


\bibitem[\protect\citeauthoryear{Magazinius, Russo, and Sabelfeld}{Magazinius
  et~al\mbox{.}}{2010b}]%
        {InlineMonitors}
\bibfield{author}{\bibinfo{person}{Jonas Magazinius},
  \bibinfo{person}{Alejandro Russo}, {and} \bibinfo{person}{Andrei Sabelfeld}.}
  \bibinfo{year}{2010}\natexlab{b}.
\newblock \showarticletitle{On-the-fly Inlining of Dynamic Security Monitors}.
  In \bibinfo{booktitle}{{\em Security and Privacy -- Silver Linings in the
  Cloud}}, \bibfield{editor}{\bibinfo{person}{Kai Rannenberg},
  \bibinfo{person}{Vijay Varadharajan}, {and} \bibinfo{person}{Christian
  Weber}} (Eds.). \bibinfo{publisher}{Springer Berlin Heidelberg},
  \bibinfo{address}{Berlin, Heidelberg}, \bibinfo{pages}{173--186}.
\newblock
\showISBNx{978-3-642-15257-3}


\bibitem[\protect\citeauthoryear{Micinski, Darais, and Gilray}{Micinski
  et~al\mbox{.}}{2020}]%
        {micinski2020abstracting}
\bibfield{author}{\bibinfo{person}{Kristopher Micinski}, \bibinfo{person}{David
  Darais}, {and} \bibinfo{person}{Thomas Gilray}.}
  \bibinfo{year}{2020}\natexlab{}.
\newblock \showarticletitle{Abstracting Faceted Execution}. In
  \bibinfo{booktitle}{{\em 2020 IEEE 33rd Computer Security Foundations
  Symposium (CSF)}}. IEEE, \bibinfo{pages}{184--198}.
\newblock


\bibitem[\protect\citeauthoryear{Munro, Sandlund, and Sinnamon}{Munro
  et~al\mbox{.}}{2019}]%
        {munro2019space}
\bibfield{author}{\bibinfo{person}{J~Ian Munro}, \bibinfo{person}{Bryce
  Sandlund}, {and} \bibinfo{person}{Corwin Sinnamon}.}
  \bibinfo{year}{2019}\natexlab{}.
\newblock \showarticletitle{Space-Efficient Data Structures for Lattices}.
\newblock \bibinfo{journal}{{\em arXiv preprint arXiv:1902.05166\/}}
  (\bibinfo{year}{2019}).
\newblock


\bibitem[\protect\citeauthoryear{Myers and Liskov}{Myers and Liskov}{2000}]%
        {DLM}
\bibfield{author}{\bibinfo{person}{Andrew~C Myers} {and}
  \bibinfo{person}{Barbara Liskov}.} \bibinfo{year}{2000}\natexlab{}.
\newblock \showarticletitle{Protecting privacy using the decentralized label
  model}.
\newblock \bibinfo{journal}{{\em ACM Transactions on Software Engineering and
  Methodology (TOSEM)\/}} \bibinfo{volume}{9}, \bibinfo{number}{4}
  (\bibinfo{year}{2000}), \bibinfo{pages}{410--442}.
\newblock


\bibitem[\protect\citeauthoryear{Ngo, Bielova, Flanagan, Rezk, Russo, and
  Schmitz}{Ngo et~al\mbox{.}}{2018}]%
        {OGMF}
\bibfield{author}{\bibinfo{person}{Minh Ngo}, \bibinfo{person}{Nataliia
  Bielova}, \bibinfo{person}{Cormac Flanagan}, \bibinfo{person}{Tamara Rezk},
  \bibinfo{person}{Alejandro Russo}, {and} \bibinfo{person}{Thomas Schmitz}.}
  \bibinfo{year}{2018}\natexlab{}.
\newblock \showarticletitle{A better facet of dynamic information flow
  control}. In \bibinfo{booktitle}{{\em WWW'18 Companion: The 2018 Web
  Conference Companion}}. \bibinfo{pages}{1--9}.
\newblock


\bibitem[\protect\citeauthoryear{Ngo, Massacci, Milushev, and Piessens}{Ngo
  et~al\mbox{.}}{2015}]%
        {ngo2015runtime}
\bibfield{author}{\bibinfo{person}{Minh Ngo}, \bibinfo{person}{Fabio Massacci},
  \bibinfo{person}{Dimiter Milushev}, {and} \bibinfo{person}{Frank Piessens}.}
  \bibinfo{year}{2015}\natexlab{}.
\newblock \showarticletitle{Runtime enforcement of security policies on black
  box reactive programs}. In \bibinfo{booktitle}{{\em Proceedings of the 42nd
  Annual ACM SIGPLAN-SIGACT Symposium on Principles of Programming Languages}}.
  \bibinfo{pages}{43--54}.
\newblock


\bibitem[\protect\citeauthoryear{Ngo, Piessens, and Rezk}{Ngo
  et~al\mbox{.}}{2018}]%
        {ngo2018impossibility}
\bibfield{author}{\bibinfo{person}{Minh Ngo}, \bibinfo{person}{Frank Piessens},
  {and} \bibinfo{person}{Tamara Rezk}.} \bibinfo{year}{2018}\natexlab{}.
\newblock \showarticletitle{Impossibility of Precise and Sound
  Termination-Sensitive Security Enforcements}. In \bibinfo{booktitle}{{\em
  2018 IEEE Symposium on Security and Privacy (SP)}}. IEEE,
  \bibinfo{pages}{496--513}.
\newblock


\bibitem[\protect\citeauthoryear{Parker, Vazou, and Hicks}{Parker
  et~al\mbox{.}}{2019}]%
        {parker2019lweb}
\bibfield{author}{\bibinfo{person}{James Parker}, \bibinfo{person}{Niki Vazou},
  {and} \bibinfo{person}{Michael Hicks}.} \bibinfo{year}{2019}\natexlab{}.
\newblock \showarticletitle{LWeb: Information flow security for multi-tier web
  applications}.
\newblock \bibinfo{journal}{{\em Proceedings of the ACM on Programming
  Languages\/}} \bibinfo{volume}{3}, \bibinfo{number}{POPL}
  (\bibinfo{year}{2019}), \bibinfo{pages}{1--30}.
\newblock


\bibitem[\protect\citeauthoryear{Pfeffer, G{\"o}thel, and Glesner}{Pfeffer
  et~al\mbox{.}}{2019}]%
        {pfeffer2019efficient}
\bibfield{author}{\bibinfo{person}{Tobias Pfeffer}, \bibinfo{person}{Thomas
  G{\"o}thel}, {and} \bibinfo{person}{Sabine Glesner}.}
  \bibinfo{year}{2019}\natexlab{}.
\newblock \showarticletitle{Efficient and Precise Information Flow Control for
  Machine Code through Demand-Driven Secure Multi-Execution}. In
  \bibinfo{booktitle}{{\em Proceedings of the Ninth ACM Conference on Data and
  Application Security and Privacy}}. ACM, \bibinfo{pages}{197--208}.
\newblock


\bibitem[\protect\citeauthoryear{Rafnsson and Sabelfeld}{Rafnsson and
  Sabelfeld}{2016}]%
        {rafnsson2016secure}
\bibfield{author}{\bibinfo{person}{Willard Rafnsson} {and}
  \bibinfo{person}{Andrei Sabelfeld}.} \bibinfo{year}{2016}\natexlab{}.
\newblock \showarticletitle{Secure multi-execution: Fine-grained,
  declassification-aware, and transparent}.
\newblock \bibinfo{journal}{{\em Journal of Computer Security\/}}
  \bibinfo{volume}{24}, \bibinfo{number}{1} (\bibinfo{year}{2016}),
  \bibinfo{pages}{39--90}.
\newblock


\bibitem[\protect\citeauthoryear{Russo}{Russo}{2015}]%
        {MAC}
\bibfield{author}{\bibinfo{person}{Alejandro Russo}.}
  \bibinfo{year}{2015}\natexlab{}.
\newblock \showarticletitle{Functional pearl: Two can keep a secret, if one of
  them uses Haskell}. In \bibinfo{booktitle}{{\em ACM SIGPLAN Notices}},
  Vol.~\bibinfo{volume}{50}. ACM, \bibinfo{pages}{280--288}.
\newblock


\bibitem[\protect\citeauthoryear{Sabelfeld and Myers}{Sabelfeld and
  Myers}{2003}]%
        {sabelfeld2003language}
\bibfield{author}{\bibinfo{person}{Andrei Sabelfeld} {and}
  \bibinfo{person}{Andrew~C Myers}.} \bibinfo{year}{2003}\natexlab{}.
\newblock \showarticletitle{Language-based information-flow security}.
\newblock \bibinfo{journal}{{\em IEEE Journal on selected areas in
  communications\/}} \bibinfo{volume}{21}, \bibinfo{number}{1}
  (\bibinfo{year}{2003}), \bibinfo{pages}{5--19}.
\newblock


\bibitem[\protect\citeauthoryear{Schmitz, Algehed, Flanagan, and Russo}{Schmitz
  et~al\mbox{.}}{2018}]%
        {FSME}
\bibfield{author}{\bibinfo{person}{Thomas Schmitz}, \bibinfo{person}{Maximilian
  Algehed}, \bibinfo{person}{Cormac Flanagan}, {and} \bibinfo{person}{Alejandro
  Russo}.} \bibinfo{year}{2018}\natexlab{}.
\newblock \showarticletitle{Faceted Secure Multi Execution}. In
  \bibinfo{booktitle}{{\em Proceedings of the 2018 ACM SIGSAC Conference on
  Computer and Communications Security}}. ACM, \bibinfo{pages}{1617--1634}.
\newblock


\bibitem[\protect\citeauthoryear{Schmitz, Rhodes, Austin, Knowles, and
  Flanagan}{Schmitz et~al\mbox{.}}{2016}]%
        {schmitz2016faceted}
\bibfield{author}{\bibinfo{person}{Thomas Schmitz}, \bibinfo{person}{Dustin
  Rhodes}, \bibinfo{person}{Thomas~H Austin}, \bibinfo{person}{Kenneth
  Knowles}, {and} \bibinfo{person}{Cormac Flanagan}.}
  \bibinfo{year}{2016}\natexlab{}.
\newblock \showarticletitle{Faceted dynamic information flow via control and
  data monads}. In \bibinfo{booktitle}{{\em International Conference on
  Principles of Security and Trust}}. Springer, \bibinfo{pages}{3--23}.
\newblock


\bibitem[\protect\citeauthoryear{Simonet and Rocquencourt}{Simonet and
  Rocquencourt}{2003}]%
        {FlowCaml}
\bibfield{author}{\bibinfo{person}{Vincent Simonet} {and}
  \bibinfo{person}{Inria Rocquencourt}.} \bibinfo{year}{2003}\natexlab{}.
\newblock \showarticletitle{Flow Caml in a nutshell}. In
  \bibinfo{booktitle}{{\em Proceedings of the first APPSEM-II workshop}}.
  \bibinfo{pages}{152--165}.
\newblock


\bibitem[\protect\citeauthoryear{Staicu, Schoepe, Balliu, Pradel, and
  Sabelfeld}{Staicu et~al\mbox{.}}{2019}]%
        {staicu2019empirical}
\bibfield{author}{\bibinfo{person}{Cristian-Alexandru Staicu},
  \bibinfo{person}{Daniel Schoepe}, \bibinfo{person}{Musard Balliu},
  \bibinfo{person}{Michael Pradel}, {and} \bibinfo{person}{Andrei Sabelfeld}.}
  \bibinfo{year}{2019}\natexlab{}.
\newblock \showarticletitle{An empirical study of information flows in
  real-world javascript}. In \bibinfo{booktitle}{{\em Proceedings of the 14th
  ACM SIGSAC Workshop on Programming Languages and Analysis for Security}}.
  \bibinfo{pages}{45--59}.
\newblock


\bibitem[\protect\citeauthoryear{Stefan, Russo, Mazi{\`e}res, and
  Mitchell}{Stefan et~al\mbox{.}}{2011a}]%
        {DCLabels}
\bibfield{author}{\bibinfo{person}{Deian Stefan}, \bibinfo{person}{Alejandro
  Russo}, \bibinfo{person}{David Mazi{\`e}res}, {and} \bibinfo{person}{John~C
  Mitchell}.} \bibinfo{year}{2011}\natexlab{a}.
\newblock \showarticletitle{Disjunction category labels}. In
  \bibinfo{booktitle}{{\em Nordic conference on secure IT systems}}. Springer,
  \bibinfo{pages}{223--239}.
\newblock


\bibitem[\protect\citeauthoryear{Stefan, Russo, Mitchell, and
  Mazi{\`e}res}{Stefan et~al\mbox{.}}{2011b}]%
        {LIO}
\bibfield{author}{\bibinfo{person}{Deian Stefan}, \bibinfo{person}{Alejandro
  Russo}, \bibinfo{person}{John~C Mitchell}, {and} \bibinfo{person}{David
  Mazi{\`e}res}.} \bibinfo{year}{2011}\natexlab{b}.
\newblock \showarticletitle{Flexible dynamic information flow control in
  Haskell}. In \bibinfo{booktitle}{{\em ACM Sigplan Notices}},
  Vol.~\bibinfo{volume}{46}. ACM, \bibinfo{pages}{95--106}.
\newblock


\bibitem[\protect\citeauthoryear{Stefan, Yang, Marchenko, Russo, Herman, Karp,
  and Mazi\`eres}{Stefan et~al\mbox{.}}{2014}]%
        {stefan:2014:protecting}
\bibfield{author}{\bibinfo{person}{Deian Stefan}, \bibinfo{person}{Edward~Z.
  Yang}, \bibinfo{person}{Petr Marchenko}, \bibinfo{person}{Alejandro Russo},
  \bibinfo{person}{Dave Herman}, \bibinfo{person}{Brad Karp}, {and}
  \bibinfo{person}{David Mazi\`eres}.} \bibinfo{year}{2014}\natexlab{}.
\newblock \showarticletitle{Protecting Users by Confining {JavaScript} with
  {COWL}}. In \bibinfo{booktitle}{{\em Symposium on Operating Systems Design
  and Implementation (OSDI)}}. \bibinfo{publisher}{USENIX}.
\newblock


\bibitem[\protect\citeauthoryear{Vassena, Breitner, and Russo}{Vassena
  et~al\mbox{.}}{2017}]%
        {vassena2017securing}
\bibfield{author}{\bibinfo{person}{Marco Vassena}, \bibinfo{person}{Joachim
  Breitner}, {and} \bibinfo{person}{Alejandro Russo}.}
  \bibinfo{year}{2017}\natexlab{}.
\newblock \showarticletitle{Securing concurrent lazy programs against
  information leakage}. In \bibinfo{booktitle}{{\em 2017 IEEE 30th Computer
  Security Foundations Symposium (CSF)}}. IEEE, \bibinfo{pages}{37--52}.
\newblock


\bibitem[\protect\citeauthoryear{Vassena, Soeller, Amidon, Chan, Renner, and
  Stefan}{Vassena et~al\mbox{.}}{2019}]%
        {vassena2019foundations}
\bibfield{author}{\bibinfo{person}{Marco Vassena}, \bibinfo{person}{Gary
  Soeller}, \bibinfo{person}{Peter Amidon}, \bibinfo{person}{Matthew Chan},
  \bibinfo{person}{John Renner}, {and} \bibinfo{person}{Deian Stefan}.}
  \bibinfo{year}{2019}\natexlab{}.
\newblock \showarticletitle{Foundations for Parallel Information Flow Control
  Runtime Systems}. In \bibinfo{booktitle}{{\em International Conference on
  Principles of Security and Trust}}. Springer, \bibinfo{pages}{1--28}.
\newblock


\bibitem[\protect\citeauthoryear{Waye, Buiras, Arden, Russo, and Chong}{Waye
  et~al\mbox{.}}{2017}]%
        {waye2017cryptographically}
\bibfield{author}{\bibinfo{person}{Lucas Waye}, \bibinfo{person}{Pablo Buiras},
  \bibinfo{person}{Owen Arden}, \bibinfo{person}{Alejandro Russo}, {and}
  \bibinfo{person}{Stephen Chong}.} \bibinfo{year}{2017}\natexlab{}.
\newblock \showarticletitle{Cryptographically secure information flow control
  on key-value stores}. In \bibinfo{booktitle}{{\em Proceedings of the 2017 ACM
  SIGSAC Conference on Computer and Communications Security}}.
  \bibinfo{pages}{1893--1907}.
\newblock


\bibitem[\protect\citeauthoryear{Williams, Kelley, and {many others}}{Williams
  et~al\mbox{.}}{2019}]%
        {gnuplot}
\bibfield{author}{\bibinfo{person}{Thomas Williams}, \bibinfo{person}{Colin
  Kelley}, {and} \bibinfo{person}{{many others}}.}
  \bibinfo{year}{2019}\natexlab{}.
\newblock \bibinfo{title}{Gnuplot 5.2: an interactive plotting program}.
\newblock \bibinfo{howpublished}{\url{http://www.gnuplot.info//}}.
  (\bibinfo{date}{Dec} \bibinfo{year}{2019}).
\newblock


\bibitem[\protect\citeauthoryear{Yip, Narula, Krohn, and Morris}{Yip
  et~al\mbox{.}}{2009}]%
        {yip2009privacy}
\bibfield{author}{\bibinfo{person}{Alexander Yip}, \bibinfo{person}{Neha
  Narula}, \bibinfo{person}{Maxwell Krohn}, {and} \bibinfo{person}{Robert
  Morris}.} \bibinfo{year}{2009}\natexlab{}.
\newblock \showarticletitle{Privacy-preserving browser-side scripting with
  BFlow}. In \bibinfo{booktitle}{{\em Proceedings of the 4th ACM European
  conference on Computer systems}}. \bibinfo{pages}{233--246}.
\newblock


\bibitem[\protect\citeauthoryear{Zanarini, Jaskelioff, and Russo}{Zanarini
  et~al\mbox{.}}{2013}]%
        {zanarini2013precise}
\bibfield{author}{\bibinfo{person}{Dante Zanarini}, \bibinfo{person}{Mauro
  Jaskelioff}, {and} \bibinfo{person}{Alejandro Russo}.}
  \bibinfo{year}{2013}\natexlab{}.
\newblock \showarticletitle{Precise enforcement of confidentiality for reactive
  systems}. In \bibinfo{booktitle}{{\em 2013 IEEE 26th Computer Security
  Foundations Symposium}}. IEEE, \bibinfo{pages}{18--32}.
\newblock


\end{thebibliography}

\appendix

\appendixsection{Great and Small}
\label{app:greatAndSmall} 
\begin{restatable}{Lemma}{lemmauseful}
  \label{lem:useful}
  ~
  \begin{enumerate}
    \item $f(n)$ is $\Oh(g(n))$ if and only if $g(n)$ is $\Omega(f(n))$
    \item If $f(n)$ is $\Theta(g(n))$, then $g(n)$ is $\Theta(f(n))$
    \item If $f$ and $g$ are everywhere positive, then $\maxi(f(n),g(n))$ is $\Theta(f(n) + g(n)$.
  \end{enumerate}
\end{restatable}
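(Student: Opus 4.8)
The plan is to dispatch the three items in order, using the earlier items as lemmas for the later ones, and relying only on Definition \ref{def:bounds}. Each part is a direct unfolding of the definitions together with an elementary manipulation of the witnessing constant $C \in \Q^+$ and threshold $N_0 \in \Nat$; the only thing to be mildly careful about is that whenever we divide a rational constant we get another element of $\Q^+$, so the witnesses stay of the required type.

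For part (1), I would argue both directions at once. Suppose $f(n)$ is $\Oh(g(n))$, witnessed by $N_0$ and $C$, so $f(n) \le g(n)C$ for all $n \ge N_0$. Dividing by $C$ gives $g(n) \ge f(n)\cdot(1/C)$ for all $n \ge N_0$, and since $1/C \in \Q^+$ this is exactly the statement that $g(n)$ is $\Omega(f(n))$ with witnesses $N_0$ and $1/C$. The converse is symmetric (multiply by the constant instead of dividing), so the biconditional holds.

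For part (2), I would just combine part (1) with the definition of $\Theta$. If $f(n)$ is $\Theta(g(n))$ then $f$ is both $\Oh(g)$ and $\Omega(g)$. By part (1), $f$ being $\Oh(g)$ gives $g$ being $\Omega(f)$, and $f$ being $\Omega(g)$ gives $g$ being $\Oh(f)$; hence $g$ is both $\Oh(f)$ and $\Omega(f)$, i.e.\ $\Theta(f)$. For part (3), with $f,g$ everywhere positive I would prove the two bounds separately with threshold $N_0 = 0$: for the upper bound, $\maxi(f(n),g(n)) \le f(n) + g(n)$ since both summands are nonnegative, giving $\Oh(f(n)+g(n))$ with constant $1$; for the lower bound, $\maxi(f(n),g(n)) \ge f(n)$ and $\maxi(f(n),g(n)) \ge g(n)$, so $2\,\maxi(f(n),g(n)) \ge f(n)+g(n)$, i.e.\ $\maxi(f(n),g(n)) \ge \tfrac12(f(n)+g(n))$, giving $\Omega(f(n)+g(n))$ with constant $\tfrac12$. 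Together these yield $\Theta(f(n)+g(n))$. I do not anticipate any real obstacle here; the ``hard part'', such as it is, is merely bookkeeping — keeping the rational constants positive and noting that the positivity hypothesis in (3) is what licenses the inequality $\maxi(f,g) \le f+g$.
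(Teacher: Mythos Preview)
Your proposal is correct and follows essentially the same route as the paper: part (1) is handled by inverting the constant $C$, part (2) is derived from (1), and part (3) uses the elementary bounds $\tfrac{1}{2}(f(n)+g(n)) \le \maxi(f(n),g(n)) \le f(n)+g(n)$. Your write-up is in fact slightly more careful about bookkeeping than the paper's, which is terse to the point of omitting the explicit derivation of (2) from (1).
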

\begin{proof}
    For (1) and (2) consider that if $f(n)$ is $\Oh(g(n))$ then for some $N_f
    \ge 0$ and $C_f > 0$ we have that for all $n \ge N_f$, $f(n) \le g(n)C$,
    but then $g(n) \ge f(n)C_f^{-1}$ and so $g(n)$ is $\Omega(f(n))$
    with $N_0 = N_f$ and $C = C_f^{-1}$ and vice verse.
    For (3) we have the bounds $\frac{1}{2}(f(n) + g(n)) \le \maxi(f(n),g(n)) \le f(n) + g(n)$.
\end{proof}

\begin{Lemma}[Closure Monotonicity]
  $\CS{\L}$ is monotonic.
\end{Lemma}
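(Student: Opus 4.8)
The plan is to read monotonicity directly off the definition of $\CS{\L}$ as a maximum over a family of sets that only grows with $n$. First I would fix $m, n \in \Nat$ with $m \le n$. Any $S \subseteq \L$ with $|S| \le m$ also satisfies $|S| \le n$, so
$$\{\, S \subseteq \L \mid |S| \le m \,\} \subseteq \{\, S \subseteq \L \mid |S| \le n \,\}.$$
Applying $|C(\cdot)|$ pointwise, the set of candidate values $\{\, |C(S)| \mid S \subseteq \L,\ |S| \le m \,\}$ is a subset of $\{\, |C(S)| \mid S \subseteq \L,\ |S| \le n \,\}$, and the maximum of a subset of a set of naturals is no larger than the maximum of the whole set. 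Hence $\CS{\L}(m) \le \CS{\L}(n)$, which is precisely the claim.

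The only point that needs a line of justification is that these maxima exist. For any $S$ with $|S| \le n$ we have $C(S) = \{\, \bigsqcup S' \mid S' \subseteq S \,\}$, so $|C(S)| \le 2^{|S|} \le 2^n$; thus $\{\, |C(S)| \mid S \subseteq \L,\ |S| \le n \,\}$ is a bounded, nonempty (witnessed by $S = \emptyset$, for which $|C(\emptyset)| = 1$) set of natural numbers and therefore has a maximum. With this observation in place the inclusion argument above goes through verbatim.

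I do not expect any genuine obstacle here: the statement is essentially immediate once the definition of closure-size is unfolded, and the only subtlety — well-definedness of the maximum — is dispatched by the $2^n$ bound (the same bound underlying Theorem \ref{thm:global-bounds}). If one preferred to avoid mentioning maxima at all, one could restate $\CS{\L}$ as a supremum over the same index set and invoke monotonicity of suprema under inclusion of index sets.
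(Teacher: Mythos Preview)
Your proof is correct and follows essentially the same idea as the paper's: both arguments exploit that the index set $\{\,S \subseteq \L \mid |S| \le m\,\}$ is contained in $\{\,S \subseteq \L \mid |S| \le n\,\}$ whenever $m \le n$, so the maximum cannot decrease. The paper phrases this by picking a witness $S_n$ achieving $\CS{\L}(n)$ and noting it is also a candidate for $\CS{\L}(m)$; your version states the inclusion-of-index-sets principle directly and additionally justifies that the maximum exists, which the paper leaves implicit.
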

\begin{proof}
  For each $n$ there is an $S_n$ such that $|S_n| \le n$
  and $\CS{\L}(n) = |C(S_n)|$, consider $m \ge n$, then
  $|S_n| \le n < m$ and so $\CS{\L}(n) = |C(S_n)| \le \CS{\L}(m)$.
\end{proof}

\lemmaomegafamilies*
\begin{proof}
  Assume a family $\{S_n\}_{n\in\Nat}$ with the
  required properties.
  By definition, $\CS{\L}(n) \ge |C(S_n)|$.
  Therefore $\CS{\L}(n)$ is $\Omega(|C(S_n)|)$.
  Consequently, $\CS{\L}(n)$ is also $\Omega(f(n))$.
  For the other direction, assume $\CS{\L}(n)$ is $\Omega(f(n))$, then by
  the definition of $\CS{\L}$ we have that for each $n$ there is a 
  $S_n \subseteq \L$ of size $n$ such that $|C(S_n)| = \CS{\L}(n)$.
  As $\CS{\L}(n)$ is $\Omega(f(n))$, so is $|C(S_n)|$, and so
  $S_n$ is the required family.
\end{proof}

\theoremglobalbounds*
\begin{proof}
  ~
  \begin{enumerate}
    \item For all $S \subseteq \L$, $|\{\ S'\ |\ S' \subseteq S \}| = 2^{|S|}$,
      so $|C(S)| \le 2^{|S|}$, consequently $\CS{\L}(n)$ is $\Oh(2^n)$.
    \item If $\L$ is non-finite observe that for each $S = \L$
  we have that $S \subseteq C(S)$ and so $|S| \le |C(S)|$ and because $\L$
  is non-finite for each $n \in \Nat$ there exists an $S_n \subseteq \L$ such
  that $|S_n| = n$.
  This defines a family $\{S_n\}_{n\in\Nat}$ that is $\Omega(n)$, consequently
  by Lemma \ref{lem:omega-families} so is $\L$.
  If $\L$ is finite observe that $0 < \CS{\L}(n) \le |\L|$ and so we pick $C =
  |\L|$ and $N_0 = 0$, we get that for all $n \ge 0 = N_0$, $\CS{\L}(n) \le
  |\L| = C = 1C$.
  \end{enumerate}
\end{proof}

\theoremembedding*
\begin{proof}
  If $h : \L \to \L'$ is an embedding of $\L$ in $\L'$, then clearly
  $h(C(S)) = C(h(S))$ as $h$ preserves joins.
  Furthermore, $h$ is injective and so $|h(S)| = |S|$ for all $S$.
  This means that $\CS{\L}(n) \le \CS{\L'}(n)$ as for every $S \subseteq \L$
  such that $|S| = n$ and $|C(S)| = m$, it is the case that
  $h(S) \subseteq \L'$, $|h(S)| = n$, and $|C(h(S))| = |h(C(S))| = |C(S)| = m$.
  From this the required bounds follow trivially.
\end{proof}

\begin{Lemma}
  \label{lem:lubs-distr}
  If $L = \{(\ell_1, \ell'_1),\ldots,(\ell_k,\ell'_k)\}$ where
  $\ell_i \in \L$ and $\ell'_i \in \L'$ for all $i$, then
  $\bigsqcup L = (\bigsqcup \{\ell_1,\ldots,\ell_k\},\bigsqcup \{\ell'_1,\ldots,\ell'_k\})$.
\end{Lemma}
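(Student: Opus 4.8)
The plan is to verify directly that the pair $(\bigsqcup\{\ell_1,\ldots,\ell_k\},\bigsqcup\{\ell'_1,\ldots,\ell'_k\})$ satisfies the defining property of $\bigsqcup L$ in the product lattice $\L\times\L'$: that it is an upper bound of $L$ and lies below every other upper bound of $L$. Uniqueness of least upper bounds (equivalently, antisymmetry of $\canFlowTo$ in $\L\times\L'$) then forces it to equal $\bigsqcup L$. The whole argument reduces to the componentwise characterisation of the product order and the universal property of $\bigsqcup$ in each factor.

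First I would show the pair is an upper bound of $L$. Fix an index $i$. By the definition of $\canFlowTo$ on $\L\times\L'$, the containment $(\ell_i,\ell'_i) \canFlowTo (\bigsqcup\{\ell_1,\ldots,\ell_k\},\bigsqcup\{\ell'_1,\ldots,\ell'_k\})$ holds exactly when $\ell_i \canFlowTo \bigsqcup\{\ell_1,\ldots,\ell_k\}$ in $\L$ and $\ell'_i \canFlowTo \bigsqcup\{\ell'_1,\ldots,\ell'_k\}$ in $\L'$, and both of these hold because a least upper bound of a finite set dominates each of its members.

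Next I would show the pair is least among all upper bounds. Suppose $(\jmath,\jmath')$ is any upper bound of $L$. Then for every $i$ we have $\ell_i \canFlowTo \jmath$ and $\ell'_i \canFlowTo \jmath'$, so $\jmath$ is an upper bound of $\{\ell_1,\ldots,\ell_k\}$ in $\L$ and $\jmath'$ is an upper bound of $\{\ell'_1,\ldots,\ell'_k\}$ in $\L'$; by the universal property of $\bigsqcup$ in each factor, $\bigsqcup\{\ell_1,\ldots,\ell_k\} \canFlowTo \jmath$ and $\bigsqcup\{\ell'_1,\ldots,\ell'_k\} \canFlowTo \jmath'$, which is precisely $(\bigsqcup\{\ell_1,\ldots,\ell_k\},\bigsqcup\{\ell'_1,\ldots,\ell'_k\}) \canFlowTo (\jmath,\jmath')$. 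Combining the two directions with antisymmetry gives the claimed equality. (An equivalent route is a short induction on $k$ using the binary case $(\ell_0,\ell_1)\lub(\jmath_0,\jmath_1) = (\ell_0\lub\jmath_0,\ell_1\lub\jmath_1)$ recorded just after the definition of the product lattice; in the degenerate case $k=0$ one notes $\bot_{\L\times\L'} = (\bot_\L,\bot_{\L'})$.)

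This argument is entirely routine and I do not expect any genuine obstacle. The only point requiring the slightest care is keeping the degenerate/empty case consistent with the stated convention for $\bigsqcup$, but since the lemma is only ever applied to the finite label sets that arise in the product-closure analysis, even that is unproblematic.
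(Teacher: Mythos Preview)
Your proof is correct. The paper's own proof takes the inductive route you mention parenthetically at the end (base case $k=0$ via $\bot_{\L\times\L'}=(\bot_\L,\bot_{\L'})$, inductive step via the binary identity $(\ell_0,\ell_1)\lub(\jmath_0,\jmath_1)=(\ell_0\lub\jmath_0,\ell_1\lub\jmath_1)$), whereas your primary argument verifies the universal property of the join directly in each component. Both are entirely routine; your direct argument is if anything slightly cleaner, since it avoids the index bookkeeping of the induction and appeals only to the definition of the product order and the defining property of $\bigsqcup$ in each factor.
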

\begin{proof}
  We prove this by induction on $k$.
  In the case when $k = 0$, we have that
  $\bigsqcup \emptyset = \bot = (\bot, \bot) = (\bigsqcup \emptyset, \bigsqcup \emptyset)$.
  In the case when $k = k_0 + 1$ we have, by the induction hypothesis, that
  $L_0 = \{(\ell_1, \ell'_1),\ldots,(\ell_{k_0},\ell'_{k_0})\}$ and
  $\bigsqcup L_0 = (\bigsqcup \{\ell_1,\ldots,\ell_{k_0}\},\bigsqcup \{\ell'_1,\ldots,\ell'_{k_0}\})$.
  If $L = L_0 \cup \{(\ell_k, \ell'_k)\}$ we have that
  \begin{align*}
    &\bigsqcup L =\\
    &L \lub (\ell_k, \ell'_k) =\\
    &(\bigsqcup \{\ell_1,\ldots,\ell_{k_0}\},\bigsqcup \{\ell'_1,\ldots,\ell'_{k_0}\}) \lub (\ell_k, \ell'_k) =\\ 
    &(\bigsqcup \{\ell_1,\ldots,\ell_{k}\},\bigsqcup \{\ell'_1,\ldots,\ell'_{k}\}),
  \end{align*}
    which completes the proof.
\end{proof}

\omegaproduct*
\begin{proof}
  By Lemma \ref{lem:omega-families} we have that both $\L$ and $\L'$ admit
  families $S_n$ and $S'_n$ where $|C(S_n)|$ and $|C(S'_n)|$ are
  $\Omega(l(n))$ and $\Omega(l'(n))$ respectively.
  As a consequence, $|C(S_{\lfloor{\frac{n}{s}}\rfloor})|$ and $|C(S'_{\lfloor{\frac{n}{s}}\rfloor})|$
  are $\Omega(l(\lfloor{\frac{n}{s}}\rfloor))$ and $\Omega(l'(\lfloor{\frac{n}{s}}\rfloor))$
  respectively.
  Let $Z_n = S_{\lfloor{\frac{n}{s}}\rfloor}$ and $Z'_n = S_{\lfloor{\frac{n}{s}}\rfloor}$.
  We now construct the family $P_n = Z_n\times\{\bot_\L'\}\cup \{\bot_\L\}\times Z'_n$
  in $\L\times\L'$.
  It is the case that $|P_n| = |Z_n| + |Z'_n| \le 2\lfloor \frac{n}{2} \rfloor \le n$.
  Finally, it remains to show that $C(P_n) \supseteq C(Z_n)\times C(Z'_n)$, which gives us the lower
  bound that $|C(P_n)|$ is $\Omega(l(\lfloor{\frac{n}{s}}\rfloor)l'(\lfloor{\frac{n}{s}}\rfloor))$
  as both $l(n)$ and $l'(n)$ are strictly positive functions.
  Let $k = \lfloor\frac{n}{2}\rfloor$, if $(\ell, \ell') \in C(Z_n)\times C(Z'_n)$ then by the definition of closure sets
  $(\ell, \ell') = (\bigsqcup L, \bigsqcup L')$ for $L = \{\ell_1,\ldots,\ell_k\} \subseteq Z_n$
  and $L' = \{\ell'_1,\ldots,\ell'_k\} \subseteq Z'_n$.
  By Lemma \ref{lem:lubs-distr} we have that
  $\bigsqcup\{(\ell_1,\ell'_1),\ldots,(\ell_k,\ell'_k)\} = (\bigsqcup L, \bigsqcup L') = (\ell, \ell')$.
  Finally, $\{(\ell_1,\ell'_1),\ldots,(\ell_k,\ell'_k)\} \in P_n$ and so $(\ell, \ell') \in C(P_n)$.
  Consequently, $C(P_n) \supseteq C(Z_n)\times C(Z'_n)$ and so $|C(P_n)|$ is 
  $\Omega(l(\lfloor{\frac{n}{s}}\rfloor)l'(\lfloor{\frac{n}{s}}\rfloor))$.
\end{proof}

\ohltimesl*
\begin{proof}
  We show that for all:
  $$S = \{(\ell_1, \ell'_1),\ldots,(\ell_n, \ell'_n)\} \subseteq \L\times\L'$$
  We have that:
  $$C(S) \subseteq C(\{\ell_1,\ldots,\ell_n\}) \times C(\{\ell'_1,\ldots,\ell'_n\})$$
  Let $L = \{\ell_1,\ldots,\ell_n\}$ and $L' = \{\ell'_1,\ldots,\ell'_n\}$.
  Now pick any $(\ell, \ell') \in C(S)$, by definition of $C(S)$ we have that:
  $$(\ell, \ell') = \bigsqcup\{(\jmath_1, \jmath'_1),\ldots,(\jmath_k,\jmath'_k)\}$$
  For some $J = \{\jmath_1,\ldots,\jmath_k\} \subseteq L$
  and $J' = \{\jmath'_1,\ldots,\jmath'_k\} \subseteq L'$.
  Furthermore, by Lemma \ref{lem:lubs-distr} we have that $(\ell, \ell') = (\bigsqcup J, \bigsqcup J')$.
  This gives us $\ell = \bigsqcup J \in C(L)$
  and $\ell' = \bigsqcup J' \in C(L')$.
  In other words, $(\ell, \ell') \in C(L) \times C(L')$ and so
  $C(S) \subseteq C(L) \times C(L')$.
  This immediately lets us conclude that:
  $$
  \CS{\L\times\L'}(n) \le \CS{\L}(n)\CS{\L'}(n)
  $$
  Giving us that $\L\times\L'$ is $\Oh(u(n)u'(n))$.
\end{proof}

\begin{Lemma}
  \label{lem:sum-lower-bound}
  If $\L$ and $\L'$ are $\Omega(f(n))$ and $\Omega(g(n))$ respectively, then
  $\L\vsum\L'$ and $\L\hsum\L'$ are both $\Omega(f(n) + g(n))$.
\end{Lemma}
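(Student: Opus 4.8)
The plan is to reduce the statement to Lemma~\ref{lem:omega-families} by transporting the witnessing families of $\L$ and $\L'$ into the two sum lattices, using the basic fact that in either sum a least upper bound of elements drawn from a single summand is computed inside that summand.

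First I would apply Lemma~\ref{lem:omega-families} twice to obtain families $\{S_n\}_{n\in\Nat}$ with $S_n\subseteq\L$, $|S_n|\le n$ and $|C(S_n)|$ being $\Omega(f(n))$, and $\{S'_n\}_{n\in\Nat}$ with $S'_n\subseteq\L'$, $|S'_n|\le n$ and $|C(S'_n)|$ being $\Omega(g(n))$. Let $\iota_0$ and $\iota_1$ denote the evident injections of $\L$ and $\L'$ into $\L\vsum\L'$ (resp.\ $\L\hsum\L'$). The key observation is that for non-empty $L\subseteq\L$ the join of $\iota_0(L)$ in the sum equals $\iota_0(\bigsqcup_\L L)$: any upper bound of $\iota_0(L)$ is either the image under $\iota_0$ of an $\L$-upper bound of $L$, or lies above every such image (for the vertical sum it is an element of $\iota_1(\L')$, for the horizontal sum it is the adjoined top), so $\iota_0(\bigsqcup_\L L)$ is the least one; and the empty join is the global bottom of the sum. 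Hence $C(\iota_0(S_n))$ consists of $\iota_0(\{\bigsqcup_\L L \mid \emptyset\neq L\subseteq S_n\})$ together with the global bottom of the sum, and a short case split — on whether $\bot_\L\in S_n$, and on whether the sum's bottom is $\iota_0(\bot_\L)$ (vertical) or a fresh atom (horizontal) — shows $|C(\iota_0(S_n))|\ge|C(S_n)|$ in every case. Taking $T_n=\iota_0(S_n)$ then witnesses, via Lemma~\ref{lem:omega-families}, that $\L\vsum\L'$ and $\L\hsum\L'$ are $\Omega(f(n))$; the symmetric argument with $\iota_1(S'_n)$ shows they are $\Omega(g(n))$ as well.

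Finally I would combine the two estimates: since $\CS{\L\vsum\L'}$ (and likewise $\CS{\L\hsum\L'}$) is both $\Omega(f(n))$ and $\Omega(g(n))$, there are $N_0$ and $C_1,C_2>0$ with $\CS{\L\vsum\L'}(n)\ge C_1 f(n)$ and $\CS{\L\vsum\L'}(n)\ge C_2 g(n)$ for all $n\ge N_0$, hence $\CS{\L\vsum\L'}(n)\ge\tfrac{1}{2}\min(C_1,C_2)(f(n)+g(n))$, i.e.\ $\Omega(f(n)+g(n))$; equivalently one may cite that $f(n)+g(n)$ is $\Theta(\maxi(f(n),g(n)))$ (Lemma~\ref{lem:useful}(3)) and that a lower bound by both $f$ and $g$ is a lower bound by their maximum. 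The hard part is only mild bookkeeping — checking that joins taken within one summand never escape to the other summand or collapse to the adjoined top, and tracking the slightly different placement of the global bottom in the two constructions so that $|C(\iota_0(S_n))|\ge|C(S_n)|$ holds uniformly; once that is in hand, the combination step is routine.
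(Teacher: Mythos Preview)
Your proposal is correct and follows essentially the same approach as the paper: transport the $\Omega$-families of $\L$ and $\L'$ into the sums via the evident injections, observe that closure sizes are preserved (the paper dispatches this with the single word ``Clearly'' while you spell out the bottom-element bookkeeping), and then combine via Lemma~\ref{lem:useful}(3). The only cosmetic difference is that the paper builds a single family $F_n$ by choosing, at each $n$, whichever of $\{0\}\times S_n$ or $\{1\}\times S'_n$ has the larger closure, thereby witnessing $\Omega(\maxi(f(n),g(n)))$ directly, whereas you establish $\Omega(f(n))$ and $\Omega(g(n))$ separately and combine them at the level of $\CS{}$.
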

\begin{proof}
  Call $S_n$ and $S'_n$ the respective $\Omega$-families of $\L$ and $\L'$ given
  by Lemma \ref{lem:omega-families}.
  We construct the family $F_n$ by, for each $n$, picking $F_n = \{0\}\times S_n$
  if $|C(S_n)| > |C(S'_n)|$ and $F_n = \{1\}\times S'_n$ otherwise.
  Clearly, $|C(F_n)| \ge \maxi(|C(S_n)|,|C(S'_n)|)$ in both $\L\vsum\L'$ and
  $\L\hsum\L'$ and so $|C(F_n)|$ is $\Omega(\maxi(f(n),g(n)))$.
  By Lemmas \ref{lem:useful} and \ref{lem:omega-families} we have that
  both $\L\vsum\L'$ and $\L\hsum\L'$ are $\Omega(f(n)) + g(n))$.
\end{proof}

\begin{Lemma}
  \label{lem:sum-upper-bound}
  If $\L$ and $\L'$ are $\Oh(f(n))$ and $\Oh(g(n))$ respectively, then
  $\L\vsum\L'$ and $\L\hsum\L'$ are both $\Oh(f(n) + g(n))$.
\end{Lemma}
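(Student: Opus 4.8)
The plan is to bound the closure-sizes of the two sums directly in terms of those of $\L$ and $\L'$: I will show $\CS{\L\vsum\L'}(n) \le \CS{\L}(n) + \CS{\L'}(n)$ and $\CS{\L\hsum\L'}(n) \le \CS{\L}(n) + \CS{\L'}(n) + 2$. The lemma then follows at once, since $\CS{\L}(n)$ is $\Oh(f(n))$, $\CS{\L'}(n)$ is $\Oh(g(n))$, a sum of functions that are $\Oh(f)$ and $\Oh(g)$ is $\Oh(f+g)$, and the stray additive constant in the horizontal case is harmless because $\CS{\L}(n) \ge 1$ for every $n$ (the empty join always lies in a closure set).

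First I would fix a set $S$ with $|S| \le n$ in the sum lattice and split it into the part $S_0$ lying in the copy of $\L$ and the part $S_1$ lying in the copy of $\L'$; in the horizontal case the glued bottom $0$ and top $1$, if they occur in $S$, are set aside. Writing $\pi_0(S) \subseteq \L$ and $\pi_1(S) \subseteq \L'$ for the images of $S_0$ and $S_1$ under the evident projections, we have $|\pi_0(S)| + |\pi_1(S)| \le |S| \le n$. The crux is to classify $\bigsqcup S'$ for an arbitrary $S' \subseteq S$. If $S'$ sits entirely inside the copy of $\L$ (possibly together with $0$, in the horizontal case) its join stays in that copy and has the form $(0,j)$ with $j \in C(\pi_0(S))$ — the empty join is covered here since $\bot_\L = \bigsqcup\emptyset \in C(\pi_0(S))$ — and symmetrically for the copy of $\L'$. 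If $S'$ is \emph{mixed}, meeting both copies, then in $\vsum$ the bottom $(1,\bot_{\L'})$ of the upper copy already lies above every element of the lower copy, so $\bigsqcup S'$ coincides with the join of the $\L'$-components of $S'$, again of the form $(1,j')$ with $j' \in C(\pi_1(S))$; in $\hsum$ the two middle copies are mutually incomparable, so the only common upper bound of a mixed set is $1$ and $\bigsqcup S' = 1$. Putting these cases together,
$$C(S) \;\subseteq\; \{\,(0,j) : j \in C(\pi_0(S))\,\} \;\cup\; \{\,(1,j') : j' \in C(\pi_1(S))\,\} \;\cup\; \{0,1\},$$
where the trailing $\{0,1\}$ is needed only for $\hsum$. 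Hence $|C(S)| \le |C(\pi_0(S))| + |C(\pi_1(S))| + 2$, the last term present only for the horizontal sum, and by Closure Monotonicity $|C(\pi_0(S))| \le \CS{\L}(|\pi_0(S)|) \le \CS{\L}(n)$ and likewise $|C(\pi_1(S))| \le \CS{\L'}(n)$; maximising over $S$ yields the two displayed bounds.

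The hard part, such as it is, will be pinning down the classification of $\bigsqcup S'$ for a mixed subset: for the vertical sum the key observation is that $(1,\bot_{\L'})$ dominates the entire lower copy, so a mixed join collapses to a join taken within the upper copy; for the horizontal sum the key observation is that no element of one middle copy is comparable to any element of the other, so a mixed join collapses to the glued top $1$. The rest — projecting $S$, invoking monotonicity of the closure-size function, and absorbing the additive $2$ into the $\Oh$-estimate — is routine bookkeeping.
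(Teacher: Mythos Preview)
Your proposal is correct and follows essentially the same route as the paper's proof: split $S$ into its components in the two summand lattices (setting aside $0,1$ in the horizontal case), observe that the join of any subset either stays inside one copy or, when mixed, collapses to the upper copy (for $\vsum$) or to the glued top $1$ (for $\hsum$), and conclude $C(S) \subseteq C(L) \uplus C(L')$ (resp.\ $\{0,1\} \cup C(L) \uplus C(L')$), then invoke monotonicity of $\CS{\L}$ and $\CS{\L'}$. Your write-up is in fact a bit more careful than the paper's about the empty join and the additive constant, but the argument is the same.
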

\begin{proof}
  For $\L\vsum\L'$ consider that $S \subseteq \L\vsum\L'$ means that
  there exists $L \subseteq \L$ and $L' \subseteq \L'$ such that
  $S = L \uplus L'$.
  Next we show that $C(S) \subseteq C(L) \uplus C(L')$, by observing that
  for any $J \subseteq L \uplus L'$ there are two cases, either there are
  elements $(1, \ell') \in J'$ such that $\ell' \in L'$ or there are not.
  In the first case, $\bigsqcup J = (1, \bigsqcup \{ \ell'\ |\ (1, \ell') \in J \})$,
  as $(0, \ell) \lub (1, \ell') = \ell'$ and so $\bigsqcup J \in \{1\}\times C(L') \subseteq C(L) \uplus C(L')$.
  In the second case, $\bigsqcup J = (0, \bigsqcup \{ \ell\ |\ (0, \ell) \in J \})$
  and so $\bigsqcup J \in \{0\}\times C(L) \subseteq C(L) \uplus C(L')$.
  Because $C(L \uplus L') \subseteq C(L) \uplus C(L')$ we also have that
  $|C(L \uplus L')| \le |C(L) \uplus C(L')| = |C(L)| + |C(L')|$.
  Because $\CS{\L}$ and $\CS{\L'}$ are both monotnoe functions, this
  means that if $|L \uplus L'| \le n$ then $|C(L)| + |C(L')| \le \CS{\L}(n) + \CS{\L'}(n)$,
  which gives us our bound on $\CS{\L\vsum\L'}(n)$ of $\Oh(f(n) + g(n))$.

  To see that $\L\hsum\L'$ is $\Oh(f(n) + g(n))$ observe
  that if $S \subseteq \L\hsum\L'$ then $C(S) \subseteq \{0,1\} \cup (C(L) \uplus C(L'))$
  for $L$ and $L'$ such that $S = L \uplus L' \cup J$ for $J \subseteq \{0, 1\}$.
  A similar observation as above then immediately gives the upper bound.
\end{proof}

\theoremsumbounds*
\begin{proof}
  Follows immediately from Lemmas \ref{lem:sum-lower-bound} and
  \ref{lem:sum-upper-bound}.
\end{proof}

\exponentialexponential*
\begin{proof}
  We construct the $\Omega$-family $S_n = \{\{\ell_1\}, \ldots, \{\ell_n\}\}$ where
  $\ell_i \in L$ and $\ell_i = \ell_j \implies i = j$.
  Clearly, $C(S_n)$ is the powerset of $\{\ell_1,\ldots,\ell_n\}$ and so has size
  $2^n$, using Lemma \ref{lem:omega-families} this gives us our lower bound on $2^\L$
  of $\Omega(2^n)$.
  Together with Theorem \ref{thm:global-bounds} we get that $\L$ is $\Theta(2^n)$.
\end{proof}

\begin{Lemma}
  \label{lem:bin-bounds}
  $$\frac{n^k}{k^k} \le {n \choose k}$$
\end{Lemma}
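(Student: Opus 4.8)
The plan is to rewrite the binomial coefficient as a product of $k$ ratios and to bound each ratio below by $n/k$. First I would record the identity
$${n \choose k} = \frac{n!}{k!(n-k)!} = \prod_{i=0}^{k-1}\frac{n-i}{k-i},$$
which follows by pairing the factor $n-i$ of the numerator $n(n-1)\cdots(n-k+1)$ with the factor $k-i$ of $k!$ for $i = 0, \ldots, k-1$. This rewriting is the whole trick: it exposes $k$ factors each of which is ``at least $n/k$'' when $n \geq k$.

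Next I would verify that $\frac{n-i}{k-i} \geq \frac{n}{k}$ for every $0 \leq i \leq k-1$, under the standing assumption $n \geq k$ (the only regime that matters for the asymptotic bound on $\Pow_k(A)$ in the theorem that uses this lemma; if $n < k$ then ${n \choose k} = 0$ and the statement is only of interest eventually). This is a single cross-multiplication: $\frac{n-i}{k-i} \geq \frac{n}{k}$ is equivalent to $k(n-i) \geq n(k-i)$, i.e. to $ni \geq ki$, i.e. to $i(n-k) \geq 0$, which holds since $i \geq 0$ and $n \geq k$ (and $k - i \geq 1 > 0$, so the direction of the inequality is preserved).

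Finally I would multiply the $k$ inequalities together:
$${n \choose k} = \prod_{i=0}^{k-1}\frac{n-i}{k-i} \geq \prod_{i=0}^{k-1}\frac{n}{k} = \frac{n^k}{k^k},$$
which is exactly the claimed bound. There is no genuine obstacle here; the only points needing care are getting the product identity right and being explicit that the inequality is used (and intended) for $n \geq k$. The companion upper bound ${n \choose k} \leq \frac{n^k}{k!}$ quoted alongside it in the body is even more immediate, since ${n \choose k} = \frac{n(n-1)\cdots(n-k+1)}{k!} \leq \frac{n^k}{k!}$.
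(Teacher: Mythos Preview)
Your proposal is correct and follows essentially the same approach as the paper: both express ${n \choose k}$ as the product $\prod_{i=0}^{k-1}\frac{n-i}{k-i}$ and bound each factor below by $\frac{n}{k}$ (the paper verifies this by computing $\frac{n-x}{k-x}-\frac{n}{k}=\frac{x(n-k)}{k(k-x)}\ge 0$, you by cross-multiplying to $i(n-k)\ge 0$). Your explicit mention of the assumption $n\ge k$ is a small clarification the paper leaves implicit.
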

\begin{proof}
  For $k = 1$ we have ${n \choose k} = n = \frac{n^k}{k^k}$.
  For $k > 1$ and $0 < x < k \le n$ we have:
  $$
    \frac{n-x}{k-x} - \frac{n}{k}\ =\ \frac{xn-xk}{k(k-x)}\ \ge\ 0
  $$
  Giving us that $\frac{n-x}{k-x} \ge \frac{n}{k}$ and hence:
  $$\frac{n^k}{k^k} \le \frac{n}{k}\cdot\frac{n-1}{k-1}\cdot\ldots\cdot\frac{n-k+1}{1} = {n \choose k}$$ 
\end{proof}

\theorempowk*
\begin{proof}
  In the case when $k = 0$ $\Pow_k(A)$ is a finite lattice consisting of
  $\emptyset$ and $\top$ and so it is $\Theta(1)$.
  In the case when $k > 0$ we prove the upper and lower bound separately.
  For the upper bound, without loss of generality consider any
  $S = \{\ell_1, \ldots, \ell_n\}$ such that no $\ell_i = \emptyset$,
  it is the case that:
  $$
  C(S) \subseteq \{\top\} \cup \bigcup_{0\le i \le k} \{\ \bigsqcup S'\ |\ S' \subseteq S, |S'| = i \}
  $$
  In other words, each element of $C(S)$ is either $\top$, or a set of size at
  most $k$ that can be constructed by taking the suprenum of some $S' \subseteq S$ of size
  at most $k$.
  Consequently, we get the following inequality for the size of $C(S)$:
  $$
  |C(S)| \le 1 + \sum_{0\le i \le k} |\{\ S'\ |\ S' \subseteq S, |S'| = i\ \} \le 1 + \sum_{0\le i \le k} |S|^i
  $$
  Which in turn means that $\Pow_k(A)$ is $\Oh(n^k)$.
  For the lower bound, assume $A$ is non-finite and let $S_n = \{ a_1, \ldots, a_n \}$.
  Such that all $a_i$ are distinct, which gives us:
  $$
  C(S_n) \supseteq \bigcup_{0\le i \le k} \{\ \bigsqcup S'\ |\ S' \subseteq S_n, |S'| = i \}
  $$
  Giving us:
  $$
  |C(S_n)| \ge \sum_{0 \le i \le k} {n \choose i} \ge 1 + \sum_{1\le i \le k} \frac{n^i}{i^i} \ge \frac{n^k}{k^k}
  $$
  Which is sufficient to establish that $\Pow_k(A)$ is $\Omega(n^k)$, and so $\Pow_k(A)$ is $\Theta(n^k)$.
\end{proof}

\appendixsection{Fast and Slow}
\label{app:fastAndSlow} 
\lemmadownsetdecide*
\begin{proof}
  Left to right is trivial, $S\project\ell \subseteq S$ and so $\bigsqcup S\project\ell \in C(S)$.
  For the right to left direction, consider that if $\ell \in C(S)$ then there exists
  an $L \subseteq S$ such that $\bigsqcup L = \ell$.
  Note that $L \subseteq S\project\ell$ and that if $L' \subseteq S\project\ell$ then
  $\bigsqcup L \canFlowTo \bigsqcup (L \cup L') \canFlowTo \ell$ as $\ell$ is an upper-bound
  of $L'$ and $\bigsqcup$ is monotone with respect to $\cup$.
  Consequently, we have that $\ell = \bigsqcup L \canFlowTo \bigsqcup (L \cup (S\project\ell - L)) \canFlowTo \ell$
  which means that $\ell \canFlowTo S\project\ell \canFlowTo \ell$, giving us $S\project\ell = \ell$.
\end{proof}

\theoremtimemef*
\begin{proof}
  Firstly, Lemma \ref{lem:time-upper-bound} means it takes $t_f(n) + |f(\L(x))|n$ to enumerate $C(\L(x))$.
  Secondly, there are $\Oh(s_\L(n)t_p(n))$ runs of $p$, each of which produces outputs
  bounded in size by $\Oh(t_p(n))$ and for each such output Lemma \ref{lem:time-uparrow}
  gives the time taken to compute membership of the respective $\uparrow$-set as $\Oh(n)$.
  Thirdly, $|f(\L(x))|$ is $\Oh(s_\L(\L(x)))$ and so can be ignored.
  Finally, putting these bounds together gives the time taken to enumerate all
  elements $\ell \in C(\L(x))$, computing $p(x\project\ell)$ and filtering them
  by $\ell\uparrow C(\L(x))$.
\end{proof}

\appendixsection{Through the Looking Glass}
\label{app:lookingGlass}
\begin{Lemma}
  \label{lem:C-galois}
  If $L \sim_\ell L'$ then $C_{F \dashv G}(L) \sim_\ell C_{F \dashv G}(L')$.
\end{Lemma}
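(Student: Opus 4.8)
The plan is to reduce $C_{F\dashv G}$ to a closure-operator image of the ordinary closure set and then exploit extensivity. First I would establish the identity $C_{F\dashv G}(S)=(G\circ F)^*(C(S))$ for every finite $S\subseteq\L$. Unfolding, $C_{F\dashv G}(S)=G^*(C(F^*(S)))=G^*(\{\bigsqcup T\mid T\subseteq F^*(S)\})$. Every $T\subseteq F^*(S)$ equals $F^*(S'')$ for $S''=\{a\in S\mid F(a)\in T\}$, and conversely $F^*(S'')\subseteq F^*(S)$ whenever $S''\subseteq S$, so $\{\bigsqcup T\mid T\subseteq F^*(S)\}=\{\bigsqcup F^*(S'')\mid S''\subseteq S\}$. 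Moreover $\bigsqcup F^*(S'')=F(\bigsqcup S'')$ because the left adjoint $F$ of a Galois connection preserves finite joins (including $F(\bot)=\bot$); this is the single place the Galois hypothesis is genuinely used, and it follows from $F(\ell)\canFlowTo\jmath\iff\ell\canFlowTo G(\jmath)$ by quantifying over all $\jmath$. Hence $\{\bigsqcup T\mid T\subseteq F^*(S)\}=F^*(\{\bigsqcup S''\mid S''\subseteq S\})=F^*(C(S))$, so $C_{F\dashv G}(S)=G^*(F^*(C(S)))=(G\circ F)^*(C(S))$. Writing $k=G\circ F$ — a closure operator, by the discussion following Definition \ref{def:galois} — we thus have $C_{F\dashv G}(L)=k^*(C(L))$ and $C_{F\dashv G}(L')=k^*(C(L'))$.

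Next I would unfold the projection. Recalling $S\project\ell=\{\jmath\in S\mid\jmath\canFlowTo\ell\}$ and that $L\sim_\ell L'$ means $L\project\ell=L'\project\ell$, the goal becomes
\[
\{\,k(c)\mid c\in C(L),\ k(c)\canFlowTo\ell\,\}=\{\,k(c)\mid c\in C(L'),\ k(c)\canFlowTo\ell\,\}.
\]
Pick $d$ in the left-hand side, say $d=k(c)$ with $c\in C(L)$ and $k(c)\canFlowTo\ell$. Then $c=\bigsqcup L''$ for some $L''\subseteq L$, and by extensivity of $k$ we have $c\canFlowTo k(c)=d\canFlowTo\ell$, so every element of $L''$ is $\canFlowTo\ell$; that is, $L''\subseteq L\project\ell$. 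Since $L\project\ell=L'\project\ell$, we get $L''\subseteq L'$, hence $c=\bigsqcup L''\in C(L')$, and therefore $d=k(c)$ lies in the right-hand side. The reverse inclusion is symmetric, which finishes the proof.

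I do not expect a real obstacle here: once the reformulation $C_{F\dashv G}=k^*\circ C$ is in hand, the statement is a bookkeeping consequence of two observations — that the condition cutting the closure set down is ``$k(c)\canFlowTo\ell$'', which by extensivity forces $c\canFlowTo\ell$ and hence $L''\subseteq L\project\ell$, and that consequently only subsets of $L\project\ell$ (equivalently $L'\project\ell$) contribute. The one mild point worth double-checking is join-preservation of $F$, needed so that a join over $F^*(S'')$ collapses to $F$ of a join; everything else is routine, and no case analysis on the shape of $\L$ or $\L'$ is required.
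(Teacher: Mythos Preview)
Your argument is correct, and at its core it coincides with the paper's: both hinge on the observation that any witnessing subset $S\subseteq L$ for an element of $C_{F\dashv G}(L)\project\ell$ must already sit inside $L\project\ell$, by extensivity of $G\circ F$, and hence equally inside $L'$.

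The difference is in packaging. You first establish the identity $C_{F\dashv G}(S)=(G\circ F)^*(C(S))$ and then argue about $k^*(C(L))$; this detour costs you the join-preservation of $F$, which you correctly flag and justify. The paper instead works directly from the definition $\jmath=G(\bigsqcup F^*(S))$ for some $S\subseteq L$: using only monotonicity of $G$ and extensivity $\imath\canFlowTo G(F(\imath))$, it gets $\imath\canFlowTo\jmath\canFlowTo\ell$ for each $\imath\in S$, hence $S\subseteq L\project\ell=L'\project\ell$ and $\jmath\in C_{F\dashv G}(L')$. So the paper's route is marginally more economical in hypotheses (no appeal to $F$ preserving joins), while your reformulation $C_{F\dashv G}=k^*\circ C$ is a clean structural fact that the paper does exploit elsewhere (in the canonicity theorem), so it is not wasted effort.
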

\begin{proof}
  Consider $\jmath \canFlowTo \ell$ such that $\jmath \in C_{F \dashv G}(L)$.
  There exists an $S \subseteq L$ such that $\jmath = G(\bigsqcup F^*(S))$.
  Because $F \dashv G$, $G$ and $G\circ F$ are both monotone and so
  for any $\imath \in S$ it is the case that:
  $$\imath \canFlowTo G(F(\imath)) \canFlowTo G(\bigsqcup F^*(S)) = \jmath$$
  This means that $\jmath$ is an upper bound of $S \subseteq L$, which by
  $L \sim_\ell L'$ means that $S \subseteq L'$.
  In other words, $C_{F \dashv G}(L)\project\ell \subseteq C_{F \dashv G}(L')\project\ell$.
  By symmetry of $\sim_\ell$ we have that $C_{F \dashv G}(L')\project\ell \subseteq C_{F \dashv G}(L)\project\ell$
  and so $C_{F \dashv G}(L) \sim_\ell C_{F \dashv G}(L')$.
\end{proof}

\theoremgaloisnoninterference*
\begin{proof}
  Consider, $\ell$, $x \sim_\ell y$ and $a^\jmath$ such that $a^\jmath \in \MEF^{F \dashv G}[p](x)\project\ell$.
  The definitions of $@$ and $\MEF^{F \dashv G}$ give us that there is some
  $\iota \in C_{F \dashv G}(\L(x))$ such that $\iota \canFlowTo \jmath$ and $a^\jmath \in p(x\project\iota)@(\iota\uparrow C_{F \dashv G}(\L(x)))$.
  However, because $\iota \canFlowTo \jmath \canFlowTo \ell$ we also have that $\iota \in C_{F \dashv G}(\L(y))$,
  by Lemma \ref{lem:C-galois}.
  Furthermore, if $\jmath \not\in \iota\uparrow C_{F \dashv G}(\L(y))$ then there is some other $\iota'\in C_{F \dashv G}(\L(y))$
  such that $\iota' \canFlowTo \jmath$ and $\iota' \not\in C_{F \dashv G}(\L(y))$, but this is impossible because $x \sim_\ell y$
  and so $a^\jmath \in p(y\project\iota)@(\iota\uparrow C_{F \dashv G}(\L(y)))$.
  By definition of $\MEF^{F \dashv G}$ this gives us $a^\jmath \in \MEF^{F \dashv G}[p](y)\project\ell$
  and so $\MEF^{F \dashv G}[p](x)\project\ell \subseteq \MEF^{F \dashv G}[p](y)\project\ell$.
  Symmetry of $\sim_\ell$ means that
  $\MEF^{F \dashv G}[p](y)\project\ell \subseteq \MEF^{F \dashv G}[p](x)\project\ell$
  and so $\MEF^{F \dashv G}[p]$ is noninterfering.
\end{proof}

\begin{Lemma}
  \label{lem:C-galois-covers-C}
  Take a Galois connection $F \dashv G$ between $\L$ and $\L'$ and a set $L \subseteq \L$.
  For any $\jmath \in (G \circ F)^*(\L)$ there is some $\ell \in C_{F\dashv G}(L)$ such that
  $\jmath \in \ell \uparrow C_{F\dashv G}(L)$.
\end{Lemma}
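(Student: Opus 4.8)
The strategy is to write the witness $\ell$ down explicitly: it will be the join of all elements of $C_{F\dashv G}(L)$ that lie below $\jmath$. The subtlety — and the reason the statement is not immediate — is that $G$ need not preserve joins, so it is not obvious that this join lands back inside $C_{F\dashv G}(L)$; the bulk of the argument is showing that it does.

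The first step, which I expect to be the crux, is to simplify $C_{F\dashv G}(L)$. Since $F$ is the left adjoint of a Galois connection it preserves finite joins (including $\bot$), and a short calculation gives $C(F^*(L)) = F^*(C(L))$: every finite $S' \subseteq F^*(L)$ is $F^*(L')$ for $L' = \{\ell' \in L \mid F(\ell') \in S'\}$, and $F(\bigsqcup L') = \bigsqcup F^*(L')$. Applying $G^*$ to both sides yields
$$C_{F\dashv G}(L) \;=\; (G\circ F)^*(C(L)) \;=\; \{\, k(\textstyle\bigsqcup L') \mid L' \subseteq L \,\},$$
where $k = G \circ F$ is the closure operator associated with the connection (recalled just after Definition \ref{def:galois}). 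This collapses the two-layer construction into a single closure set pushed through $k$.

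Next I pick the witness. Write $\jmath = k(m)$ for some $m \in \L$, so $k(\jmath) = \jmath$ by idempotence; set $L\project\jmath = \{\ell' \in L \mid \ell' \canFlowTo \jmath\}$ and $\ell \defAs k(\bigsqcup(L\project\jmath))$. Then $\ell \in C_{F\dashv G}(L)$ by the displayed characterisation (take $L' = L\project\jmath$), and $\ell \canFlowTo \jmath$ since $\bigsqcup(L\project\jmath) \canFlowTo \jmath$ and $k$ is monotone with $k(\jmath) = \jmath$. For the up-set condition I must show that every $\iota \in C_{F\dashv G}(L)$ with $\iota \canFlowTo \jmath$ satisfies $\iota \canFlowTo \ell$: write $\iota = k(\bigsqcup L')$ with $L' \subseteq L$ and use the closure-operator identity $a \canFlowTo k(b) \iff k(a) \canFlowTo k(b)$; then $\iota = k(\bigsqcup L') \canFlowTo \jmath = k(m)$ forces $\bigsqcup L' \canFlowTo \jmath$, hence $L' \subseteq L\project\jmath$, and monotonicity of $\bigsqcup$ and $k$ gives $\iota \canFlowTo k(\bigsqcup(L\project\jmath)) = \ell$. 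Together with $\ell \canFlowTo \jmath$ and $\ell \in C_{F\dashv G}(L)$, this is exactly $\jmath \in \ell \uparrow C_{F\dashv G}(L)$.

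The only genuinely load-bearing observation is the identity $C_{F\dashv G}(L) = k^*(C(L))$ (equivalently, that $F$ commutes with $C$). Once that is available, $k(\bigsqcup(L\project\jmath))$ is manifestly the largest element of $C_{F\dashv G}(L)$ lying below $\jmath$, and the up-set membership is routine monotonicity bookkeeping.
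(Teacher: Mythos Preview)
Your proof is correct and follows essentially the same route as the paper: both pick the witness $\ell = G(F(\bigsqcup(L\project\jmath)))$, use $G(F(\jmath)) = \jmath$ to get $\ell \canFlowTo \jmath$, and verify the up-set condition by observing that any $\iota = G(F(\bigsqcup L')) \canFlowTo \jmath$ forces $\bigsqcup L' \canFlowTo \jmath$, hence $L' \subseteq L\project\jmath$. The only presentational difference is that you isolate the identity $C_{F\dashv G}(L) = (G\circ F)^*(C(L))$ up front, whereas the paper uses $F$'s join-preservation inline each time it rewrites $G(\bigsqcup F^*(L'))$ as $G(F(\bigsqcup L'))$.
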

\begin{proof}
  Let $\ell = G(\bigsqcup F^*(L\project\jmath))$, $\ell$ is in $C_{F\dashv G}(L)$.
  We have that $\ell = G(F(\bigsqcup(L\project\jmath)))$ and $\bigsqcup(L\project\jmath) \canFlowTo \jmath$
  and so $\ell \canFlowTo G(F(\jmath)) = \jmath$.
  Additionally, if $\ell' \in C_{F\dashv G}(L)$ and $\ell' \canFlowTo \jmath$
  then for some $L'$ it is the case that $G(\bigsqcup F^*(L')) = G(F(\bigsqcup L')) \canFlowTo \jmath$
  and consequently $\bigsqcup L' \canFlowTo G(F(\bigsqcup L')) \canFlowTo \jmath$ and so
  $\bigsqcup L' \canFlowTo \bigsqcup (L\project\jmath)$ and so $\ell' \canFlowTo \ell$.
  Giving us $\jmath \in \ell \uparrow C_{F\dashv G}(L)$.
\end{proof}

\theoremgaloistransparent*
\begin{proof}
  \begin{align*}
       &\MEF^{F\dashv G}[p](x)@\jmath\\
       &\ \ \ \ (\text{Definition})\\
    =\ &\bigcup\{\ p(x\project\ell)@(\ell\uparrow C_{F\dashv G}(\L(x)))\ |\ \ell \in C_{F\dashv G}(\L(x))\ \}@\jmath\\
       &\ \ \ \ (\text{(1) and $x\project\ell = x\project\jmath$ for $\jmath \in \ell \uparrow C_{F\dashv G}(\L(x))$})\\
    =\ &\bigcup\{\ p(x)@(\ell\uparrow C_{F\dashv G}(\L(x)))\ |\ \ell \in C_{F\dashv G}(\L(x))\ \}@\jmath\\
       &\ \ \ \ (\text{(2), Lemma \ref{lem:C-galois-covers-C}, and Definition of $\uparrow$})\\
    =\ &p(x)@\jmath
  \end{align*}
\end{proof}

\theoremgaloistime*
\begin{proof}
  Call the condition in the Lemma statement $P(\jmath, \ell)$.
  To see that $P(\jmath, \ell) \implies \jmath \in \ell \uparrow C_{F\dashv G}(L)$, consider
  that if $P(\jmath, \ell)$ and some $\iota = G(\bigsqcup L') \in C_{F\dashv G}(L)$ is
  such that $\iota \canFlowTo \jmath$ then either $L' = \emptyset$
  in which case $\iota \canFlowTo G(F(\ell))$ by monotonicity of $G$, or
  all $\ell'$ such that $F(\ell') \in L'$ are such that $G(F(\ell')) \canFlowTo \iota$ by monotonicity of $G$
  giving $G(F(\ell')) \canFlowTo \jmath$ and so $G(F(\ell'))) \canFlowTo G(F(\ell))$ by $P(\jmath, \ell)$.
  However, $\iota' = F(\ell')$ for some $\ell'$ and so we really have $G(F(\ell')) \canFlowTo G(F(\ell))$
  for all $\ell'$ such that $F(\ell') \in L'$.
  Consequently, $\ell' \canFlowTo G(F(\ell))$ (by $G\circ F$ being a closure operator)
  and so $F(\ell') \canFlowTo F(\ell)$ (as $F \circ G \circ F = F$ for any
  Galois connection).
  Consequently, $\bigsqcup L' \canFlowTo F(\ell)$ which by monotonicity means that
  $\iota = G(\bigsqcup L') \canFlowTo G(F(\ell))$.
  For the other direction, if $\jmath \in G(F(\ell)) \uparrow C_{F\dashv G}(L)$ then clearly
  $G(F(\ell)) \canFlowTo \jmath$ and if $\iota \in F^*(L)$ and $G(\iota) \canFlowTo \jmath$
  then $G(\iota) \canFlowTo G(F(\ell))$ by the
  definition of $\jmath \in G(F(\ell)) \uparrow C_{F\dashv G}(L)$.
\end{proof}

\theoremtimemefgalois*
\begin{proof}
  Firstly, Lemma \ref{lem:time-upper-bound} means it takes $t_f(n) + |f(\L(x))|n$ to enumerate $G^*(C(F^*(\L(x))))$.
  Secondly, there are $\Oh(s_\L(n)t_p(n))$ runs of $p$, each of which produces outputs
  bounded in size by $\Oh(t_p(n))$ and for each such output Lemma \ref{lem:uptime-galois}
  gives the time taken to compute membership of the respective $\uparrow$-set as $\Oh(n)$.
  Thirdly, $|f(\L(x))|$ is $\Oh(s_\L(\L(x)))$ and so can be ignored.
  Finally, putting these bounds together gives the time taken to enumerate all
  elements $\ell \in C(\L(x))$, computing $p(x\project\ell)$ and filtering them
  by $\ell\uparrow C(\L(x))$.
\end{proof}

\theoremkpclosure*
\begin{proof}
  We have three proof obligations:
  \begin{alignat*}{2}
    &\text{(1) Extensivity:}\;  &&\ell \canFlowTo k_p(\ell)\\
    &\text{(2) Monotonicity:}\; &&\ell \canFlowTo \jmath \implies k_p(\ell) \canFlowTo k_p(\jmath)\\
    &\text{(3) Idempotence:}\;  &&k_p(\ell) = k_p(k_p(\ell))
  \end{alignat*}
  First we let $S_p(\ell) = \{\ \jmath\ |\ \exists x.\ \jmath \in \L(p(x)) \wedge \ell \canFlowTo \jmath\ \}$
  and note that $k_p(\ell) = \bigsqcap S_p(\ell)$.
  Proof obligations in order:
  \begin{enumerate}
    \item $\ell$ is a lower bound of $S_p(\ell)$ and so
      $\ell \canFlowTo \bigsqcap S_p(\ell) = k_p(\ell)$.
    \item If $\ell \canFlowTo \jmath$ then $S_p(\ell) \supseteq S_p(\jmath)$
      and so $k_p(\ell) \canFlowTo k_p(\jmath)$.
    \item $S_p(k_p(\ell)) = S_p(\ell)$ and so $k_p(\ell) = k_p(k_p(\ell))$.
  \end{enumerate}
\end{proof}

\lemmacanonicity*
\begin{proof}
  If $k_p(\ell) \not= k_p(\jmath)$ then without loss of generality we can assume that
  there is some $\iota \in \L(p(x))$ for some $x$ such that $\ell \canFlowTo \iota$
  but $\jmath \not\canFlowTo \iota$.
  Consequently, $\ell \lub \jmath \not\canFlowTo \iota$.
  Assume for a contradiction that $k(\ell) = k(\jmath)$.
  By monotonicity of $k$ we know that $\ell \canFlowTo k(\ell)$ and $\jmath \canFlowTo k(\ell)$
  and so $\ell \lub \jmath \canFlowTo k(\ell)$.
  However, monotonicty of $k$ also gives us that because $\ell \canFlowTo \iota$ we have that $k(\ell) \canFlowTo k(\iota)$.
  But $\iota \in \L(p(x))$ for some $x$ so $k(\iota) = \iota$ and so $k(\ell) \canFlowTo \iota$.
  Putting everything together gives us $\jmath \canFlowTo \ell \lub \jmath \canFlowTo k(\ell) \canFlowTo \iota$
  which contradicts $\jmath \not\canFlowTo \iota$, so $k(\ell) \not= k(\jmath)$.
\end{proof}

\theoremcanonicitytwo*
\begin{proof}
  We have that $\L(p(x)) \subseteq (G\circ F)^*(\L)$ for all $x$ as $F \vdash G$
  is transparent.
  By Lemma \ref{lem:canonicity_kp} we have that:
  $$k_p(\ell) \not= k_p(\jmath) \implies G(F(\ell)) \not= G(F(\jmath))$$
  Therefore, any two elements in $C(L)$ that are distinguished by $k_p$ are distinguished by $G\circ F$.
  Consequently, as:
  \begin{align*}
       &C_{F\vdash G}(L)\ =\ \{ G(\bigsqcup F^*(S) | S \subseteq L \}\ =\\
       &\{ G(F(\bigsqcup S)) | S \subseteq L \}\ =\ (G\circ F)^*(C(L))
  \end{align*}
  this means that $C_{F\vdash G}(L)$ has at
  least as many elements as $k_p^*(C(L))$.
\end{proof}


\end{document}